\newtheorem{theorem}{Theorem}
\newtheorem{problem}{Problem}
\newtheorem{lemma}{Lemma}
\newtheorem{definition}{Definition}
\newtheorem{example}{Example}
\renewcommand{\alph}{\mathrm{alph}}
\newcommand{\alphabet}{\mathrm{alph}}
\newcommand{\dfawtl}{{\textrm{TFA}}}
\newcommand{\dfa}{{\textrm{DFA}}}
\newcommand{\nfa}{{\textrm{NFA}}}
\newcommand{\HPP}{{\textrm{HPP}}}
\newcommand{\ETH}{{\textrm{ETH}}}
\newcommand{\REG}{{\textrm{REG}}}
\newcommand{\CF}{{\textrm{CF}}}
\newcommand{\CFL}{{\textrm{CFL}}}
\newcommand{\CFG}{{\textrm{CFG}}}
\newcommand{\CS}{{\textrm{CS}}}
\newcommand{\CSL}{{\textrm{CSL}}}
\newcommand{\CSG}{{\textrm{CSG}}}
\newcommand{\CNF}{{\textrm{CNF}}}
\newcommand{\bigo}{{\mathcal O}}
\newcommand{\FPT}{{\textrm{FPT}}}
\newcommand{\SAS}{{\textrm{SAS}}}
\newcommand{\PCP}{{\textrm{PCP}}}
\newcommand{\SLP}{{\textrm{SLP}}}
\def\N{\mathbb{N}}
\newcommand{\Down}[1]{#1\hspace{-1mm}\downarrow}
\newcommand{\Up}[1]{#1\hspace{-1mm}\uparrow}
\newcommand{\len}[1]{|#1|}
\newcommand{\iotaA}{\iota_\forall}
\newcommand{\iotaE}{\iota_\exists}
\newcommand{\p}{\prime}
\newcommand{\poly}{\mathrm{poly}}
\title{Subsequence Matching and Analysis Problems for Formal Languages} %TODO Please add
\author{%
Szil\'{a}rd  Zsolt Fazekas%
\thanks{Akita University, Japan,
 \texttt{szilard.fazekas@ie.akita-u.ac.jp}, 
 {Supported by the JSPS Kakenhi Grant Number 23K10976.}}
\and
Tore Ko{\ss}
\thanks{University of G\"ottingen, Germany,
 \texttt{tore.koss@uni-goettingen.de}}
\and
Florin Manea
\thanks{University of G\"ottingen, Germany,
 \texttt{florin.manea@cs.uni-goettingen.org}, 
 {Supported by the German Research Foundation (Deutsche Forschungsgemeinschaft, DFG) in the framework of the Heisenberg Programme project number 466789228.}}
\and
Robert {Merca\c s}
\thanks{Loughborough University, UK,
\texttt{r.g.mercas@lboro.ac.uk}, 
{Supported by a Return Fellowship from the Alexander von Humboldt Foundation for a research stay at University of Göttingen, Germany.}}
\and
Timo Specht
\thanks{University of G\"ottingen, Germany,
\texttt{timo.specht@stud.uni-goettingen.de}}
}
\date{}
\begin{document}

\maketitle

%TODO mandatory: add short abstract of the document
\begin{abstract}
In this paper, we study a series of algorithmic problems related to the subsequences occurring in the strings of a given language, under the assumption that this language is succinctly represented by a grammar generating it, or an automaton accepting it. In particular, we focus on the following problems: Given a string $w$ and a language $L$, does there exist a word of $L$ which has $w$ as subsequence? Do all words of $L$ have $w$ as a subsequence? Given an integer $k$ alongside $L$, does there exist a word of $L$ which has all strings of length $k$, over the alphabet of $L$, as subsequences? Do all words of $L$ have all strings of length $k$ as subsequences? For the last two problems, efficient algorithms were already presented in [Adamson et al., ISAAC 2023] for the case when $L$ is a regular language, and efficient solutions can be easily obtained for the first two problems. We extend that work as follows: we give sufficient conditions on the class of input-languages, under which these problems are decidable; we provide efficient algorithms for all these problems in the case when the input language is context-free; we show that all problems are undecidable for context-sensitive languages. Finally, we provide a series of initial results related to a class of languages that strictly includes the regular languages and is strictly included in the class of context-sensitive languages, but is incomparable to the of class context-free languages; these results deviate significantly from those reported for language-classes from the Chomsky hierarchy.\\

\noindent{\bf Keywords:} Stringology, String Combinatorics, Subsequence, Formal Languages, Context-Free Languages, Context-Sensitive Languages
\end{abstract}

 %%%%%%%%%%%%%%%%%%%%%%% 
 %%%%%%%%%%%%%%%%%%%%%%% 
 \section{Introduction}\label{sec:intro}
A string $v$ is a subsequence of a string $w$, denoted $v\leq w$ in the following, if there exist (possibly empty) strings  $x_1, \ldots, x_{\ell+1}$ and $v_1, \ldots,  v_\ell$ such that $v = v_1 \cdots v_\ell$ and $w = x_1 v_1 \cdots x_\ell v_\ell x_{\ell+1}$. In other words, $v$ can be obtained from $w$ by removing some of its letters. 

The concept of subsequence appears and plays important roles in many different areas of theoretical computer science. Prime examples are the areas of  combinatorics on words, formal languages, automata theory, and logics, where subsequences are studied in connection to piecewise testable languages~\cite{simonPhD,Simon72,KarandikarKS15,journals/lmcs/KarandikarS19}, in connection to subword-order and downward-closures~\cite{HalfonSZ17,KuskeZ19,Kuske20,Zetzsche16,Zetzsche18,AnandZ23}), in connection to binomial equivalence, binomial complexity, or to subword histories ~\cite{RigoS15,FreydenbergerGK15,LeroyRS17a,Rigo19,Seki12,Mat04,Salomaa05}. Subsequences are important objects of study also in the area of algorithm-design and complexity; to this end, we mention some classical algorithmic problems such as the computation of {\em longest common subsequences} or of the {\em shortest common supersequences}~\cite{chvatal,Hirschberg77,HuntS77,Maier:1978,MasekP80,NakatsuKY82,DBLP:journals/tcs/Baeza-Yates91,BergrothHR00}, the testing of the Simon congruence of strings and the computation of the arch factorisation and universality of strings~\cite{TCS::Hebrard1991,garelCPM,SimonWords,DBLP:conf/wia/Tronicek02,CrochemoreMT03,dlt2020,day2021edit,KufMFCS,GawrychowskiEtAl2021,KoscheKMS21,FleischmannKKMNSW23}; see also~\cite{SurveyNCMA} for a survey on combinatorial pattern matching problems related to subsequences. Moreover, these algorithmic problems and other closely related ones have recently regained interest in the context of fine-grained complexity~\cite{DBLP:conf/fsttcs/BringmannC18,BringmannK18,AbboudEtAl2014,AbboudEtAl2015,AbboudRubinstein2018}. Nevertheless, subsequences appear also in more applied settings: for modelling concurrency~\cite{Riddle1979a,Shaw1978,BussSoltys2014}, in database theory (especially \emph{event stream processing}~\cite{ArtikisEtAl2017,GiatrakosEtAl2020,ZhangEtAl2014}), in data mining~\cite{LiW08,LiYWL12}, or in problems related to bioinformatics~\cite{BilleEtAl2012}. Interestingly, a new setting, motivated by database theory \cite{Kleest-Meissner22,Kleest-Meissner23,FrochauxK23}, considers subsequences of strings, where the substrings occurring between the positions where the letters of the subsequence are embedded are constrained by regular or length constraints; a series of algorithmic results (both for upper and lower bounds) on matching and analysis problems for the sets of such subsequences occurring in 
a string were obtained \cite{DayKMS22,KoscheKMP22,AdamsonKKMS23,ManeaRS24}. 

The focus of this paper is the study of the subsequences of strings of a formal language, the main idea behind it being to extend the fundamental problems related to matching subsequences in a string and to the analysis of the sets of subsequences of a single string to the case of sets of strings. To this end, grammars (or automata) are succinct representations of (finite or infinite) sets of strings they generate (respectively, accept), so we are interested in matching and analysis problems related to the set of subsequences of the strings of a language, given by the grammar generating it (respectively, the automaton accepting it). This research direction is, clearly, not new. To begin with, we recall the famous result of Higman~\cite{higman1952ordering} which states that the downward closure of every language (i.e., the set of all subsequences of the strings of the respective language) is regular. Clearly, it is not always possible to compute an automaton accepting the downward closure of a given language, but gaining a better understanding when it is computable is an important area of research, as the set of subsequences of a language plays meaningful roles in practical applications (e.g., abstractions of complex systems, see~\cite{Zetzsche16,Zetzsche18,AnandZ23} and the references therein). Computing the downward closure of a language is a general (although, often inefficient) way to solve subsequence-matching problems for languages; for instance, we can immediately check, using a finite automaton for the downward closure, if a string occurs as subsequence of a string of the respective language. However, it is often the case that more complex analysis problems regarding the subsequences occurring in the strings of a language cannot be solved efficiently (or, sometimes, at all) using the downward closure; such a problem is to check if a given string occurs as subsequence in all the strings of a language (chosen from a complex enough class, such as the class of context-free languages). 

As a direct predecessor of this paper, motivated by similar questions, \cite{bib:UniReg}~approached algorithmic matching and analysis problems related to the universality of regular languages (for short, {\REG}). More precisely, a string over $\Sigma$ is called $k$-universal if its set of subsequences includes all strings of length $k$ over $\Sigma$; the study of these universal strings was the focus of many recent works~\cite{dlt2020,day2021edit,SchnoebelenV23} and the motivation for studying  universality properties in the context of subsequences is discussed in detail in, e.g., \cite{day2021edit,bib:UniReg}. The main problems addressed in~\cite{bib:UniReg} are the following: for $L\in{\REG}$, over the alphabet $\Sigma$, and a number $k$, decide if there exists a $k$-universal string in $L$ (respectively, if all strings of $L$ are $k$-universal). The authors of~\cite{bib:UniReg} discussed efficient algorithms solving these problems and complexity lower bounds. In this paper, we extend the work of~\cite{bib:UniReg} firstly by proposing a more structured approach for the algorithmic study of the subsequences occurring in strings of formal languages and secondly by considering more general classes of languages, both from the Chomsky hierarchy (such as the class of context-free languages or that of context-sensitive languages) and non-classical (the class of languages accepted by deterministic finite automata with translucent letters).

Our work on subsequence-matching and analysis problems in languages defined by context-free grammars (for short, {\CFG}) also extends a series of results related to matching subsequences in strings given as a straight line program (for short, {\SLP}; a {\CFG} generating a single string), or checking whether a string given as an {\SLP} is $k$-universal, for some given $k$, e.g., see~\cite{Lohrey12,SchnoebelenV23}. In our paper, we consider the case when the input context-free languages and the {\CFG}s generating them are unrestricted. 

\subparagraph*{The approached problems and an overview of our results:}\label{sec:Problems} As mentioned above, we propose a more structured approach for matching- and analysis-problems related to subsequences of the strings of a formal language. More precisely, we define and investigate the following five problems.

\begin{problem}[$\exists$-Subsequence]\label{prob:wtlsubseq}
Given a language $L$ by a machine (grammar) $M$ accepting (respectively, generating) it and a string $w$, is there a string $v\in L$ such that $w\leq v$?      
\end{problem}

\begin{problem}[$\forall$-Subsequence]\label{prob:wtl_all_subseq}
Given a language $L$ by a machine (grammar) $M$ accepting (respectively, generating) it and a string $w$, do we have for all strings $v\in L$ that $w\leq v$? 
\end{problem}

\begin{problem}[$\exists$-$k$-universal]\label{prob:exist_universal_largerthan_k}
Given a language $L$ by a machine (grammar) $M$ accepting (respectively, generating) it and integer $k$, check if there is a $k$-universal string in $ L$.
\end{problem}

\begin{problem}[$\forall$-$k$-universal]\label{prob:exist_nonuniversal}
Given a language $L$ by a machine (grammar) $M$ accepting (respectively, generating) it and integer $k$, check if all strings of $L(M)$ are $k$-universal.
\end{problem}

Alternatively, strictly from the point of view of designing an algorithmic solution, the problem above can be approached via its complement: that is, deciding if there exists at least one string in $L(M)$ which is not $k$-universal.

\begin{problem}[$\infty$-universal]\label{prob:universal_forall_m}
Given a language $L$ by a machine (grammar) $M$ accepting (respectively, generating it) decide if there exist $m$-universal strings in $ L$, for all positive integers $m$.
\end{problem}

To give some intuition on our terminology, Problems~\ref{prob:wtlsubseq} and~\ref{prob:exist_universal_largerthan_k} can be seen as {\em matching problems} (find a string which contains a certain subsequence or set of subsequences), while the other three problems are {\em analysis problems} (decide properties concerning multiple strings of the language). 

Going a bit more into details, in the main part of this paper, we investigate these problems for the case when the language $L$ is chosen from the class of context-free languages (for short, {\CFL}; given by {\CFG}s in Chomsky normal form), or from the class of context-sensitive languages (for short, {\CSL}; given by context-sensitive grammars), or from the class of languages accepted by 
% {\dfawtl}s 
deterministic finite automata with translucent letters
(given by an automaton of the respective kind). The choice of presentation of the languages from given classes, unsurprisingly, makes a big difference w.r.t.\ hardness. For instance, certain singleton languages can be encoded by {\SLP}s (essentially {\CFG}s) exponentially more succinctly than by classical {\dfa}, which of course introduces significant extra computation into solving subsequence-related queries~\cite{Lohrey12,SchnoebelenV23}. But, before approaching these classes of languages, we provide a series of general decidability results on these five problems, for which the choice of grammar or automaton as the way of specifying the input language $L$ is not consequential. 

For short, our results are the following.
% (our main 12 pages have only sketches of the proofs, or no proof at all -- all full proofs are found in the Appendix). 
We first give (in Section~\ref{sec:meta_theorems}) a series of simple sufficient conditions on a class ${\mathcal C}$ of languages (related to the computation of downward closures as well as to decidability properties for the respective class) which immediately lead to decision procedures  for the considered problems; however, these procedures are inherently inefficient, even for classes such as {\CFL}. In this context, generalizing the work of~\cite{bib:UniReg}, we approach (in subsequent sections of this paper) each of the above problems for ${\mathcal C}$ being the class {\CFL} and, respectively, the class {\CSL}. While all the problems are undecidable for {\CSL}s, we present efficient algorithms for the case of {\CFL}s. In particular, the results obtained for {\CFL} are similar to the corresponding results obtained for {\REG} (i.e., if a problem was solvable in polynomial or {\FPT}-time for {\REG}, we obtain an algorithm from the same class for {\CFL}). In that regard, it seemed natural to search for a class of languages which does not exhibit this behaviour, while retaining the decidability of (at least some of) these problems. To this end, we identify the class of languages accepted by deterministic finite automata with translucent letters (a class of automata which does not process the input in a sequential fashion) and show (in the final section of this paper) a series of initial promising results related to them.

 %%%%%%%%%%%%%%%%%%%%%%% 
 %%%%%%%%%%%%%%%%%%%%%%% 
 \section{Preliminaries} \label{sec:prels}

Let  $\N = \{1,2,\ldots\}$ denote the natural numbers and set $\N_0 = \N
\cup \{0\}$ as well as $[n]=\{1,\ldots,n\}$ and $[i,n]=\{i, i+1, \ldots, n\}$, for all $i,n\in\N_0$ with $i \leq n$.

An \emph{alphabet} $\Sigma=\{1,2,\ldots,\sigma\}$ is a finite set of symbols, called
\emph{letters}. A \emph{string} $w$ is a finite concatenation of letters from a given alphabet with the number of these letters giving its \emph{length} $|w|$. The string with no letters is the \emph{empty string} $\varepsilon$ of length $0$. The set of all finite strings over the alphabet $\Sigma$, denoted by $\Sigma^*$, is the free monoid generated by $\Sigma$ with concatenation as operation. A subset $L\in \Sigma^*$ is called a \emph{(formal) language}. Let $\Sigma^n$ denote all strings in $\Sigma^*$ exactly of length $n\in\N_0$.

For $1\leq i\leq j\leq|w|$ denote the $i^\text{th}$ letter of $w$ by $w[i]$ and the \emph{factor} of $w$ starting at position $i$ and ending at position $j$ as $w[i,j]=w[i]\cdots w[j]$. If $i=1$ the factor is also called a prefix, while if $j=|w|$ it is called a suffix of $w$. 
For each $a \in \Sigma$ set $|w|_{a} = |\{i \in [|w|] \mid w[i]=a \}|$.

Let $\alph(w)$ denote the set of all letters of $\Sigma$ occurring in $w$. 
A length $n$ string $u \in \Sigma^\ast$ is called \emph{subsequence} of $w$, denoted $u \leq w$, if $w = w_1 u[1] w_2 u[2] \cdots w_n u[n] w_{n+1}$, for some $w_1, \ldots, w_{n+1} \in \Sigma^*$. 
For $k\in\N_0$, a string $w \in \Sigma^*$ is called \emph{$k$-universal} (w.r.t.\ $\Sigma$) if	every $u\in\Sigma^k$ is a subsequence of $w$. The \emph{universality-index} $\iota(w)$ is the largest $k$ such that $w$ is $k$-universal. 

\begin{definition}\label{def:archfak}
The arch factorization of a string $w \in \Sigma^*$ is given by $w=ar_1(w)\cdots ar_{\iota(w)}(w)r(w)$ with $\iota(ar_i(w))=1$ and $ar_i(w)[|ar_i(w)|]$ $\notin \alph(ar_i(w)[1, |ar_i(w)|-1])$, for all $i \in [1,\iota(w)]$. Furthermore, $\alph(r(w)) \subsetneq \Sigma$ applies. The strings $ar_i(w)$ are called {\em arches} and $r(w)$ is called {\em the rest} of $w$. 
% $w=ar_1(w)\cdots ar_{\iota(w)}(w)r(w)$ with $\iota(ar_i(w))=1$ and $\alph(ar_i(w)[1, |ar_i(w)|-1]), \alph(r(w)) \subsetneq \Sigma$, for all $i \in [1,\iota(w)]$. The strings $ar_i(w)$ are called {\em arches} and $r(w)$ is called {\em the rest} of $w$.
The modus $m(w)$ of $w$ is defined as the concatenation of the last letters of each arch:
$m(w)=ar_1(w)[|ar_1(w)|]\cdots ar_{\iota(w)}(w)[|ar_{\iota(w)}(w)|]$.
\end{definition}
As an example, in the arch factorisation $w=(bca) {\cdot} (accab) {\cdot} (cab) {\cdot} b$ of $w\in\{a,b,c\}^*$,  the parentheses denote the three arches and the rest $r(w)=b$.
Further, we have $\iota(w)=3$ and $m(bcaaccabcabb) = abb$. For more details about the arch factorization and the universality index see~\cite{TCS::Hebrard1991,dlt2020}.

A string $v$ is an absent subsequence of another string $w$ if $v$ is not a subsequence of $w$~\cite{KoscheKMS21,KoscheKMS22}. A shortest absent subsequence  of a string $w$ (for short, $\SAS(w)$) is an absent subsequence of $w$ of minimal length, i.e., all subsequences of shorter length are found in $w$. We note that, for a given word $w$ and some letter $a\notin \alph(r(w))$, an \SAS\ of $w$ is $m(w)a$~\cite{TCS::Hebrard1991,KoscheKMS21}. An immediate observation is that any string $v$ which is an $\SAS(w)$ satisfies $\len v = \iota(w) + 1$.

In this paper, we work with absent subsequences of a word $w$ which are the shortest among all absent subsequences of $w$ and, additionally, start with $a$ and end with $b$, for some $a\in \Sigma\cup\{\varepsilon\}$ and $b\in \Sigma$. Such a shortest string which starts with $a$ and ends with $b$ and is an absent subsequence of $w$ is denoted $\SAS_{a,b}(w)$. For instance, an $\SAS_{\varepsilon,b}(w)$, for some $b\notin \alph(r(w))$, is an $\SAS(w)$, such that its starting letter is not fixed, but the ending one must be $b$.

\begin{definition}
A grammar over an alphabet $\Sigma$ is a $4$-tuple $G = (V,\Sigma, P, S)$ consisting of:
a set $V = \{A,B,C, \dots\}$ of non-terminal symbols, a set $\Sigma = \{a,b,c, \dots\}$ of terminal symbols with $V \cap \Sigma = \emptyset$, a non-empty set $P \subseteq (V\cup \Sigma)^+ \times (V\cup \Sigma)^*$ of productions and a start symbol $S\in V$.
\end{definition}
We represent productions $(p,q)\in P$ by $p \rightarrow q$. In $G$, $u = xpz$ with $x,z \in (V \cup \Sigma)^*$ is directly derivable to $v=xqz$ if a production $(p,q)\in P$ exists; in this case, we write $u \Rightarrow_G v$; the subscript $G$ is omitted when no confusion arises. More generally, for $m\in\N$, we say that $u$ is derivable to $v$ in $m$ steps (denoted $w \Rightarrow_G^m v$) if there exist strings $w_0, w_1, \dots, w_m \in (V \cup \Sigma)^*$ (called sentential forms) with 
$u = w_0 \Rightarrow_G w_1 \land w_1 \Rightarrow_G w_2 \land \dots \land w_{m-1} \Rightarrow_G w_m = v.$
If $u$ is derivable to $v$ in $m$ steps, for some $m\in\N_0$, we write $u \Rightarrow_G^* v$, i.e., $\Rightarrow_G^*$ is the reflexive and transitive closure of $\Rightarrow_G$. With $L(G) = \{w \in \Sigma^* \mid S \Rightarrow_G^* w\}$ we denote the language generated by $G$. We call a derivation a sequence $S \Rightarrow \dots \Rightarrow w \in L(G)$. The number of steps used in the derivation is the derivation's length. 

\begin{definition}
A grammar $G = (V,\Sigma,P,S)$ with $P \subseteq V \times (V \cup \Sigma)^+$ is a \emph{context-free} grammar (for short, {\CFG}). A language $L$ is a context-free language (for short, \CFL) if and only if there is a {\CFG} $G$ with $L(G) = L$.
A grammar $G = (V,\Sigma,P,S)$, where for all $(p,q) \in P$ we have $|p| \leq |q|$, is a \emph{context-sensitive} grammar (for short, {\CSG}). A language $L$ is a context-sensitive language (for short, \CSL) if and only if there is a {\CSG} $G$ with $L(G) = L$.
\end{definition}

The definitions above tacitly assume that {\CFL}s and {\CSL}s do not contain the empty string $\varepsilon$. Indeed, for the problems considered here, we can make this assumption. Whether $\varepsilon \in L$ or not plays no role in deciding Problems~\ref{prob:wtlsubseq},~\ref{prob:exist_universal_largerthan_k}, and~\ref{prob:universal_forall_m}, while $\varepsilon \in L$ immediately leads to a negative answer for \cref{prob:wtl_all_subseq} (unless $w=\varepsilon$) and~\cref{prob:exist_nonuniversal} (unless $k=0$). So, for simplicity, we only address languages that, by definition, \emph{do not} contain the empty string (see also the discussions in \cite{bib:ChNF,HopcroftU79} about how the presence of $\varepsilon$ in formal languages can be handled).

Also, note that every unary {\CFL} is regular \cite{Parikh1966}, so when discussing our problems for the class of {\CFL}s we assume that the input languages are over an alphabet with at least two letters. 

\begin{definition}
A {\CFG} $G = (V,\Sigma,P,S)$ is in Chomsky normal form ({\CNF}) if and only if $P \subseteq V \times (V^2 \cup \Sigma)$ and, for all $A\in V$, there exist some $w_A, w'_A, w''_A\in \Sigma^*$ such that $A\Rightarrow^* w_A$ and $S\Rightarrow^* w'_A Aw''_A$ (these last two properties essentially say that every non-terminal of $G$ is \emph{useful}).
\end{definition}

When we discuss our problems in the case of {\CFL}s, we assume our input is a {\CFG} $G$ in {\CNF}. This does not change our results since, according to~\cite{bib:ChNF} and the references therein, we can transform any grammar $G$ in polynomial time into a {\CFG} $G^\prime$ in {\CNF} such that $|G^\prime| \in \mathcal{O}(|G|^2)$ and $L(G)=L(G')$, where $|G|$ refers to the size of a grammar determined in terms of total size of its productions. 

In some cases it may be easier to view derivations in a {\CFG} $G$ as a derivation (parse) tree. These are rooted, ordered trees. The inner nodes of such trees are labeled with non-terminals and the leaf-nodes are labeled with symbols $X \in (V \cup \Sigma)$. An inner node $A$  has, from left to right, the children $X_1, \dots, X_k$, for some integer $k\geq 1$, if the grammar contains the production $A \rightarrow X_1\dots X_k$. As such, if we concatenate, from left to right, the leaves of a derivation tree $T$ with root $A$ we get a string $\alpha$ (called in the following the border of $T$) such that $A\rightarrow^* \alpha$. The depth of a derivation tree is the length of the longest simple-path starting with the root and ending with a leaf (i.e., the number of edges on this path). If $G$ is in {\CNF}, then all its derivation trees are binary.

\begin{definition}
    For any language $L\subset\Sigma^\ast$ the {\em downward closure} $\Down L$ of $L$ is defined as the language containing all subsequences of strings of $L$, i.e., $\Down L =\{v\in\Sigma^\ast\mid \exists w\in L:v\leq w\}$. The complementary notion of the {\em upward closure} $\Up L$ of a language $L$ is the language containing all supersequences of strings in $L$, i.e., $\Up L =\{w\in\Sigma^\ast\mid \exists v\in L: v\leq w\}$. 
\end{definition}

Our problems focus on properties of formal languages, and Problems~\ref{prob:exist_universal_largerthan_k},~\ref{prob:exist_nonuniversal} and~\ref{prob:universal_forall_m} are strongly connected to universality seen as a property of a language, therefore we extend the concept of universality to formal languages. We distinguish between two different ways of analyzing the universality of a language.
% \looseness=-1
\begin{definition}
Let $L\subseteq \Sigma^*$ be a language. We call $L$ $k$-universal universal if for every $w \in L$ it holds that $\iota(w) \geq k$. The universal universality index $\iotaA(L)$ is the largest $k$, such that $L$ is $k$-universal universal. We call $L$ $k$-existential universal if a string $w \in L$ exists with $\iota(w) \geq k$. The existential universality index $\iotaE(L)$ is the largest $k$, such that $L$ is $k$-existential universal. In all the definitions above, the universality index of words and, respectively, languages is computed w.r.t. $\Sigma$. 
\end{definition}

In case of a singleton language $L=\{w\}$ it holds that $\iotaE(L)=\iotaA(L)=\iota(w)$. In general the universal universality index $\iotaA(L)$ is the infimum of the set of all universality indices of strings in $L$ and therefore is lower bounded by $0$ and upper bounded by $\iota(w)$, for any $w\in L$ (so it is finite, for $L\neq \emptyset$). The existential universality index $\iotaE(L)$ is the supremum of the set of all universality indices of strings in $L$ and, as such, can be infinite. In this setting the answer to \cref{prob:exist_universal_largerthan_k} and, respectively, \cref{prob:exist_nonuniversal}, can be solved by computing $\iotaE(L)$ and, respectively, $\iotaA(L)$, and then checking whether $k\leq \iotaE(L)$ and, respectively, $k\leq\iotaA(L)$. Furthermore, \cref{prob:universal_forall_m} asks whether $\iotaE(L)$ is infinite or not. 
The following two lemmas are not hard to show. 
\begin{lemma}\label{lem:dfa_superseq}
    Given a string $w\in \Sigma^*$, with $|w|=n$ and $|\Sigma|=\sigma$, we can construct in time $\bigo(n\sigma )$ a minimal {\dfa}, with $n+1$ states, accepting the set of strings which have $w$ as a subsequence.
\end{lemma}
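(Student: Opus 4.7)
The plan is to use the natural prefix-matching automaton for $w$. I would define the DFA with state set $Q = \{q_0, q_1, \ldots, q_n\}$, initial state $q_0$, unique accepting state $q_n$, and transition function
\[
\delta(q_i, a) = \begin{cases} q_{i+1} & \text{if } i < n \text{ and } a = w[i+1], \\ q_i & \text{otherwise.} \end{cases}
\]
The intuitive meaning of $q_i$ is ``the longest prefix of $w$ already matched greedily as a subsequence of the input read so far has length $i$''. Since for each of the $n+1$ states we specify exactly $\sigma$ transitions (each of which can be computed in constant time once $w$ is given), the construction runs in time $\bigo(n\sigma)$.

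For correctness, I would argue by induction on the length of the input string $u$ that after reading $u$ from $q_0$ the automaton is in state $q_i$, where $i$ is the largest index such that $w[1,i] \leq u$ (i.e.\ the length of the longest prefix of $w$ embeddable into $u$ by the leftmost greedy embedding). The inductive step is immediate from the definition of $\delta$: reading one extra letter $a$ advances $i$ to $i+1$ precisely when $a = w[i+1]$ and $i<n$, which matches the standard greedy subsequence-matching step. Consequently $u$ is accepted iff the greedy matching fully embeds $w$ into $u$, which is equivalent to $w \leq u$.

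For minimality I would show that every pair of distinct states is Myhill--Nerode distinguishable. Let $0 \le i < j \le n$, and consider the suffix $w[j+1,n]$ of length $n-j$. Starting from $q_j$ and reading $w[j+1,n]$ clearly leads to $q_n$, so it is accepted. Starting from $q_i$, however, each letter read can increment the state index by at most $1$, so after reading $n-j$ letters we end in a state $q_k$ with $k \le i + (n-j) < j + (n-j) = n$, which is non-accepting. Hence $q_i$ and $q_j$ are distinguished by $w[j+1,n]$ (with the convention that this is $\varepsilon$ when $j=n$, which still distinguishes $q_i$ from the accepting $q_n$), and by Myhill--Nerode the constructed DFA is minimal.

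There is no real obstacle in this proof; the only point that requires a moment of care is the minimality argument, specifically the observation that from $q_i$ the state-index grows by at most one per letter read, which caps the reachable index after $n-j$ steps strictly below $n$. Everything else (the correctness invariant and the $\bigo(n\sigma)$ bound on construction time) follows directly from the definition of $\delta$.
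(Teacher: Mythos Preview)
Your proposal is correct and follows essentially the same approach as the paper: the same prefix-matching DFA, the same $\bigo(n\sigma)$ construction bound, and a Myhill--Nerode minimality argument. Your write-up is in fact more explicit than the paper's, giving the distinguishing string $w[j+1,n]$ and the ``index increases by at most one per letter'' bound where the paper only asserts that the prefixes $w[1,i]$ are pairwise inequivalent.
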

\begin{proof}
%\todo{Appendix}
    Let $n=|w|$. We construct a {\dfa} $A=(Q,\Sigma, q_0,\{f\},\delta)$ accepting the set of strings which have $w$ as a subsequence. We start by defining the set of states $Q=\{0,1,\ldots,n\}$, the initial state $q_0=0$, and the single final state $f=n$. Now, for the transition function, we set, for $i\in [n]$, $\delta(i-1,a)=i$ if and only if $w[i]=a$; otherwise, we set $\delta(i,a)=i$, for $i\in Q$ and $a\in \Sigma$. 
    
    The correctness of the construction (i.e., the fact that $A$ accepts exactly the set of strings $v$ with $w\leq v$) is immediate. Indeed, to go from state $0$ to state $n$ we need to read a string which contains, in order but not necesarilly on consecutive positions, the letters $w[1], w[2], \ldots, w[n]$. That is, we need to read a string which has $w$ as a subsequence.

    To see that this {\dfa} is minimal, it is enough to observe that the strings $w[1:i]$, for $i\geq 0$, are not equivalent w.r.t.\ the Myhill-Nerode equivalence. So, any {\dfa} accepting the set of strings $v$ with $w\leq v$ needs to have at least $n+1$ states.

    This concludes our proof.
\end{proof}

\begin{lemma}\label{lem:dfa_kuniv}
    For $k>0$ and an alphabet $\Sigma$ with $|\Sigma|=\sigma$ we can construct in time $\bigo(2^{\sigma} k\ \poly(\sigma))$ a minimal {\dfa}, with $(2^{\sigma}-1) k +1$ states, accepting the set of $k$-universal strings over $\Sigma$. 
\end{lemma}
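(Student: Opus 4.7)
The plan is to build a deterministic automaton that incrementally tracks the arch factorisation of the prefix read so far, recording (i) how many complete arches have been closed and (ii) the set of distinct letters already seen in the currently open arch. Concretely, I would take as state set $\{(i,T)\mid 0\leq i\leq k-1,\ T\subsetneq \Sigma\}\cup\{q_{\mathrm{acc}}\}$, which has cardinality $(2^{\sigma}-1)k+1$, pick $(0,\emptyset)$ as the initial state and $q_{\mathrm{acc}}$ as the unique accepting state, and define the transition from $(i,T)$ on $a\in\Sigma$ as follows: go to $(i,T\cup\{a\})$ if $T\cup\{a\}\neq\Sigma$; go to $(i+1,\emptyset)$ if $T\cup\{a\}=\Sigma$ and $i+1<k$; go to $q_{\mathrm{acc}}$ if $T\cup\{a\}=\Sigma$ and $i+1=k$. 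All letters loop at $q_{\mathrm{acc}}$.

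Correctness follows directly from \cref{def:archfak}: a simple induction on the length of the prefix $u$ shows that the state reached from $(0,\emptyset)$ on input $u$ is $(\iota(u),\alph(r(u)))$ as long as $\iota(u)<k$, and is $q_{\mathrm{acc}}$ as soon as $\iota(u)\geq k$, so the accepted language is indeed $\{w\in\Sigma^{*}:\iota(w)\geq k\}$. For the running time, storing each $T$ as a length-$\sigma$ bitmask makes every transition (one set union, one equality test against $\Sigma$, and possibly a reset and an increment of $i$) computable in $\poly(\sigma)$ time; since there are $\sigma$ transitions out of each of the $\bigo(2^{\sigma}k)$ states, the whole automaton is built in $\bigo(2^{\sigma}k\cdot\poly(\sigma))$ time, as claimed.

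The hard part will be minimality. To each non-accepting state $(i,T)$ I associate the distance to acceptance $d(i,T):=\sigma(k-i)-|T|$, which is exactly the length of the shortest word driving $(i,T)$ into $q_{\mathrm{acc}}$, because reading one letter can raise the quantity ``(closed arches)$\cdot\sigma+|T|$'' by at most one. Hence two states with different values of $d$ are separated by the shorter of their shortest accepting words, while $d(i,T)=d(i',T')$ forces $i=i'$ and $|T|=|T'|$ since $|T|,|T'|\leq \sigma-1<\sigma$ precludes $i\neq i'$. What remains is to separate two states $(i,T)\neq (i,T')$ with $|T|=|T'|$; here, picking any $a\in T\setminus T'$ and letting $s:=|T\setminus T'|=|T'\setminus T|\geq 1$, I would use the word $w:=w_{1}w_{2}\cdots w_{k-i}$, where $w_{1}$ is any enumeration of $\Sigma\setminus T$ and each $w_{j}$ with $j\geq 2$ is a permutation of $\Sigma$ whose first $s$ letters are exactly those of $T\setminus T'$. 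Tracing $w$ from $(i,T)$, the prefix $w_{1}$ closes the open arch and each subsequent $w_{j}$ closes one additional full arch, so we reach $q_{\mathrm{acc}}$. Tracing the same $w$ from $(i,T')$, however, after $w_{1}$ the visited set is $T'\cup(\Sigma\setminus T)=\Sigma\setminus(T\setminus T')\subsetneq\Sigma$, so the current arch stays open; then the initial $s$ letters of each $w_{j}$ close exactly one arch and the remaining $\sigma-s$ letters refill the next open arch to precisely $\Sigma\setminus(T\setminus T')$, so after $w$ we end in $(k-1,\Sigma\setminus(T\setminus T'))$, which is non-accepting. Thus $w$ separates the two states, no Myhill--Nerode class is collapsed, and the automaton is minimal.
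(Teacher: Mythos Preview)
Your proposal is correct and the automaton construction, correctness argument, and running-time analysis coincide with the paper's.

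For minimality the two proofs diverge slightly in presentation. The paper works on the word side: it exhibits, for each state $(i,S)$, a representative string $w_{i,S}=(12\cdots\sigma)^{i}j_{1}\cdots j_{p}$ (where $S=\{j_{1},\ldots,j_{p}\}$) reaching that state, together with $(12\cdots\sigma)^{k}$ for the accepting state, and then asserts (``it is immediate'') that these $(2^{\sigma}-1)k+1$ strings are pairwise Myhill--Nerode inequivalent. You work on the state side and explicitly construct separating words: first the shortest-accepting-word length $d(i,T)=\sigma(k-i)-|T|$ distinguishes states with different $d$-values and forces $i=i'$, $|T|=|T'|$ when $d$ agrees; then your tailored word $w_{1}w_{2}\cdots w_{k-i}$ separates $(i,T)$ from $(i,T')$ when $T\neq T'$ but $|T|=|T'|$. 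Both routes yield the same lower bound on the number of Myhill--Nerode classes; yours is more explicit where the paper hand-waves, at the cost of a longer case analysis. One cosmetic remark: you pick a letter $a\in T\setminus T'$ but never actually use it---your argument relies only on the set $T\setminus T'$.
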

\begin{proof}
%\todo{Appendix}
    Assume, for simplicity, that $\Sigma = \{1,2,\ldots, \sigma\}$. 
    
    We construct a {\dfa} $A=(Q,\Sigma, q_0,\{f\},\delta)$ accepting the set of $k$-universal strings. We start by defining the set of states $Q=\{(i,S)\mid 0\leq i\leq k-1, S\subsetneq \Sigma\} \cup \{(k)\}$, the initial state $q_0=(0,\emptyset)$, and the single final state $f=(k)$. The transition function is defined as follows:
    \begin{itemize}
        \item for $a\in \Sigma$, $\delta((k),a)=(k)$;
        \item for $0\leq t<k$, $S\subset \Sigma$, and $a\in \Sigma$, if $a\notin S$ and $S\cup \{a\}\neq \Sigma$, then $\delta((t,S), a)= (t,S\cup\{a\})$;
        \item for $0\leq t<k-1$, $S\subset \Sigma$, and $a\in \Sigma$, if $a\notin S$ and $S\cup \{a\}= \Sigma$, then $\delta((t,S), a)= (t+1,\emptyset)$;
        \item for $S\subset \Sigma$, and $a\in \Sigma$, if $a\notin S$ and $S\cup \{a\}= \Sigma$, then $\delta((k-1,S), a)= (k)$;
        \item for $0\leq t<k$, $S\subset \Sigma$, and $a\in S$, then $\delta((t,S), a)= (t,S)$.
    \end{itemize}

    To show that $A$ accepts the set of $k$-universal strings, we make the following observations:
    \begin{itemize}
        \item The set of strings which label paths starting with $(0,\emptyset)$ and ending with $(0,S)$, for some set $S\subsetneq \Sigma$, are exactly those strings $w$, with $\alph(w)=S$. This can be shown easily by induction on $|S|$.
        \item The set of strings which label paths starting with $(0,\emptyset)$ and ending with $(1,\emptyset)$ are exactly the $1$-universal strings $w$, such that $r(w)=\varepsilon$. This follows immediately from the previous observation. 
        \item For $1\leq i\leq k-2$, the set of strings which label paths starting with $(0,\emptyset)$ and ending with $(i,S)$, for some set $S\subsetneq \Sigma$ with $S\neq \emptyset $, are exactly those strings $w$, with $\iota(w)=i$ and $\alph(r(w))=S $, and the set of strings which label paths starting with $(0,\emptyset)$ and ending with $(i+1,\emptyset)$ are exactly those strings $w$, with $\iota(w)=i+1$ and $r(w)=\varepsilon$. This can be shown by induction. We first  assume that the property holds for $j= i-1$ (and note that it holds for $j=0$). Then we show, by induction on $|S|$, that the set of strings which label paths starting with $(0,\emptyset)$ and ending with $(i,S)$, for some set $S\subsetneq \Sigma$, are exactly those strings $w$, with $\iota(w)=i$ and $\alph(r(w))=S $. Finally, it immediately follows that the set of strings which label paths starting with $(0,\emptyset)$ and ending with $(i+1,\emptyset)$ are exactly those strings $w$, with $\iota(w)=i+1$ and $r(w)=\varepsilon$. 
        \item The set of strings which label paths starting with $(0,\emptyset)$ and ending with $(k-1,S)$, for some set $S\subsetneq \Sigma$ with $S\neq \emptyset $, are exactly those strings $w$, with $\iota(w)=k-1$ and $\alph(r(w))=S $, and the set of strings which label paths starting with $(0,\emptyset)$ and ending with $(k)$ are exactly those strings $w$, with $\iota(w)\geq k$. Again, we can show by induction on $|S|$, that the set of strings which label paths starting with $(0,\emptyset)$ and ending with $(k-1,S)$, for some set $S\subsetneq \Sigma$, are exactly those strings $w$, with $\iota(w)=i$ and $\alph(r(w))=S $. Finally, to reach $(k)$ directly from a state $(k-1,S)$ we need to have that $\Sigma\setminus S=\{a\}$ and we read $a$ in state $(k-1,S)$; this means that the string we have read is $k$-universal. Once $(k)$ is reached, it is never left no matter what we read, so the strings that lead from $(0,\emptyset)$ to $(k)$ are exactly the $k$-universal strings (so with $\iota(w)\geq k$). 
    \end{itemize}

    Further, we show that $A$ is minimal. Consider $0\leq p\leq \sigma$ and $0\leq i\leq k-1$, and a strict subset $S=\{j_1,\ldots,j_p\}$ of $\Sigma$. Now, consider the strings $w_{i,S}=(12\cdots \sigma)^i (j_1 \cdots j_p)$. It is immediate that $w_{i,S}$ and $w_{j,S'}$ are not equivalent under the Myhill-Nerode equivalence (defined for the language of $k$-universal strings over $\Sigma$) if $(i,S)\neq (j,S')$.  Moreover, none of these strings is equivalent to $(12\cdots \sigma)^k$. So, the Myhill-Nerode equivalence has at least $(2^\sigma -1)k +1 $ states. As $A$ has exactly as many states, our statement follows.    
\end{proof}

The computational model we use to state our algorithms is the standard unit-cost word RAM with logarithmic word-size $\omega$ (meaning that each memory word can hold $\omega$ bits). It is assumed that this model allows processing inputs of size $n$, where $\omega \geq \log n$; in other words, the size $n$ of the data never exceeds (but, in the worst case, is equal to) $2^\omega$. Intuitively, the size of the memory word is determined by the processor, and larger inputs require a stronger processor (which can, of course, deal with much smaller inputs as well). Indirect addressing and basic arithmetical operations on such memory words are assumed to work in constant time. Note that numbers with $\ell$ bits are represented in $O(\ell/\omega )$ memory words, and working with them takes time proportional to the number of memory words on which they are represented. This is a standard computational model for the analysis of algorithms, defined in \cite{FredmanW90}. 

Our algorithms have languages as input, that is sets of strings over some finite alphabet. Therefore, we follow a standard stringology-assumption, namely that we work with an {\em integer alphabet}: we assume that this alphabet is $\Sigma=\{1,2,\ldots,\sigma\}$, with $|\Sigma|=\sigma$, such that $\sigma$ fits in one memory word. For a more detailed general discussion on the integer alphabet model see, e.\,g.,~\cite{crochemore}. In all problems discussed here, the input language is given as a grammar generating it or as an automaton accepting it. We assume that all the sets defining these generating/accepting devices (e.g., set of non-terminals, set of states, set of final states, relation defining the transition function or derivation, etc.) have at most~$2^\omega$ elements and their elements are integers smaller or equal to $2^\omega$ (i.e., their cardinality and elements can be represented as integers fitting in one memory word). In some of the problems discussed in this paper, we assume that we are given a number $k$. Again, we assume that this integer fits in one memory word.

One of our algorithms (for \cref{prob:exist_universal_largerthan_k} in the case of $\CFL$, stated in Theorem \ref{thm:prob_exist_universal_largerthan_k}) runs in exponential time and uses exponential space w.r.t.\ the size of the input. In particular, both the space and time complexities of the respective algorithm are exponential, with constant base, in $\sigma$ (the size of the input alphabet) but polynomial w.r.t.\ all the other components of the input. To avoid clutter, we assume that our exponential-time and -space algorithm runs on a RAM model where we can allocate as much memory as our algorithms needs (i.e., the size of the memory-word $\omega$ is big enough to allow addressing all the memory we need in this algorithm in constant time); for the case of $\sigma \in O(1)$, this additional assumption becomes superfluous, and for non-constant $\sigma$ we stress out that the big size of memory words is only used for building large data structures, but not for speeding up our algorithms by, e.g., allowing constant-time operations on big numbers (that is, numbers represented on more than $c\log N$ bits, for some constant $c$ and $N$ being the size of the input).

 %%%%%%%%%%%%%%%%%%%%%%% 
 %%%%%%%%%%%%%%%%%%%%%%% 
 \section{General Results}\label{sec:meta_theorems}

We consider the problems introduced in \cref{sec:Problems}, for the case when the language $L$ is chosen from a class ${\mathcal C}$, 
%\ref{prob:wtlsubseq},~\ref{prob:wtl_all_subseq},~\ref{prob:exist_universal_largerthan_k},~\ref{prob:exist_nonuniversal},~\ref{prob:universal_forall_m} 
and give a series of sufficient conditions for them to be decidable. 

Consider a class ${\mathcal G}$ of grammars (respectively, a class ${\mathcal A}$ of automata) generating (respectively, accepting) the languages of the class ${\mathcal C}$. For simplicity, for the rest of this section, we assume that in all the problems we take as input a grammar $G_L$ such that $L(G_L)=L$, but note that all the results hold for the case when we consider that the languages are given by an automaton from the class ${\mathcal A}$ accepting them. 
% \looseness=-1

Let ${\mathcal C}'$ be the class of languages $L\cap R$, where $L\in {\mathcal C}$ and $R\in{\REG}$. We use two hypotheses:
% Let ${\mathcal C}'$ be the class of languages $L\cap R$, where $L\in {\mathcal C}$ and $R\in{\REG}$. We have two hypotheses that we use:
\begin{itemize}
    \item[H1.] Given a grammar $G$ of the class ${\mathcal G}$ we can algorithmically construct a non-deterministic finite automaton $A$ accepting the downward closure of $L(G)$. 
    \item[H2.] Given a grammar $G$ of the class ${\mathcal G}$ and a non-deterministic finite automaton $A$, we can algorithmically decide whether the language $L(G)\cap L(A)$ is empty. 
\end{itemize}

% We can show the following two theorems. In the first one, 
First 
we show that, under H1, Problems~\ref{prob:wtlsubseq},~\ref{prob:exist_universal_largerthan_k}, and~\ref{prob:universal_forall_m} are decidable. 
% Intuitively, this follows since deciding all the problems for $L$ is equivalent to deciding them for $\Down L$, the downward closure of $L$. As $\Down L$ is regular, the decidability of these problems follows immediately (e.g., using the results of~\cite{bib:UniReg}).
\begin{theorem}\label{thm:DC_dec}
If H1 holds, then Problems~\ref{prob:wtlsubseq},~\ref{prob:exist_universal_largerthan_k}, and~\ref{prob:universal_forall_m} are decidable. 
\end{theorem}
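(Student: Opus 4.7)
The common thread running through Problems~\ref{prob:wtlsubseq}, \ref{prob:exist_universal_largerthan_k}, and \ref{prob:universal_forall_m} is that each of the three properties is preserved under taking supersequences: if $v\in L$ witnesses the property (namely, contains $w$ as a subsequence, is $k$-universal, or is $m$-universal for arbitrarily large $m$), then any $v'\geq v$ does so as well. This lets me reformulate every one of the three problems as a question about the downward closure $\Down{L}$, which by H1 is effectively available as an NFA $A$ (say with $n$ states), reducing everything to decidable regular-language tasks.

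For \cref{prob:wtlsubseq}, one has $w\leq v$ for some $v\in L$ if and only if $w\in\Down{L}$, so the problem reduces to testing $w\in L(A)$, which is clearly decidable. For \cref{prob:exist_universal_largerthan_k}, I plan to show that $L$ contains a $k$-universal string if and only if $\Down{L}$ does. The forward direction is immediate from $L\subseteq\Down{L}$. For the converse, if $u\in\Down{L}$ is $k$-universal then $u\leq v$ for some $v\in L$, and since universality is monotone with respect to $\leq$, $v$ itself is also $k$-universal. Combining $A$ with the DFA $U_k$ for the set of $k$-universal strings over $\Sigma$, whose effective construction is given by \cref{lem:dfa_kuniv}, via the standard product construction and testing emptiness of $L(A)\cap L(U_k)$ then settles this problem.

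For \cref{prob:universal_forall_m}, the same monotonicity argument reduces the task to deciding whether $\Down{L}$ contains $m$-universal strings for every $m\in\N$. I aim to prove this is equivalent to $\Down{L}$ containing even a single $(n+1)$-universal string. One direction is trivial. For the other, take $u\in L(A)$ with $\iota(u)\geq n+1$ and fix an accepting run of $A$ on $u$; let $p_0,p_1,\ldots,p_{n+1}$ be the states of the run located at the $n+2$ boundaries of the first $n+1$ arches of $u$. By pigeonhole, $p_i=p_j$ for some $0\leq i<j\leq n+1$, and the corresponding factor $c$ of $u$ labels a cycle at $p_i$ in $A$. Because $c$ spans at least one complete arch, $c$ is $1$-universal, so pumping the cycle yields, for every $m$, an accepting run whose label is a string in $\Down{L}$ with universality index at least $m(j-i)\geq m$. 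Hence decidability of \cref{prob:universal_forall_m} reduces to a single emptiness check of $L(A)\cap L(U_{n+1})$, again by \cref{lem:dfa_kuniv} and the product construction.

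The principal obstacle lies in the pumping argument for \cref{prob:universal_forall_m}: the pigeonhole indices must be chosen precisely at arch boundaries rather than at arbitrary positions of the run, so that the pumped factor is guaranteed to be $\Sigma$-covering and thus contributes a positive amount of universality each time it is traversed; any weaker choice of positions would only give a cycle without a guaranteed appearance of every letter and would not suffice to inflate the universality index. Once this is handled carefully, everything else reduces to routine manipulations combining H1 with \cref{lem:dfa_kuniv} and standard closure and emptiness-testing procedures for regular languages.
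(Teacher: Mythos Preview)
Your argument is correct and follows the same overall scheme as the paper: reduce all three problems to questions about the regular language $\Down{L}$, obtained as an NFA $A$ via H1, and then solve them there. For \cref{prob:wtlsubseq} your treatment is identical to the paper's. For \cref{prob:exist_universal_largerthan_k} and \cref{prob:universal_forall_m} the paper simply invokes the decidability results for regular languages from~\cite{bib:UniReg}, whereas you unpack those results: for \cref{prob:exist_universal_largerthan_k} you intersect with the {\dfa} of \cref{lem:dfa_kuniv} and test emptiness (this is in fact the route of \cref{thm:intersection_dec} applied to $A$), and for \cref{prob:universal_forall_m} you give a direct pumping argument at arch boundaries to show that $\Down{L}$ contains $m$-universal strings for all $m$ iff it contains a single $(n+1)$-universal string, reducing \cref{prob:universal_forall_m} to one instance of \cref{prob:exist_universal_largerthan_k}.

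The pumping argument is sound; your care in placing the pigeonhole states at arch boundaries is exactly what guarantees the pumped factor is $1$-universal, and this is essentially the same observation that underlies the algorithm in~\cite{bib:UniReg} (a reachable state lying on a $1$-universal cycle). The trade-off is that your route for \cref{prob:universal_forall_m}, via $U_{n+1}$ and a product construction, yields a decision procedure of exponential size in $n$ and $\sigma$, whereas the cited result gives a polynomial-time algorithm; since only decidability is at stake in \cref{thm:DC_dec}, this does not affect correctness, but it is worth being aware of the gap.
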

\begin{proof}
We start by observing that the following straightforward properties hold:
\begin{itemize}
    \item for a string $w$, there exists $v\in L$ such that $w\leq v$ if and only if $w \in \Down L$.
    \item for an integer $k>0$, there exists $v\in L$ such that $v$ is $k$-universal if and only if there exists $v' \in \Down L$ such that $v'$ is $k$-universal.
\end{itemize}
In each of Problems~\ref{prob:wtlsubseq},~\ref{prob:exist_universal_largerthan_k}, and~\ref{prob:universal_forall_m}, we are given a grammar $G$ generating the language $L$. According to H1, we construct a non-deterministic automaton $A$ accepting $\Down L$, the downward closure of $L$. 

For Problem~\ref{prob:wtlsubseq}, it is sufficient to check if $L(A)=\Down L$ contains the string $w$, which is clearly decidable. For Problem~\ref{prob:exist_universal_largerthan_k} we need to decide if $L$ contains a $k$-universal string. By our observations, it is enough to check if $\Down L$ contains a $k$-universal string. This can be decided, for the automaton $A$, according to~\cite{bib:UniReg}. For Problem~\ref{prob:universal_forall_m} we need to decide if $L$ contains a $k$-universal string, for all $k\leq 0$. This is also decidable, for $A$, according to the results of~\cite{bib:UniReg}. 
\end{proof}

% In the second result, 
Secondly, we show that, under H2, Problems~\ref{prob:wtlsubseq},~\ref{prob:wtl_all_subseq},~\ref{prob:exist_universal_largerthan_k}, and~\ref{prob:exist_nonuniversal} are decidable. 
% Intuitively, this follows since the set of strings which have (respectively, do not have) a given string as a subsequence, as well as the set of $k$-universal strings (respectively, the set of strings which are not $k$-universal) are all regular. Therefore, the respective problems can be reduced to checking the emptiness of the intersection of the given language $L$ to {\REG}, for which we can construct a {\dfa} accepting it (by Lemmas~\ref{lem:dfa_superseq} and~\ref{lem:dfa_kuniv}). 
\begin{theorem}\label{thm:intersection_dec}
If H2 holds, then Problems~\ref{prob:wtlsubseq},~\ref{prob:wtl_all_subseq},~\ref{prob:exist_universal_largerthan_k}, and~\ref{prob:exist_nonuniversal} are decidable. 
\end{theorem}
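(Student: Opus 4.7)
The plan is to reduce each of the four problems to a single emptiness question of the form $L \cap R = \emptyset$ where $R$ is a regular language constructible from the input, and then invoke H2. Since H2 takes an NFA (not a grammar) as its second argument, it suffices in each case to exhibit an NFA (or DFA) for the appropriate regular language.

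For Problem~\ref{prob:wtlsubseq}, observe that $v \in L$ satisfies $w \leq v$ iff $v \in L \cap \Up{\{w\}}$. By \cref{lem:dfa_superseq} we can build a {\dfa} for $\Up{\{w\}}$, so H2 decides whether this intersection is nonempty. For Problem~\ref{prob:exist_universal_largerthan_k}, by \cref{lem:dfa_kuniv} we can build a {\dfa} $A_k$ accepting the language $U_k$ of $k$-universal strings over $\Sigma$, and then a $k$-universal string exists in $L$ iff $L \cap L(A_k) \neq \emptyset$, which H2 decides.

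Problems~\ref{prob:wtl_all_subseq} and~\ref{prob:exist_nonuniversal} are handled by complementing the regular language and asking about emptiness of the intersection. Concretely, \emph{all} strings of $L$ have $w$ as a subsequence iff $L \cap (\Sigma^* \setminus \Up{\{w\}}) = \emptyset$, and \emph{all} strings of $L$ are $k$-universal iff $L \cap (\Sigma^* \setminus U_k) = \emptyset$. Since regular languages are effectively closed under complement, from the {\dfa}s of \cref{lem:dfa_superseq} and \cref{lem:dfa_kuniv} we obtain (by swapping final and non-final states) {\dfa}s — hence NFAs — recognising $\Sigma^* \setminus \Up{\{w\}}$ and $\Sigma^* \setminus U_k$. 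Feeding these automata together with $G_L$ into H2 yields decision procedures for Problems~\ref{prob:wtl_all_subseq} and~\ref{prob:exist_nonuniversal}.

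The only subtle point, and the reason Problem~\ref{prob:universal_forall_m} is absent from the statement, is that $\infty$-universality is not expressible as a single intersection-emptiness question with a fixed regular language: checking whether $L$ contains $m$-universal words for \emph{every} $m$ quantifies over infinitely many regular targets and cannot be reduced to a finite sequence of H2 queries without additional assumptions (such as H1). For the four problems actually listed, however, the reductions above are straightforward and the only work is verifying the constructions of the regular witnesses, which we have already assembled from the earlier lemmas.
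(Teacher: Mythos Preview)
Your proof is correct and follows essentially the same approach as the paper: build the DFAs from \cref{lem:dfa_superseq} and \cref{lem:dfa_kuniv}, complement them for the $\forall$-variants, and invoke H2 on the resulting intersection-emptiness questions. Your additional remark on why Problem~\ref{prob:universal_forall_m} is excluded is not in the paper's proof but is accurate and helpful.
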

\begin{proof}
In all the inputs of Problems~\ref{prob:wtlsubseq},~\ref{prob:wtl_all_subseq},~\ref{prob:exist_universal_largerthan_k}, and~\ref{prob:exist_nonuniversal} when considering $CFL$ and $CSL$, we are given a grammar $G$, which generates the language $L$. 

For Problems~\ref{prob:wtlsubseq} and~\ref{prob:wtl_all_subseq}, by Lemma~\ref{lem:dfa_superseq} we construct a {\dfa} $B$ accepting the regular language $\Up w$ of strings which have $w$ as a subsequence. If the intersection of $L$ (given as the grammar $G$ which generates it) and $L(B)$ is empty, which is decidable, under H2, then the considered instance of Problem~\ref{prob:wtlsubseq} is answered negatively; otherwise, it is answered positively. By making the final state of $B$ non-final, and all the other states final, we obtain a {\dfa} $B'$ which accepts $\Sigma^*\setminus\Up w$. If the intersection of $L$ and $L(B')$ is empty, then the answer to the considered instance of Problem~\ref{prob:wtl_all_subseq} is positive; otherwise, it is negative. 

For Problems~\ref{prob:exist_universal_largerthan_k} and~\ref{prob:exist_nonuniversal}, by Lemma~\ref{lem:dfa_kuniv} we construct a {\dfa} $B$ accepting the regular language of $k$-universal strings. If the intersection of $L$ and $L(B)$ is empty, then the answer to the considered instance of Problem~\ref{prob:exist_universal_largerthan_k} is negative; otherwise, it is positive. By making the final state of $B$ non-final, and all the other states final, we obtain a {\dfa} $B'$ which accepts exactly all strings which are not $k$-universal. If the intersection of $L$ and $L(B')$ is empty, then the answer to the considered instance of Problem~\ref{prob:exist_nonuniversal} is positive; otherwise, it is negative. 
\end{proof}

It is worth noting that, even for classes which fulfill both hypotheses above (such as the {\CFL}s \cite{Leeuwen78,HopcroftU79}), there are several reasons why the algorithms resulting from the above theorems are not efficient. On the one hand, constructing an automaton which accepts the downward closure of a language is not always possible, and even when this construction is possible (when the language is from a class for which H1 holds) it cannot always be done efficiently. For instance, in the case of {\CFL}s, this may take inherently exponential time w.r.t.\ the size of the input grammar~\cite{GruberHK09}; in this paper, we present more efficient algorithms for Problems~\ref{prob:wtlsubseq},~\ref{prob:wtl_all_subseq},~\ref{prob:exist_universal_largerthan_k}, and~\ref{prob:exist_nonuniversal} in the case of {\CFL}s, which do not rely on \cref{thm:intersection_dec}. On the other hand, the results of \cref{thm:intersection_dec} rely, at least partly, on the construction of a {\dfa} accepting all $k$-universal strings, which takes exponential time in the worst case, as it may have an exponential number of states (both w.r.t.\ the size of the input alphabet and w.r.t.\ the binary representation of the number $k$, which is given as input for some of these problems). 
% We also give more efficient algorithms for Problems~\ref{prob:wtlsubseq},~\ref{prob:wtl_all_subseq},~\ref{prob:exist_universal_largerthan_k}, and~\ref{prob:exist_nonuniversal} in the case of {\CFL}s, using different approaches.

Interestingly, the class {\CSL} does not fulfil any of the above hypotheses. In fact, as our last general result, we show that all five problems we consider here are undecidable for {\CSL}. 
% For each of the five problems we consider here, we follow the same approach: we show that the problem of deciding the emptiness of a given context-sensitive language can be reduced to deciding an instance for our problem. 
\begin{theorem}\label{thm:undecCSL}
Problems~\ref{prob:wtlsubseq},~\ref{prob:wtl_all_subseq},~\ref{prob:exist_universal_largerthan_k},~\ref{prob:exist_nonuniversal},~\ref{prob:universal_forall_m} are undecidable for the class of {\CSL}, given as {\CSG}s.
\end{theorem}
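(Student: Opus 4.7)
The plan is to reduce the emptiness problem for {\CSG}s---which is classically undecidable, since {\CSG}s in monotonic form generate exactly the languages accepted by linear-bounded automata and LBA-emptiness encodes the halting problem---to each of the five problems in turn. Given an arbitrary {\CSG} $G=(V,\Sigma,P,S)$, I will exhibit, via easy effective closure constructions, a {\CSG} $G'$ and additional inputs $w$ or $k$ such that the yes/no answer on this instance encodes whether $L(G)=\emptyset$.

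For the existential problems, fix a fresh letter $\#\notin\Sigma$. For \cref{prob:wtlsubseq}, obtain $G'$ from $G$ by introducing a new start symbol $S_0$ and the rule $S_0\to S\#$, so $L(G')=L(G)\cdot\{\#\}$; asking whether some $v\in L(G')$ admits $\#\leq v$ is equivalent to $L(G)\neq\emptyset$. For \cref{prob:exist_universal_largerthan_k}, let $w_\Sigma\in\Sigma^+$ be any word whose set of letters exhausts $\Sigma$ and, with $k:=1$, let $G'$ generate $L(G)\cdot\{w_\Sigma\}$: every element of $L(G')$ ends with $w_\Sigma$ and is hence $1$-universal over $\Sigma$, so such a word exists iff $L(G)\neq\emptyset$. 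For \cref{prob:universal_forall_m}, augment this with a Kleene plus and take $L(G'')=L(G)\cdot w_\Sigma^+$: if $L(G)=\emptyset$ then $L(G'')=\emptyset$ and no $m$-universal string exists, while if $L(G)\neq\emptyset$ then concatenating $m$ copies of $w_\Sigma$ to any $v\in L(G)$ yields an $m$-universal string in $L(G'')$ for every $m$.

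For the universal problems I use a disjunctive twist. For \cref{prob:wtl_all_subseq}, introduce the fresh letter $\#$ and a new start symbol $S_0$ with rules $S_0\to S$ and $S_0\to\#$, giving $L(G')=L(G)\cup\{\#\}$ over $\Sigma\cup\{\#\}$; with $w:=\#$, every $v\in L(G')$ satisfies $\#\leq v$ iff $L(G)=\emptyset$---if $L(G)$ is non-empty, any of its elements misses $\#$, and otherwise $L(G')=\{\#\}$. For \cref{prob:exist_nonuniversal}, replace $\#$ in the second rule above by a word $u\in(\Sigma\cup\{\#\})^+$ that is $1$-universal over $\Sigma\cup\{\#\}$ (e.g., the concatenation of all letters of $\Sigma\cup\{\#\}$); with $k:=1$ the analogous case distinction shows that every string in $L(G')$ is $1$-universal iff $L(G)=\emptyset$.

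The main obstacle is essentially bookkeeping: verifying that each added production respects the monotonicity constraint $|p|\leq|q|$---which is immediate, since each new rule has the form $S_0\to\alpha$ with $\alpha$ a non-empty string over $V\cup\Sigma\cup\{\#\}$---and that the Kleene-plus construction used in \cref{prob:universal_forall_m} can be realised by a {\CSG} (for instance, via rules $S_0\to ST$, $T\to w_\Sigma T$, $T\to w_\Sigma$ on a fresh non-terminal $T$, all of which are monotonic). Once these observations are in place, the five reductions immediately force undecidability from the undecidability of {\CSG}-emptiness.
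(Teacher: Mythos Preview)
Your proposal is correct and follows essentially the same approach as the paper: reduce the undecidable emptiness problem for {\CSG}s to each of the five problems via small, monotonic grammar extensions using a fresh symbol and a $1$-universal word. The specific gadgets differ only cosmetically---e.g., the paper prepends its universal block while you append it, and the paper keeps $L(G')$ non-empty by an extra production in the existential cases while you let $L(G')=\emptyset$ whenever $L(G)=\emptyset$ (which is harmless, as an empty language vacuously yields the ``no'' answer)---but the underlying reductions and the verification that all new rules are length-nondecreasing are the same.
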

\begin{proof}
% \todo{REMOVE SKETCH -- ADD TO MAIN PART}
To obtain the undecidability of all the problems, we show reductions from the emptiness problem for Context Sensitive Languages. Assume that we have a {\CSL} $L$, specified by a grammar $G$, as the input for the emptiness problem for {\CSL}. Assume $L$ is over the alphabet $\Sigma = \{ 1, \dots, \sigma\}$, and that the {\CSG} $G$, has the starting symbol $S$. Let $0$ be a fresh letter (i.e., $0\notin \Sigma$). 
 
To show the undecidability of Problems~\ref{prob:wtlsubseq} and~\ref{prob:wtl_all_subseq}, we construct a new grammar $G'$ which has all the non-terminals, terminals, and productions of $G$ and, additionally, $G'$ has a new starting symbol $S'$ and the productions $S'\rightarrow \sigma S$ and $S'\rightarrow 0$. 

It is immediate that there exists a string $w\in L(G')$ which contains $\sigma$ as a subsequence, if and only if $L(G)$ is not empty. Furthermore, all strings of $L(G')$ contain $0$ as a subsequence (that is, the production $S'\rightarrow \sigma S$  is not the first production in the derivation of any terminal string) if and only if $L(G)$ is empty. As the emptyness problem is undecidable for {\CSL} (given as grammars), it follows that Problems~\ref{prob:wtlsubseq} and~\ref{prob:wtl_all_subseq} are also undecidable for this class of languages.

To show the undecidability of Problem~\ref{prob:exist_universal_largerthan_k}, we construct a new grammar $G'$ which has all the non-terminals, terminals, and productions of $G$ and, additionally, $G'$ has a new starting symbol $S'$ and the production $S'\rightarrow (1 2 \cdots \sigma) S$. Clearly, $L(G')$ contains a $1$-universal string (over $\Sigma$) if and only if $L(G)\neq \emptyset$. Thus, it follows that Problem~\ref{prob:exist_universal_largerthan_k} is also undecidable for this class of languages.

To show the undecidability of Problem~\ref{prob:exist_nonuniversal}, we construct a new grammar $G'$ which has all the non-terminals, terminals, and productions of $G$ and, additionally, $G'$ has a new starting symbol $S'$ and the productions $S'\rightarrow 0 1 2\cdots \sigma $ and $S'\rightarrow S$. Clearly, all the strings of $L(G')$ are $1$-universal (over $\Sigma\cup \{0\}$) if and only if $L(G)= \emptyset$ (as any string which would be derived in $G'$ starting with the production $S'\rightarrow S$ would not contain $0$). Hence, Problem~\ref{prob:exist_nonuniversal} is also undecidable for {\CSL}. 

To show the undecidability of Problem~\ref{prob:universal_forall_m}, we construct a new grammar $G'$ which has all the non-terminals, terminals, and productions of $G$ and, additionally, $G'$ has a new starting symbol $S'$ and a fresh non-terminal $R$ and the productions $S'\rightarrow 0 1 2\cdots \sigma $, $S'\rightarrow R S$, $R\rightarrow 0 1 \cdots \sigma R$, and $R\rightarrow 0 1 \cdots \sigma$. Clearly, $L(G')$ contains $m$-universal strings (over $\Sigma\cup \{0\}$), for all $m\geq 1$, if and only if $L(G)\neq \emptyset$ (as we can use $R$ to pump arches in the strings of $L(G')$ if and only if there exists at least one derivation where $S$ can be derived to a terminal string). Accordingly, Problem~\ref{prob:universal_forall_m} is also undecidable for {\CSL}. 
\end{proof}

Given that all the problems become undecidable for ${\mathcal C}={\CSL}$, we now focus our investigation on classes of languages strictly contained in the class of {\CSL}s.

 %%%%%%%%%%%%%%%%%%%%%%% 
 %%%%%%%%%%%%%%%%%%%%%%% 
 \section{Problems~\ref{prob:wtlsubseq} and~\ref{prob:wtl_all_subseq} }
\label{sec:Problem1}

For the rest of this section, assume that $|w|=m$ and $|\Sigma|=\sigma$. Let us begin by noting that Problems~\ref{prob:wtlsubseq} and~\ref{prob:wtl_all_subseq} can be solved in polynomial time for the class {\REG} following the approach of Theorem~\ref{thm:intersection_dec}. Indeed, in this case, we assume that $L$ is specified by the NFA $A$, with $s$ states, with $L(A)=L$, and then we either have to check the emptiness of the intersection of $L=L(A)$ with the language accepted by the deterministic automaton constructed in Lemma~\ref{lem:dfa_superseq}, or, respectively, with the complement of this language; both these tasks clearly take polynomial time. 

We now consider the two problems for the class of {\CFL}s. We first recall the following folklore lemma (see, e.g., \cite{HopcroftU79}).
% \looseness=-1
\begin{lemma}\label{lem:intestectCF_REG}
    Let $G=(V,\Sigma, P, S)$ be a {\CFG} in {\CNF}, and let $A=(Q, \Sigma, q_0, F, \delta)$ be a {\dfa}. Then we can construct in polynomial time a {\CFG} $G_A$ in {\CNF} such that $L(G_A)=L(G)\cap L(A)$.
\end{lemma}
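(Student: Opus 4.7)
The plan is to use the classical triple-construction for intersecting a context-free language with a regular one, then put the result back into Chomsky normal form by removing useless non-terminals. Concretely, I would introduce non-terminals of the form $[p, X, q]$ for each $X \in V$ and each pair $p, q \in Q$, with the intended semantics that $[p, X, q] \Rightarrow_{G_A}^* w$ if and only if $X \Rightarrow_G^* w$ and $\delta^*(p, w) = q$. The productions are then set up in the obvious way: for every production $X \to YZ$ in $P$ add $[p, X, q] \to [p, Y, r]\,[r, Z, q]$ for every $r \in Q$; for every production $X \to a$ in $P$ add $[p, X, q] \to a$ for every pair $(p,q)$ with $\delta(p,a) = q$.

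Next I would handle the start symbol. Since CNF requires a single designated $S'$ and since $L(G)\cap L(A) = \bigcup_{f\in F}\{w \mid S\Rightarrow^* w,\ \delta^*(q_0,w)=f\}$, I would create a fresh start symbol $S'$ and, for every $f \in F$ and every production of $S$ in $G$, add a matching production for $S'$ that forces the underlying automaton to go from $q_0$ to $f$: concretely $S' \to [q_0, Y, r]\,[r, Z, f]$ whenever $S \to YZ \in P$ and $r \in Q$, together with $S' \to a$ whenever $S \to a \in P$ and $\delta(q_0,a) \in F$. Correctness follows by a straightforward induction on derivation length (respectively on the depth of a parse tree), using the semantics of the triples.

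Finally, to satisfy the strong CNF condition in the paper (every non-terminal must be \emph{useful}), I would apply the standard two-pass clean-up: first remove all non-terminals that do not derive any terminal string, then remove all non-terminals unreachable from $S'$. Both passes run in polynomial time by the usual worklist algorithms, and removing useless non-terminals does not affect the generated language. For the size bound, note that the number of triples is $|V|\cdot |Q|^2$, each binary production of $G$ produces $|Q|^3$ productions and each terminal production produces at most $|Q|$ productions, so the total size of $G_A$ is polynomial in $|G| + |A|$.

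The only mild subtlety I anticipate is the bookkeeping for the start symbol: one must avoid introducing productions of the form $S' \to [q_0, S, f]$ (which is a unit production and therefore not allowed in CNF), which is why I expand the productions of $S$ one level before attaching them to $S'$. Aside from that, the construction and its correctness are entirely routine, and the polynomial-time bound is immediate from the explicit size estimate above.
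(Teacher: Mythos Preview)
Your proposal is correct and follows essentially the same classical triple-construction as the paper, with the same inductive correctness argument and polynomial size bound. The only cosmetic difference is that you avoid unit productions at $S'$ by expanding the productions of $S$ one level, whereas the paper adds $S'\to(q_0,S,f)$ and then invokes a generic CNF clean-up to eliminate these renamings; both routes are routine and yield the same result.
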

% }
\begin{proof}
%\todo{Appendix}
Let $G=(V,\Sigma, P, S)$ be a {\CFG} in {\CNF}, with $|V|=n$ and $|\Sigma|=\sigma$, and let $A=(Q, \Sigma, q_0, F, \delta)$ be a {\dfa}, with $|Q|=\ell$.

We first define the {\CFG} $G'=(V',\Sigma, P', S')$ as follows:
\begin{itemize}
\item $V'=\{S'\}\cup \{(q,A,q')\mid A\in V, q,q'\in Q\}$;
\item $P'=\{S' \rightarrow (q_0,S,f)\mid f\in F\}\cup \{(q,A,q')\rightarrow (q,B,q'')(q'',C,q')\mid A\rightarrow BC\in P, q,q',q''\in Q\}\cup \{(q,A,q')\rightarrow a\mid A\rightarrow a\in P, q,q'\in Q, q'\in \delta(q,a)\}$.
\end{itemize}
It can now be easily shown by induction on the length $t$ of the derivation that $S'\Rightarrow^t_{G'} w$ if and only if $w\in L(G)\cap L(A)$. 
The grammar $G$ can be na\"ively constructed in time $\bigo(n^3\ell^3 + n\ell\sigma)$. 

Following the construction of \cite{bib:ChNF}, $G'$ can then be further transformed into a grammar $G_A$ in {\CNF} with $L(G_A)=L(G')$ in time $\bigo(n^3\ell^3)$ (in this case, only the renamings of the starting symbol $S'$ need to be eliminated).
\end{proof}

We can now state the main result of this section. We can apply Lemma \ref{lem:intestectCF_REG}, for Problem \ref{prob:wtlsubseq}, to the input {\CFG} and the {\dfa} constructed in Lemma~\ref{lem:dfa_superseq}, or, for Problem \ref{prob:wtl_all_subseq}, to the input {\CFG} and the complement of the respective {\dfa}. In both cases, we compute a {\CFG} in {\CNF} generating the intersection of a {\CFL} and a {\REG} language, and we have to check whether the language generated by that grammar is empty or not; all these can be implemented in polynomial time. 

\begin{theorem}\label{thm:Pb1_CFL}
Problems~\ref{prob:wtlsubseq} and~\ref{prob:wtl_all_subseq}, for an input grammar with $n$ non-terminals and an input string of length $m$, are decidable in polynomial time for {\CFL}.
\end{theorem}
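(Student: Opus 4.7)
The plan is to reduce both problems to the emptiness problem for a context-free language obtained by intersecting $L(G)$ with a suitable regular language, exactly along the lines already foreshadowed before the theorem. Both problems thus reduce, via \cref{lem:intestectCF_REG}, to a single well-known polynomial-time subroutine: testing emptiness of a {\CFG} in {\CNF} (by iteratively marking non-terminals that derive some terminal string and checking whether $S$ gets marked).

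For \cref{prob:wtlsubseq}, I would first invoke \cref{lem:dfa_superseq} on the input string $w$ to build, in time $\bigo(m\sigma)$, a complete {\dfa} $B$ with $m+1$ states accepting $\Up w$, the set of supersequences of $w$. The defining property of the problem then rephrases as $L(G)\cap L(B)\neq\emptyset$. Applying \cref{lem:intestectCF_REG} to $G$ and $B$ yields, in polynomial time, a {\CFG} $G_B$ in {\CNF} with $L(G_B)=L(G)\cap L(B)$, whose emptiness I then test in polynomial time. The instance is positive iff $L(G_B)\neq\emptyset$.

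For \cref{prob:wtl_all_subseq}, I would negate the query: all words of $L(G)$ contain $w$ as a subsequence iff $L(G)\cap(\Sigma^*\setminus \Up w)=\emptyset$. Because $B$ is a complete {\dfa} with a single accepting state, flipping the accepting/non-accepting status of its states produces a {\dfa} $B'$ of the same size recognising $\Sigma^*\setminus \Up w$. I then build $G_{B'}$ via \cref{lem:intestectCF_REG} and test emptiness; the instance is positive iff $L(G_{B'})=\emptyset$.

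There is no real obstacle here; the only point to keep track of is the overall running time. Combining \cref{lem:dfa_superseq}, which gives $|Q|=m+1$ states, with the bound from \cref{lem:intestectCF_REG}, the constructed {\CFG} has size polynomial in $n$, $m$ and $\sigma$, and the standard emptiness test is cubic in the grammar size, so the whole procedure runs in polynomial time in all parameters of the input, as claimed.
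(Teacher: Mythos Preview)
Your proposal is correct and follows essentially the same approach as the paper's own proof: build the {\dfa} of \cref{lem:dfa_superseq} (and its complement for \cref{prob:wtl_all_subseq}), intersect with $L(G)$ via \cref{lem:intestectCF_REG}, and test emptiness of the resulting {\CFG}. The only difference is cosmetic naming and that the paper spells out an explicit complexity bound of $\bigo((n^3m^3 + nm\sigma)\,nm^2)$ for the final emptiness check, whereas you simply note that everything is polynomial.
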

\begin{proof}
%\todo{Appendix}
We assume that $L$ is specified by a {\CFG} in {\CNF} $G$. Let $A$ be the {\dfa} constructed in Lemma~\ref{lem:dfa_superseq}, which accepts all strings having $w$ as a subsequence, and let $B$ be the {\dfa} obtained from $A$ by making all its non-final states final, and its only final state non-final (i.e., the complement {\dfa} for $A$, accepting $\Sigma^*\setminus L$). The construction of $A$ and $B$ takes polynomial time $\bigo(m\sigma)$, and they both have $m+1$ states.

By Lemma~\ref{lem:intestectCF_REG}, in both cases we can construct {\CFG} $G_A$ and $G_B$, respectively, generating $L\cap L(A)$, respectively $L\cap L(B)$. The complexity of this step is $\bigo(n^3m^3 + nm\sigma)$.

Further, we follow the approach of Theorem~\ref{thm:intersection_dec}. For Problem~\ref{prob:wtlsubseq}, we check if the intersection of $L$ and $L(A)$ is empty or not. This can be done in polynomial time by checking if $G_A$ generates any string (see, e.g., \cite{HopcroftU79}). If the intersection is empty, then the answer to the considered instance of Problem~\ref{prob:wtlsubseq} is answered negatively; otherwise, it is answered positively. For Problem~\ref{prob:wtl_all_subseq}, we check if the intersection of $L$ and $L(B)$ is empty or not. Again, this can be done in polynomial time by checking if $G_B$ generates any string. If the intersection is empty, then the answer to the considered instance of Problem~\ref{prob:wtl_all_subseq} is answered positively; otherwise, it is answered negatively. The time used for checking the emptiness of the respective {\CFL}s is, in both cases, $\bigo((n^3m^3 + nm\sigma)nm^2)$.
\end{proof}

 %%%%%%%%%%%%%%%%%%%%%%% 
 %%%%%%%%%%%%%%%%%%%%%%% 
 \section{Problems~\ref{prob:exist_universal_largerthan_k} and~\ref{prob:universal_forall_m}}
\label{sec:Problem3}

Let us begin by noting that in \cite{bib:UniReg} it was shown that for a given {\nfa} $A$ with $s$ states (with input alphabet $\Sigma$, where $|\Sigma|=\sigma$) and an integer $k\geq 0$, we can decide whether $L(A)$ contains a $k$-universal string (i.e., Problem~\ref{prob:exist_universal_largerthan_k} for the class {\REG}) in time $\bigo(\poly (s,\sigma)2^\sigma)$; in other words, Problem~\ref{prob:exist_universal_largerthan_k} is fixed parameter tractable ({\FPT}) w.r.t.\ the parameter $\sigma$. A polynomial time algorithm  (running in $\bigo(\poly(s,\sigma)$ time) was given for Problem~\ref{prob:universal_forall_m}, relying on the observation that, given an {\nfa} $A$, the language $L(A)$ contains strings with arbitrarily large universality if and only if $A$ contains a state $q$, which is reachable from the initial state and from which one can reach a final state, and a cycle which contains this state, whose label is $1$-universal. Coming back to Problem~\ref{prob:exist_universal_largerthan_k} for REG, the same paper shows that it is actually NP-complete. This is proved by a reduction from the Hamiltonian Path problem ({\HPP}, for short), in which a graph with $n$ vertices, the input of {\HPP}, is mapped to an input of Problem~\ref{prob:exist_universal_largerthan_k} consisting in an automaton with $\bigo(n^2)$ states over an alphabet of size $n$. This reduction also implies that, assuming that the Exponential Time Hypothesis ({\ETH}, for short) holds, there is no $2^{o(\sigma)}\poly(s,\sigma)$ time algorithm solving Problem~\ref{prob:exist_universal_largerthan_k} (as this would imply the existence of an $2^{o(n)}$ time algorithm solving {\HPP}); see \cite{LokshtanovMS11} for more details related to the {\ETH} and {\HPP}. 

Further, we consider Problems~\ref{prob:exist_universal_largerthan_k} and~\ref{prob:universal_forall_m} for the class $\CFL$, and we assume that, in both cases, we are given a {\CFL} $L$ by a 
% {\CFG} $G$ in $\CNF$, with $n$ non-terminals, over an alphabet $\Sigma$, with $\sigma$ letters. 
{\CFG} $G = (V, \Sigma, P, S)$ in {\CNF}, with $n$ non-terminals, over an alphabet $\Sigma$, with $\sigma$ letters, and an integer $k\geq 1$ (in binary representation).

To transfer the lower bound derived for Problem~\ref{prob:exist_universal_largerthan_k} in the case of {\REG} (specified as {\nfa}s) to the larger class of {\CFL}s, we recall the folklore result that a {\CFG} in {\CNF} can be constructed in polynomial time from an {\nfa} (by constructing a regular grammar from the {\nfa}, and then putting the grammar in \CNF, see \cite{HopcroftU79}). So, the same reduction from \cite{bib:UniReg} can be used to show that, assuming {\ETH} holds, there is no $2^{o(\sigma)}\poly(n,\sigma)$ time algorithm solving Problem~\ref{prob:exist_universal_largerthan_k}. This reduction shows also that Problem~\ref{prob:exist_universal_largerthan_k} is NP-hard; whether this problem is in NP remains open. 
% \looseness=-1

We now focus on the design of a $2^{\bigo(\sigma)}\poly(n,\sigma)$ time algorithm solving Problem~\ref{prob:exist_universal_largerthan_k} (which would also show that this problem is {\FPT}) and show that Problem~\ref{prob:universal_forall_m} can be solved in polynomial time. 
% So, from now on, we assume that we are given a {\CFG} $G = (V, \Sigma, P, S)$ in {\CNF} and an integer $k\geq 1$ (in binary representation). 
Let us recall that Problem~\ref{prob:universal_forall_m} requires deciding whether $\iotaE(L)$ is finite, and, if yes, Problem~\ref{prob:exist_universal_largerthan_k}  requires checking whether $\iotaE(L)\geq k$. 

We start by introducing a new concept which leads to a series of combinatorial observations. 

\begin{definition}
Let $G = (V,\Sigma,P,S)$ be a {\CFG}. A non-terminal $A\in V$ generates a \emph{1-universal cycle} if and only if there exists a derivation $A \Rightarrow^* w_1 A w_2$ of the grammar $G$ with $w_1,w_2\in \Sigma ^*$ and $\max(\iota(w_1),\iota(w_2)) \geq 1$.
\end{definition}
We can show the following result.

\begin{lemma}\label{lem:one_uni_cycle}
Let $G = (V, \Sigma, P, S)$ be a {\CFG} in {\CNF} and $L = L(G)$. Then $\iotaE(L)$ is infinite if and only if there exists a non-terminal $X \in V$ such that $X$ generates a $1$-universal cycle. 
\end{lemma}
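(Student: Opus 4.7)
The plan splits along the two directions of the equivalence.

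Easy direction ($\Leftarrow$): Suppose some non-terminal $X \in V$ generates a $1$-universal cycle, so that $X \Rightarrow^\ast w_1 X w_2$ with, say, $\iota(w_1) \geq 1$. Because $G$ is in {\CNF}, every non-terminal is useful, so I may pick strings $u,v,y \in \Sigma^\ast$ with $S \Rightarrow^\ast u X v$ and $X \Rightarrow^\ast y$. Iterating the cycle $k$ times yields $S \Rightarrow^\ast u w_1^k X w_2^k v \Rightarrow^\ast u w_1^k y w_2^k v \in L$. Since $\iota(x y) \geq \iota(x) + \iota(y)$ for all strings (arches of $x$ remain arches of $xy$, and the remaining suffix $r(x)y$ still carries at least $\iota(y)$ arches), iteration yields $\iota(w_1^k) \geq k\iota(w_1) \geq k$ and hence $\iota(u w_1^k y w_2^k v) \geq k$. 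Thus $\iotaE(L) = \infty$.

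Hard direction ($\Rightarrow$): I would argue by contrapositive. Assume no non-terminal generates a $1$-universal cycle, and aim to prove $\iotaE(L)$ is finite. The pivotal structural observation is that, for each $X \in V$, the set $C(X) = \{(y_1,y_2) \mid X \Rightarrow^\ast y_1 X y_2\}$ is closed under coordinate-wise concatenation: from $(y_1,y_2),(y_1',y_2') \in C(X)$ one gets $X \Rightarrow^\ast y_1 X y_2 \Rightarrow^\ast y_1 y_1' X y_2' y_2$, so $(y_1 y_1', y_2' y_2) \in C(X)$. Under the hypothesis that every $(y_1,y_2) \in C(X)$ has $\iota(y_1) = \iota(y_2) = 0$, this closure forces $\bigcup_{(y_1,y_2) \in C(X)} \alph(y_1) \subsetneq \Sigma$ and likewise for the second coordinate: otherwise finitely many pumps could be composed into a single pump whose first (resp.\ second) coordinate is $1$-universal. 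Hence each non-terminal $X$ has letters $a_X, b_X \in \Sigma$ absent from every left- and right-pump it generates.

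To turn this into a bound on $\iotaE(L)$, I would take any $w \in L$ with derivation tree $T$ and perform an ancestor-descendant reduction: while some root-to-leaf path of $T$ contains two nodes sharing a label $X$, replace the higher $X$-subtree by the lower one. The deleted material is exactly an $X$-pump $(\alpha,\beta) \in C(X)$, which by hypothesis has $\iota(\alpha) = \iota(\beta) = 0$. The reduced tree $T^\ast$ has depth at most $|V|$ and therefore yields $w^\ast \in L$ of length at most $2^{|V|}$, giving $\iota(w^\ast) \leq 2^{|V|}$. The original $w$ is then recovered from $w^\ast$ by reinserting pumps, and the task reduces to bounding the $\iota$-growth from these reinsertions.

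The crux and main obstacle lies precisely in this last step: while several pumps of the same $X$ that accumulate at the same tree-position combine to an element of $C(X)$ and keep $\iota = 0$, pumps sitting at distinct positions or originating from distinct non-terminals can a priori be juxtaposed in $w$ to create new arches. To control this, I plan to induct on the DAG of strongly connected components of the non-terminal reachability graph: for sink SCCs, the subgrammar restricted to the SCC inherits the ``no universal cycle'' hypothesis, and the closure property together with the forbidden letters $a_X, b_X$ confine the reachable alphabets of nested pumping, yielding a direct $\iota$-bound; for higher SCCs the inductive hypothesis bounds the contribution coming from strictly lower SCCs, while outer pumping contributes only $\iota = 0$ material. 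Assembling these layered bounds along the reduced derivation tree $T^\ast$ should produce an explicit upper bound on $\iota(w)$ depending only on $|V|$ and $\sigma$, completing the contrapositive and hence the proof.
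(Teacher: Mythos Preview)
Your easy direction is correct and essentially matches the paper's argument.

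For the hard direction, the observation that $C(X)$ is closed under coordinatewise concatenation, and hence that each $X$ has a left-missing letter $a_X$ and a right-missing letter $b_X$, is correct and useful. However, from there your proposal is a plan rather than a proof: the SCC induction is only sketched, and you yourself flag the step ``assembling these layered bounds along the reduced derivation tree $T^\ast$'' as something that \emph{should} work. The sink-SCC base case can in fact be made precise (for $X$ in a sink SCC and any production $X\to YZ$, both $Y$ and $Z$ reach $X$, so every word of $L(G_Z)$ avoids $b_X$ and every word of $L(G_Y)$ avoids $a_X$, which forces $\iota\le 1$). The real difficulty is the inductive step: when $X\to YZ$ has, say, $Z$ in a strictly lower SCC, you lose the avoidance of $a_X$ on the $Y$-side, and you must somehow combine the inductive bound on $L(G_Z)$ with only partial alphabet information on the other side, across all the branching of $T^\ast$. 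How these partial bounds compose is exactly what is not spelled out, so as written this is a genuine gap.

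The paper sidesteps all of this with a short induction on $|V|$ that never attempts an explicit bound. Assuming $\iotaE(L)=\infty$ but no $1$-universal cycle, remove $S$: for each $A\neq S$ let $G_A$ be the subgrammar with start symbol $A$ and all productions involving $S$ deleted. If some $\iotaE(L(G_A))=\infty$, the induction hypothesis finishes. Otherwise set $N=1+\max_A\iotaE(L(G_A))$, take $w\in L$ with $\iota(w)\ge 2N+3$, and locate the \emph{deepest} occurrence of $S$ in its derivation tree. The context of that deepest $S$ is itself a single $S$-cycle $(v_S,v'_S)$, hence $\iota(v_S)=\iota(v'_S)=0$, so the word $w'$ derived from that $S$ satisfies $\iota(w')\ge 2N+1$. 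Below it there is no further $S$; one step $S\to AB$ gives $w'=w_Aw_B$ with $w_A\in L(G_A)$ and $w_B\in L(G_B)$, and one of these has $\iota\ge N$, contradicting the choice of $N$. This replaces your entire SCC machinery with two paragraphs.
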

\begin{proof}
Assume we have a non-terminal $A\in V$ which generates a \emph{1-universal cycle}. This means that there exists a derivation $A \Rightarrow^* w_1 A w_2$ with $w_1,w_2\in \Sigma ^*$ and $\iota(w_1) \geq 1$ or $\iota(w_2) \geq 1$. As $G$ is in {\CNF}, we have that there exist $w'_A,w''_A\in \Sigma^*$ and the derivation $S\Rightarrow^* w'_A Aw''_A$, and, also, that there exists $w_A\in \Sigma^*$ such that $A\Rightarrow^* w_A$. We immediately get that, for all $n\geq 1$, the following derivation is valid: $S\Rightarrow^* w'_A Aw''_A \Rightarrow^* w'_A w_1 A w_2 w''_A \Rightarrow^* w'_A (w_1)^2 A (w_2)^2 w''_A \Rightarrow^* w'_A (w_1)^n A (w_2)^n w''_A \Rightarrow^* w'_A (w_1)^n w_A (w_2)^n w''_A=w$. As $\iota(w_1) \geq 1$ or $\iota(w_2) \geq 1$, it follows that $\iota(w)\geq n$. So, $\iotaE(L)$ is infinite. 

We now show the converse implication. More precisely, we show by induction on the number of non-terminals of $G$ that if $\iotaE(L(G))$ is infinite then $G $ has at least one useful non-terminal $X \in V$ such that $X$ has a $1$-universal cycle. For this induction proof, we can relax the restrictions on $G$: more precisely, we still assume that the set $P$ of productions of $G$ fulfils $P \subseteq V \times (V^2 \cup \Sigma)$ but do not require that every non-terminal of $G$ is useful; it suffices to require the starting symbol to be useful.
% \looseness=-1

The result is immediate if $G$ has a single non-terminal, i.e., the start symbol $S$. We now assume that our statement holds for {\CFL}s generated by grammars with at most $m$ non-terminals, and assume that $L$ is a {\CFL} generated by a {\CFG} $G$ with $m+1$ non-terminals. We want to show that $G$ has at least one useful non-terminal $X \in V$ such that $X$ has a $1$-universal cycle. We can assume, w.l.o.g., that $S$ does not have a $1$-universal cycle (otherwise, the result already holds). 

Now, consider for each useful $A\in V\setminus \{S\}$ the {\CFG} (which fulfills the requirements of our statement) $G_A=(V\setminus \{S\}, \Sigma, A, P')$, where $P'$ is obtained from $P$ by removing all productions involving $S$. Clearly, if there exists some $A\in V$ such that $\iotaE(L(G_A))$ is infinite, then, by induction, $G_A$ contains a useful non-terminal $X \in V$ such that $X$ has a $1$-universal cycle. As $G_A$ is obtained from $G$ by removing some productions and one non-terminal, it is clear that $X$ also has a $1$-universal cycle in $G$ and is also useful in $G$, so our statement holds. Let us now assume, for the sake of a contradiction, that, for each useful $A\in V$, there exists an integer $N_A\geq 1$ such that $\iotaE(L(G_A))\leq N_A$. Take $N=1+\max\{N_A\mid A\in V\}$. As $\iotaE(L)$ is infinite, there exists a string $w \in L(G)$ with $\iota(w)\geq 2N+3$. Since $w \in L(G)$, $S \Rightarrow^* w$ holds. 

Let $T_S$ be the derivation tree of $w$ with root $S$ and note that all non-terminals occurring in $T_S$ are useful. Let $p$ the longest simple path of $T_S$ starting in $S$ and having the end-node $S$ (in the case when there are more such paths, we simply choose one of them). We denote by $T^p_S$ the sub-tree of $T_S$ rooted in the end-node of $p$. If $w^\p$ is the string obtained by reading the leaves of $T^p_S$ left-to-right, then we have the following derivation corresponding to $T_S$: $S\Rightarrow v_S S v^\p_S \Rightarrow^*  v_S w^\p v^\p_S=w$, where $v_S v^\p_S\in \Sigma^*$. Since, by our assumption, $S$ does not have a $1$-universal cycle, we get that $\iota(v_S)=0$, $\iota(v^\p_S) = 0$, and that $\iota (w^\p)\geq 2N+1$.

Further, we consider $T^p_S$, and note that no other node of this tree, except the root, is labelled with $S$. Assume that the first step in the derivation $S\Rightarrow^* w^\p$ is $ S\Rightarrow AB$, for some non-terminals $A,B\in V$ and production $S\rightarrow AB$, and that the children of the root $S$ in the tree $T^p_S$ are the sub-trees $T_A$ and $T_B$. Let $w_A$ be the border of $T_A$ and $w_B$ be the border of $T_B$. Clearly, it follows that at least one of the strings $w_A$ and $w_B$ is $N$-universal. We can assume, w.l.o.g., that $\iota(w_A)\geq N$. But $w_A\in L(G_A)$ and $\iotaE(L(G_A))<N$ (by the definition of $N$). This is a contradiction with our assumption that $\iota(L(G_X))$ is finite, for all $X\in V$. So, there exists $X\in V$ for which $\iotaE(L(G_X))$ is infinite and, as we have seen, this means that our statement holds.
\end{proof}

So, according to Lemma~\ref{lem:one_uni_cycle}, if the $\CFG$ $G$, which is the input of our problem, contains at least one non-terminal $X \in V$ which has a $1$-universal cycle, we answer positively the instances of Problems~\ref{prob:exist_universal_largerthan_k} and~\ref{prob:universal_forall_m} defined by $G$ and, in the case of Problem~\ref{prob:exist_universal_largerthan_k}, additionally by an integer $k\geq 1$. Next, we show that one can decide in polynomial time whether such a non-terminal exists in a grammar. However, if $G$ does not contain any non-terminal with a $1$-universal cycle, while the instance of Problem~\ref{prob:universal_forall_m} is already answered negatively, it is unclear how to answer Problem~\ref{prob:exist_universal_largerthan_k}. To address this, we try to find a way to efficiently construct a string of maximal universality index, and, for that, we need another combinatorial result.

\begin{lemma}\label{lem:depthExists}
Let $G = (V, \Sigma, P, S)$ be a {\CFG} in {\CNF}, with $|V|=n$, $|\Sigma|=\sigma$, and $L = L(G)$. Furthermore, assume $\iotaE(L)$ is finite. There exists a string $w$ of $L$ with $\iota(w)=\iotaE(L)$ such that the derivation tree of $w$ has depth at most $4 n\sigma$.
\end{lemma}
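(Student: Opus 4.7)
The plan is to argue by contradiction using a standard pumping-down surgery on derivation trees. Fix $w\in L$ with $\iota(w)=K:=\iotaE(L)$ and a derivation tree $T$ for $w$ chosen of minimum depth among all such pairs $(w,T)$; we aim to show $\mathrm{depth}(T)\le 4n\sigma$.

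Suppose, for contradiction, that $\mathrm{depth}(T)>4n\sigma$. A longest root-to-leaf path of $T$ contains more than $4n\sigma$ internal nodes, so by the pigeonhole principle some non-terminal $A\in V$ occurs on this path at nodes $u_1,\dots,u_t$ (listed from top to bottom) with $t>4\sigma$. Each consecutive pair induces a sub-derivation $A\Rightarrow^*\alpha_i A\beta_i$ with $\alpha_i,\beta_i\in\Sigma^*$, for $i\in[t-1]$. Since $\iotaE(L)$ is finite, Lemma~\ref{lem:one_uni_cycle} rules out any $1$-universal cycle at any non-terminal of $G$, hence $\iota(\alpha_i)=\iota(\beta_i)=0$, i.e., $\alph(\alpha_i)\subsetneq\Sigma$ and $\alph(\beta_i)\subsetneq\Sigma$ for every $i$.

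The heart of the argument is to exhibit an index $i\in[t-1]$ for which the tree obtained from $T$ by replacing the subtree rooted at $u_i$ with the (strictly smaller) subtree rooted at $u_{i+1}$ still derives a word $w'$ with $\iota(w')=K$; since the replacement strictly decreases the depth, this contradicts the minimality of $T$. To study the effect of such a cycle-collapse, we examine its interaction with the arch factorisation $w=ar_1\cdots ar_K r$: because $\alph(\alpha_i)$ and $\alph(\beta_i)$ are proper subsets of $\Sigma$, neither factor can contain an entire arch, so each intersects at most two consecutive arches of $w$. A short case analysis of ``factor contained in one arch'' versus ``factor straddling two arches'' shows that deleting one such factor decreases $\iota$ by at most two, so a full cycle-collapse costs at most four arches in total.

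The main obstacle is then the pigeonhole step that turns ``at most four lost'' into ``zero lost'' for some~$i$. To each cycle $C_i$ we associate a short signature recording the at-most-two arches of $w$ that $\alpha_i$ and, respectively, $\beta_i$ intersect, together with witnesses $a_i\notin\alph(\alpha_i)$ and $b_i\notin\alph(\beta_i)$. The subtrees rooted at $u_1,\dots,u_t$ are nested, so the intervals they occupy in $w$ form a monotonically shrinking chain; consequently, as $i$ grows, the arch-ranges touched by the $\alpha_i$'s move monotonically rightwards and those touched by the $\beta_i$'s monotonically leftwards, which collapses the effective range of distinct signatures to $\mathcal{O}(\sigma)$. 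With $t>4\sigma$ available cycles, the pigeonhole then produces an $i$ whose cycle-collapse preserves all $K$ arches (intuitively, because the letter $a_i$ missing from $\alpha_i$, together with the analogous slack for $\beta_i$, shields the affected arch boundaries from collapse), yielding the sought contradiction. Formalising this final pigeonhole, and verifying that the factor $4$ suffices, is the technical crux of the argument.
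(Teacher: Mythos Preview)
Your overall architecture matches the paper's: take a witness of maximal universality, find a non-terminal $A$ repeated more than $4\sigma$ times along a long path, invoke Lemma~\ref{lem:one_uni_cycle} to forbid $1$-universal cycles, and excise one cycle without losing any arch. Two gaps remain.

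A minor one: minimising the \emph{depth} of $T$ does not yield the contradiction you claim. Replacing the subtree at $u_i$ by that at $u_{i+1}$ shortens every root-to-leaf path through $u_i$, but other paths of the same maximal length may survive unchanged, so the depth of $T$ need not strictly drop. The paper instead iterates the surgery, using the total length of all root-to-leaf paths as the strictly decreasing measure; alternatively you could minimise $|w|$.

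The real gap is the step you yourself flag as the ``technical crux'': you do not carry out the pigeonhole that produces a removable index, and the signature/monotonicity sketch does not visibly deliver the constant~$4$. Crucially, you only exploit $\iota(\alpha_i)=0$ for each individual $i$, whereas Lemma~\ref{lem:one_uni_cycle} applied to the \emph{composite} cycle $A\Rightarrow^* (\alpha_1\cdots\alpha_{t-1})\, A\, (\beta_{t-1}\cdots\beta_1)$ gives the far stronger fact $\iota(\alpha_1\cdots\alpha_{t-1})=0$, i.e.\ $\alph(\alpha_1\cdots\alpha_{t-1})\subsetneq\Sigma$ (and symmetrically on the right). With this in hand, the paper replaces signatures by a direct count: scan the $\alpha_i$ left to right, maintaining the set $U$ of letters seen in the current incomplete arch (initialised with $\alph(r(p))$, where $p$ is the prefix of $w$ preceding $\alpha_1$); call $i$ \emph{left-relevant} if $\alph(\alpha_i)\not\subseteq U$. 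Because the entire left block misses a letter of $\Sigma$, the set $U$ is reset to $\emptyset$ at most once, and each relevant step strictly enlarges $U$, so at most $2\sigma-2$ indices are left-relevant; symmetrically at most $2\sigma-2$ are right-relevant. With $t-1\ge 4\sigma$ indices and at most $4\sigma-4$ relevant ones, some $i$ is irrelevant on both sides; for that $i$, every letter of $\alpha_i$ (resp.\ $\beta_i$) already occurs earlier in the arch containing it, so deleting $\alpha_i$ and $\beta_i$ leaves the arch factorisation of $w$ intact and $\iota$ is preserved. Your per-cycle bound (``each factor meets at most two arches, at most four lost'') and the signature idea are detours; the concatenation fact is what makes the count close with the factor~$4$.
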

\begin{proof}
Let $w_0 \in L$ be a string such that $\iota(w_0)=\iotaE(L)$, and let $T_0$ be its derivation tree. Assume that $T_0$ has depth greater than $4n\sigma$. Then there exists a simple-path $p$ in $T_0$ from the root to a leaf of length at least $4n\sigma+1$ (i.e., contains $4n\sigma +2$ nodes on it). By the pigeonhole-principle, there is one non-terminal $A\in V$ which occurs at least $4\sigma $ times on this path. Therefore, there exists the derivation $S\Rightarrow^* v_0 A v'_0 \Rightarrow^* v_0v_1 Av'_1 v'_0\Rightarrow^* \ldots \Rightarrow^* v_0v_1\cdots v_{4\sigma -1} Av'_{4\sigma-1} \cdots v'_1v'_0 \Rightarrow^*  v_0v_1\cdots v_{4\sigma-1} w'_0v'_{4\sigma-1} \cdots v'_1v'_0 =w_0$, with $v_0,v'_0,\ldots,v_{4\sigma-1},v'_{4\sigma-1},w'_0\in \Sigma^*$.

As $\iotaE(L)$ is finite, by Lemma~\ref{lem:one_uni_cycle}, $A$ has no $1$-universal cycle, so $\iota(v_1\cdots v_{4\sigma-1})=\iota(v'_{4\sigma-1}\cdots v'_{1})=0$. 

We now go with $i$ from $1$ to $4\sigma-1$ and construct a set $M_\ell$ as follows. 
For this we use of the rest of the arch factorization of a word $r(\cdot)$, which is the suffix not associated with any of the arches of the respective word.
We maintain a set $U$, which is initialized with $\alph(r(v_0))$; we also initialize $M_\ell=\emptyset$. Then, when considering $i$, if $\alph (v_i)\not\subseteq U$, we let $U\leftarrow U\cup \alph(v_i)$ and $M_\ell \leftarrow M_\ell \cup\{i\}$; before moving on and repeating this procedure for $i+1$, if $U=\Sigma$, we set $U\leftarrow \emptyset$. 
Let us note that, during this process, because $\iota(v_1\cdots v_{4\sigma-1})=0$, we set $U\leftarrow \emptyset$ at most once. Also, since $M_\ell$ is updated only when $\alph (v_i)\not\subseteq U$, it means that $M_\ell$ is updated at most $2\sigma - 2$ times. So $|M_\ell|\leq 2\sigma-2$.

Similarly, to construct a set $M_r$, for $i$ from $4\sigma-1$ downto $1$,  we maintain a set $U$, which is initialized with $ \alphabet(r(v_0 v_1\cdots v_{4\sigma-1}w'_0))$; we also initialize $M_r=\emptyset$. Then, when considering $i$, if $\alph (v'_i)\not\subseteq U$, we let $U\leftarrow U\cup \alph(v'_i)$ and $M_r \leftarrow M_r \cup\{i\}$; before moving on and repeating this procedure for $i-1$, if $U=\Sigma$, we set $U\leftarrow \emptyset$. As before, we get that $M_r$ is updated at most $2\sigma - 2$ times, and $|M_r|\leq 2\sigma-2$.

It is worth noting that the indices stored in $M_\ell$ and $M_r$ indicate the strings $v_i$ and $v'_i$, respectively, which contain letters that are relevant when computing the arch factorization of $w_0$. The indices not contained in these sets indicate strings $v_i$ or $v'_i$, respectively, which are simply contained in an arch, and all the letters of these strings already appeared in that arch before the start of $v_i$ and $v'_i$, respectively.  
% \looseness=-1

As $|M_\ell|+|M_r|\leq 4\sigma-4$, we get that there exists $i\in[1,4\sigma ]$ such that $i\notin M_\ell\cup M_r$. 
Note that the derivation
$ S\Rightarrow^* v_0 A v'_0 \Rightarrow^* v_0v_1 Av'_1 v'_0\Rightarrow^* \ldots \Rightarrow^* v_0v_1\cdots v_{i-1} Av'_{i-1} \cdots v'_1v'_0 \Rightarrow^* v_0\cdots v_{i-1}v_{i+1} Av'_{i+1}v'_{i-1} \cdots v'_0
 \Rightarrow^*  v_0\cdots v_{i-1}v_{i+1} \cdots v_{4\sigma-1} w'_0v'_{4\sigma-1} \cdots  v'_{i+1}v'_{i-1} \cdots v'_0 =w_1 $
%%%% Old derivation of w_1:
% \[
% S\Rightarrow^* v_0 A v'_0 \Rightarrow^* v_0v_1 Av'_1 v'_0\Rightarrow^* \ldots \Rightarrow^* v_0v_1\cdots v_{i-1} Av'_{i-1} \cdots v'_1v'_0 
% \]
% \[
% \Rightarrow^* v_0v_1\cdots v_{i-1}v_{i+1} Av'_{i+1}v'_{i-1} \cdots v'_1v'_0
%  \Rightarrow^*  v_0v_1\cdots v_{i-1}v_{i+1} \cdots v_{4\sigma-1} w'_0v'_{4\sigma-1} \cdots  v'_{i+1}v'_{i-1} \cdots v'_1v'_0 =w_1
% \]
%
%It is now immediate that we can produce a word $w_1$ with the same universality index as $w_0$ by using the same derivation as for $w_0$ but skipping the cycle $A\Rightarrow v_iAv'_i$.
%
produces a string $w_1$ such that $\iota (w_1) =\iota (w_0)$; 
let $T_1$ be the tree corresponding to this derivation. Clearly, the total length of the simple-paths connecting the root to leaves in the derivation tree $T_1$ is strictly smaller than the total length of the simple-paths connecting the root to leaves in the tree $T_0$. If $T_1$ still has root-to-leaf simple-paths of length at least $4n\sigma$, we can repeat this process and obtain a tree where the total length of the simple-paths connecting the root to leaves is even smaller. This process is repeated as long as we obtain trees having at least one root-to-leaf simple-path of length at least $4n\sigma$. Clearly, this is a finite process, whose number of iterations is bounded by, e.g., the sum of the length of root-to-leaf simple-paths of $T_0$. When we obtain a tree $T$ where all root-to-leaf simple paths are of length at most $4n\sigma$, we stop and note that the border of this tree is a string $w$, with $\iota(w)=\iotaE(L)$. This concludes our proof.
\end{proof}

We now come to the algorithmic consequences of our combinatorial lemmas. For both considered problems the language given as input is in the form of a {\CFG} $G=(V, \Sigma, P, S)$ in {\CNF} with $|\Sigma| = \sigma\geq 2$ and $|V| = n$. Firstly, we show that Problem~\ref{prob:universal_forall_m} can be decided in polynomial time.

\begin{theorem}\label{thm:prob_universal_forall_m}
Problem~\ref{prob:universal_forall_m} can be solved in $\bigo(\max(n^3,n^2\sigma))$ time.
% , for a machine with $n$ states.
\end{theorem}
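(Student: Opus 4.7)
The plan is to reduce \cref{prob:universal_forall_m} to a graph question via \cref{lem:one_uni_cycle} and solve it using strongly connected component (SCC) decomposition, with careful bookkeeping of the $\sigma$-bit alphabet vectors. By \cref{lem:one_uni_cycle}, the answer is \emph{yes} iff some non-terminal $X\in V$ admits a derivation $X\Rightarrow^* w_1 X w_2$ (in at least one step) with $\alph(w_1)=\Sigma$ or $\alph(w_2)=\Sigma$. I first observe that self-derivations at $X$ compose: substituting a derivation $X\Rightarrow^* v_1 X v_2$ into the $X$-leaf of a derivation $X\Rightarrow^* u_1 X u_2$ yields $X\Rightarrow^* u_1 v_1 X v_2 u_2$. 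Hence $X$ admits such a derivation with $\alph(w_1)=\Sigma$ iff, for every letter $a\in\Sigma$, some self-derivation of $X$ places $a$ in the left context; symmetrically for the right.

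I would next introduce the derivation graph $G_D=(V,E_D)$ in which each production $A\to BC\in P$ contributes a ``left-turn'' edge $A\to B$ (whose sibling $C$ contributes $\mathrm{Alph}(C)$ to the right context) and a ``right-turn'' edge $A\to C$ (whose sibling $B$ contributes $\mathrm{Alph}(B)$ to the left context), where $\mathrm{Alph}(Y)=\bigcup\{\alph(w)\mid Y\Rightarrow^* w,\,w\in\Sigma^*\}$. Every self-derivation $X\Rightarrow^* w_1 X w_2$ yields a closed walk at $X$ in $G_D$ (reading the non-terminals along the root-to-$X$ path of the derivation tree), which must lie entirely inside the SCC of $X$; conversely, any such closed walk can be realised as a derivation by picking terminal derivations for the sibling non-terminals. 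Together with the composition observation, $X$ admits a left-$1$-universal self-derivation iff $\mathrm{LeftAlph}(S)=\Sigma$, where $S$ is the SCC of $X$ and
\[
\mathrm{LeftAlph}(S)=\bigcup\{\mathrm{Alph}(B)\mid A\to BC\in P,\ A,C\in S\};
\]
$\mathrm{RightAlph}(S)$ is defined symmetrically. The task reduces to checking whether some SCC reaches $\Sigma$ on either side.

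The algorithm proceeds in four stages: (i) compute the transitive closure of $G_D$ to obtain, for each $A$, the reachable set $R(A)=\{B\mid A\Rightarrow^*\alpha B\beta\text{ for some sentential form}\}$, and then set $\mathrm{Alph}(A)=\bigcup_{B\in R(A)\cup\{A\}}\{a\mid B\to a\in P\}$; (ii) compute the SCCs of $G_D$ with Tarjan's algorithm; (iii) scan all productions once, recording for each SCC $S$ the set $B_S^L=\{B\mid \exists\,A\to BC\in P\text{ with }A,C\in S\}$ and, analogously, $B_S^R$; (iv) compute $\mathrm{LeftAlph}(S)=\bigcup_{B\in B_S^L}\mathrm{Alph}(B)$ and $\mathrm{RightAlph}(S)$ similarly by OR-ing the relevant $\mathrm{Alph}(\cdot)$ vectors, and answer \emph{yes} iff some SCC reaches $\Sigma$ on either side.

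The main delicate point is keeping the $\sigma$-factor out of the leading term. Transitive closure costs $\bigo(n^3)$; aggregating $\mathrm{Alph}$ costs $\bigo(n^2\sigma)$ since each of the $n$ non-terminals scans at most $n$ reachable non-terminals at $\bigo(\sigma)$ per OR; and the production scan in step~(iii) costs $\bigo(|P|)\subseteq \bigo(n^3)$. Step~(iv) naively looks like $\bigo(|P|\cdot\sigma)=\bigo(n^3\sigma)$, but since $|B_S^L|\leq n$ and there are at most $n$ SCCs, one has $\sum_S|B_S^L|\leq n^2$, so the total OR-work across all SCCs is only $\bigo(n^2\sigma)$. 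Summing yields the claimed $\bigo(\max(n^3,n^2\sigma))$.
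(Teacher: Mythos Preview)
Your argument is correct and attains the stated bound. Both your proof and the paper's start from \cref{lem:one_uni_cycle} and the same ``composition'' observation: $X$ has a left-$1$-universal self-derivation iff, for each $a\in\Sigma$, \emph{some} self-derivation of $X$ places $a$ in the left context. From there the two diverge. The paper works per non-terminal: after computing the reachability matrix $M$ and the alphabet matrix $M'$ (your $\mathrm{Alph}$), it sets $L[A][B]=1$ when some production $A\to BC$ has $M[C][A]=1$, and then $L'[A][a]=1$ when such a $B$ has $a\in\mathrm{Alph}(B)$; finally it asks whether some $A$ has $L'[A][a]=1$ for all $a$. You instead recognise that ``some self-derivation of $X$ puts $a$ on the left'' is an SCC invariant and compute $\mathrm{LeftAlph}(S)$ once per SCC. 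This buys you two things. First, it is structurally cleaner and lets you bound the total OR-work by $\sum_S|B_S^L|\cdot\sigma\le n^2\sigma$ without a per-vertex loop. Second, and more subtly, your SCC-level union collects the left contributions from \emph{every} right-turn edge inside the cycle, whereas the paper's computation of $L'[A][a]$ literally inspects only the first step $A\to BC$ of a self-derivation; the letter $a$ may well enter the left context at a deeper step $D\to ED'$ with $D\neq A$ on the root-to-$X$ path (e.g.\ take $A\to BC$, $C\to DA$, $B\to b$, $D\to a$ over $\{a,b\}$), and your aggregation over the whole SCC is exactly what is needed to catch this. Both approaches share the $\bigo(n^3)$ transitive-closure bottleneck and the $\bigo(n^2\sigma)$ cost of assembling the $\mathrm{Alph}$ vectors.
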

\begin{proof}
By Lemma~\ref{lem:one_uni_cycle}, it is enough to check whether $G$ contains a non-terminal $X \in V$ such that $X$ has a $1$-universal cycle. More precisely, we want to check if there exists a non-terminal $X$ such that $X\Rightarrow^* w X w'$, where $\alph(w)=\Sigma$ or $\alph(w')=\Sigma$. 

In the following, we show how to decide if there exists a non-terminal $X$ such that $X\Rightarrow^* w X w'$, where $\alph(w)=\Sigma$ (the case when $\alph(w')=\Sigma$ being similar). Recall that all non-terminals of $G$ are useful (they can be reached from the starting symbol, and a terminal string can be derivated from them). We further note that such a non-terminal $X \in V$ exists if and only if $G$ contains, for some non-terminal $X$, derivations $X\Rightarrow^* w_a X w'_a$, with $w_a\in \Sigma^*a\Sigma^*$ and $w'_a\in \Sigma^*$, for all $a\in \Sigma$. 

For this we construct a series of data structures. 

First, we construct an $n\times n$ matrix $M$, indexed by the non-terminals of $G$, where $M[A][B]=1$, for some $A,B\in V$, if and only if there exists a derivation $A\Rightarrow^* \alpha B \beta$, where $\alpha, \beta\in \Sigma^*$; otherwise, $M[A][B]=0$. This matrix can be trivially computed in $\bigo(|V|^3)$. Basically, we define a relation $R$ over $V$, where $(A,B)\in R$ if and only if $A\rightarrow BC$ or $A\rightarrow CB$ are productions of $G$ (for some non-terminal $C$). Constructing this relation takes, in the worst case $\bigo(n^3)$ time, where $n$ is the number of non-terminals. Then, we compute the transitive closure of this relation, using the Floyd-Warshall algorithm, which takes $\bigo(n^3)$ time.

Secondly, we construct an $n\times \sigma $ matrix $M'$, indexed by the non-terminals and terminals of $G$, respectively, where $M'[A][a]=1$, for some $A\in V$ and $a\in \Sigma$, if and only if there exists a derivation $A\Rightarrow^* \alpha a \beta$, where $\alpha, \beta\in \Sigma^*$; otherwise, $M'[A][a]=0$. This matrix can be trivially computed in $\bigo(n^2 \sigma )$. Basically, we first initialize all elements of $M'$ with $0$. Then, we set $M'[A][a]\gets 1$ whenever $A\rightarrow a$ is a production of $G$. Then, for all $A,B\in V$ and $a\in \Sigma$, if $M[A][B]=1$ and $B\rightarrow a$ is a production of $G$, we set $M'[A][a]\gets 1$ as well. 

The correctness of the constructions above is straightforward (given that a terminal string can be constructed from every non-terminal).

Thirdly, we construct an $n\times n$ matrix $L$, indexed by the non-terminals of $G$, where $L[A][B]=1$, for some $A,B\in V$, if and only if there exist a production $A\rightarrow BC$ in $G$ and a derivation $C\Rightarrow^* \alpha A \beta$, where $\alpha, \beta\in \Sigma^*$; otherwise, $L[A][B]=0$. This can be computed in $\bigo(n^3)$ time, in the worst case. We initialize all elements of $L$ with $0$ and, then, for every production $A\rightarrow BC$, we set $L[A][B]\gets 1$ if and only if $M[C][A]=1$. Again, the correctness of the construction is immediate.

Finally,  we construct an $n\times \sigma $ matrix $L'$, indexed by the non-terminals and terminals of $G$, respectively, where $L'[A][a]=1$, for some $A\in V$ and $a\in \Sigma$, if and only if there exists a derivation $A\Rightarrow^* \alpha a \beta A \gamma $, where $\alpha, \beta, \gamma \in \Sigma^*$; otherwise, $L'[A][a]=0$. This can be computed in $\bigo(n^2 \sigma )$ time, in the worst case. We initialize all elements of $L'$ with $0$. Then, for every non-terminal $A\in \Sigma$ and every terminal $a\in V$, we set $L'[A][a]\gets 1$ if and only if there exists a non-terminal $B$ with $M'[B][a]=1$ (i.e., there exists a derivation $B\Rightarrow^* \alpha a \beta'$)  and $L[A][B]\gets 1$ (i.e., there exists a production $A\rightarrow BC$ in $G$ and a derivation $C\Rightarrow^* \beta'' A \gamma$). The correctness of the construction follows from the explanations given alongside it.

After $L'$ is computed, we conclude that there exists a non-terminal $X \in V$ for which we have derivations $X\Rightarrow^* w_a X w'_a$, with $w_a\in \Sigma^*a\Sigma^*$ and $w'_a\in \Sigma^*$, for all $a\in \Sigma$, if and only if there exists such a non-terminal $X$ with $L'[X][a]=1$, for all $a\in \Sigma$. This can be checked in $\bigo(n\sigma)$ time.

In conclusion, our approach uses $\bigo(\max(n^3,n^2\sigma))$ time to check whether $G$ contains a non-terminal $X \in V$ such that $X$ has a $1$-universal cycle, and the statement follows. 
\end{proof}

Further, we show that Problem~\ref{prob:exist_universal_largerthan_k} is {\FPT} w.r.t.\ the parameter $\sigma$; this also means that the respective problem is solvable in polynomial time for constant-size alphabets. Recall that there is an {\ETH}-conditional lower bound of $2^{o(\sigma)}\poly(n,\sigma)$ for the time complexity of algorithms solving this problem.

\begin{theorem}\label{thm:prob_exist_universal_largerthan_k}
    Problem~\ref{prob:exist_universal_largerthan_k} can be solved in $\bigo{( 2^{4\sigma} n^5 \sigma^2 )}$ time.
\end{theorem}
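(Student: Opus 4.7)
The plan is to first handle the infinite case and then run a bounded-depth dynamic program for the finite case. By Theorem~\ref{thm:prob_universal_forall_m} we can decide in polynomial time whether $\iotaE(L(G)) = \infty$; if so, $L(G)$ contains a $k$-universal string for every $k$ and we answer YES. Otherwise, Lemma~\ref{lem:depthExists} guarantees that $\iotaE(L(G))$ is attained by some $w\in L(G)$ whose derivation tree has depth at most $4n\sigma$, so it is enough to compute $\iotaE(L(G))$ by considering only derivations of bounded depth and then compare the result with the input $k$.

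The key abstraction is an arch-completion transducer $\tau_w : 2^\Sigma \to \N_0 \times 2^\Sigma$ associated to each string $w$. Given an input set $S \subsetneq \Sigma$, interpreted as the alphabet already accumulated by a partial arch preceding $w$, we scan $w$ greedily: whenever the accumulated alphabet becomes $\Sigma$ we increment a counter $k$ and reset the accumulator; then $\tau_w(S) = (k, T)$ with $T$ the final accumulator. Two properties make this useful: first, $\tau_w(\emptyset) = (\iota(w), \alphabet(r(w)))$, so $\iota(w)$ is recovered by evaluating $\tau_w$ at $\emptyset$; second, $\tau$ composes, namely $\tau_{uv}(S) = (k_u + k_v, T_v)$ whenever $(k_u, T_u) = \tau_u(S)$ and $(k_v, T_v) = \tau_v(T_u)$.

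Building on this, I would maintain a table $M_d[A][S][T]$ for each non-terminal $A \in V$, each pair of subsets $S, T \subsetneq \Sigma$, and each depth $d \in [0, 4n\sigma]$, storing the maximum $k$ such that some derivation $A \Rightarrow^* w$ of tree-depth at most $d$ satisfies $\tau_w(S) = (k, T)$, or $-\infty$ if no such derivation exists. The base case $d=0$ handles terminal productions $A \to a$ directly via $\tau_a(S)$, which equals $(0, S \cup \{a\})$ if $S\cup\{a\}\ne\Sigma$ and $(1, \emptyset)$ otherwise. For $d > 0$ the recurrence is
$$M_d[A][S][T] = \max\left(M_{d-1}[A][S][T],\ \max_{A \to BC \in P,\ T_1 \subsetneq \Sigma} \bigl(M_{d-1}[B][S][T_1] + M_{d-1}[C][T_1][T]\bigr)\right).$$
By Lemma~\ref{lem:depthExists} together with the first property of $\tau$, one obtains $\iotaE(L(G)) = \max_T M_{4n\sigma}[S][\emptyset][T]$ (with $S$ the start symbol), which is then compared with $k$. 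A coarse accounting gives $O(n \cdot 4^\sigma)$ entries per depth-level, $O(n^2 \cdot 2^\sigma)$ work per entry, $O(n\sigma)$ depth-levels, and arithmetic on integers of $O(n\sigma)$ bits (since the counters are bounded by $2^{O(n\sigma)}$, as per the depth bound and the RAM-model conventions of the preliminaries), which fits the claimed single-exponential-in-$\sigma$ and polynomial-in-$n$ budget.

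The main obstacle I anticipate is justifying soundness of the DP given that only one maximum value is kept per triple $(A, S, T)$, rather than the full set of achievable $(k, T)$ pairs for each $(A, S)$. The justification hinges on the additivity of arch counts under composition (property two of $\tau$): to maximise $k_1 + k_2$ where $(k_1, T_1)$ comes from $B$ with input $S$ and $(k_2, T)$ from $C$ with input $T_1$, one may independently maximise over the ``middle state'' $T_1$, which is precisely what the recurrence does. A secondary care-point concerns arithmetic on potentially very large counters; since each DP update performs only one addition and there are $O(n\sigma)$ depth-levels, the total overhead remains within the claimed bound.
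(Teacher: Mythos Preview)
Your proposal is correct and takes essentially the same approach as the paper: handle the infinite case via Theorem~\ref{thm:prob_universal_forall_m}, invoke Lemma~\ref{lem:depthExists} to cap derivation-tree depth at $4n\sigma$, and then run a bottom-up dynamic program indexed by depth, non-terminal, and alphabet-subsets to compute $\iotaE(L)$. The only difference is the DP state: the paper stores, for each $(i,A,S_p,S_s)$, the maximum number of complete arches that can be sandwiched between a prefix of alphabet $S_p$ and a suffix of alphabet $S_s$, and its merge step therefore iterates over \emph{four} subsets; your transducer abstraction $\tau_w$ fixes a canonical output state per input state and composes associatively, so a \emph{single} intermediate subset $T_1$ suffices---a cleaner encoding that actually shaves the exponent from $2^{4\sigma}$ to $2^{3\sigma}$, well within the stated bound.
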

\begin{proof}
Recall that now we also get as input a positive integer $k$ (given in binary representation). 

To solve Problem~\ref{prob:exist_universal_largerthan_k}, we first check, using the algorithm from Theorem~\ref{thm:prob_universal_forall_m}, whether $\iotaE(L)$ is finite. If $\iotaE(L)$ is infinite, then we answer the given instance of the problem positively. Otherwise, we proceed as follows.

We use a dynamic programming approach, to compute the maximal universality index of a string of $L$. This essentially uses the result of Lemma~\ref{lem:depthExists} which states that such a string is the border of a derivation tree of depth at most $N=4n\sigma$. 

We construct a $4$-dimensional matrix $M[\cdot,\cdot,\cdot,\cdot]$, with elements $M[i,A,S^A_p,S^A_s]$ with $A \in V$, $S^A_p, S^A_s \subsetneq \Sigma$ and $i \leq N$. By definition, $M[i,A,S^A_p,S^A_s]=\ell$ if $\ell$ is the maximum number with the property that there exists a string $w$, which labels the border of a derivation tree of height at most $i$ rooted in $A$, so that $w$ starts with a prefix $x$, with $\alph(x) = S^A_p$, followed by $\ell$ arches, and a suffix $y$ with $\alph(y) = S^A_s$.

We explain how the elements of this matrix are computed. We initialize all entries with $-\infty$. 

% \todo[inline]{This part needs a bit of rewriting before submitting to Arxiv.}
In the base case, for $i=1$ we only need to consider direct productions $(A,a) \in P$, for $A \in V$ and $a \in \Sigma$. We set 
$M[1, A, \{a\}, \emptyset] \gets 0 \text{ and } M[1, A, \emptyset, \{a\}] \gets 0.$ 

This means that we consider the case where the symbol $a$ is in the prefix and the case where it is in the suffix. Clearly, in each of these cases, the other set is empty, as we only have one letter. Accordingly, the entry itself is also $0$, because a symbol alone does not form an arch.

In the inductive step, we have that the depth of the considered derivation trees is at most $i>1$. We want to compute elements of the form $M[i, A, S^A_p, S^A_s]$. Our algorithm considers several cases (implemented as successive phases), and simply sets $M[i, A, S^A_p, S^A_s]$ to be the maximum in all these cases. Initially, all elements $M[i, A, S^A_p, S^A_s]$ are set to $M[i-1, A, S^A_p, S^A_s]$.

Our algorithm iterates first over productions $A\rightarrow BC$.
For a production $A\rightarrow BC$ we analyse the trees whose topmost level is defined by the respective production. Hence, we combine derivation of trees of height at most $i-1$, with roots $B$ and $C$ respectively, to obtain derivation trees of height $i$. This computation is structured as follows.

We iterate over all subsets $S_1,S_2,S_3,S_4\subsetneq \Sigma$. 

Assume first that $M[i-1,B,S_1,S_2]=0$ and $M[i-1,C,S_3,S_4]=0$ and $S_1\cup S_2\cup S_3\cup S_4 \subsetneq \Sigma$. Then, if $M[i,A,S_1\cup S_2\cup S_3\cup S_4,\emptyset]=-\infty$, we set $M[i,A,S_1\cup S_2\cup S_3\cup S_4,\emptyset]=0$, and, if $M[i,A,\emptyset, S_1\cup S_2\cup S_3\cup S_4]=-\infty$, we set $M[i,A,\emptyset, S_1\cup S_2\cup S_3\cup S_4]=0$. Basically, we can see the border of the derivation tree we have just obtained (rooted in $A$) as an empty word with $0$ arches preceded (respectively, succeeded)  by a prefix (respectively, a suffix) which contains the entire border-word. That is, similarly to the case $i=1$, we either move all letters into the prefix or suffix of the border. 
Further, we deal with different splits of the border according to the sets $S_1, S_2, S_3$, and $S_4$, retaining the property that we only create trees whose borders have $0$ arches, consisting in the empty word, preceded and succeeded by prefixes and, respectively, suffixes which are not $1$-universal. So, the following cases are considered. If  $M[i-1,B,S_1,S_2]=0$ and $M[i-1,C,S_3,S_4]=0$ and $S_1\cup S_2\cup S_3 \subsetneq \Sigma$, and $M[i,A,S_1\cup S_2\cup S_3, S_4]=-\infty$, we set $M[i,A,S_1\cup S_2\cup S_3, S_4]=0$. 
If $M[i-1,B,S_1,S_2]=0$ and $M[i-1,C,S_3,S_4]=0$ and $S_1\cup S_2\subsetneq \Sigma$, $S_3\cup S_4 \subsetneq \Sigma$, and $M[i,A,S_1\cup S_2, S_3\cup S_4]=-\infty$, we set $M[i,A,S_1\cup S_2, S_3\cup S_4]=0$. 
Finally, if $M[i-1,B,S_1,S_2]=0$ and $M[i-1,C,S_3,S_4]=0$ and $S_2\cup S_3\cup S_4 \subsetneq \Sigma$, and  $M[i,A,S_1, S_2\cup S_3\cup S_4]=-\infty$, we set $M[i,A,S_1, S_2\cup S_3\cup S_4]=0$. 

Note that the elements of $M[i,\cdot,\cdot,\cdot]$ set in this step might still be updated in the following. Basically, we have just considered the case when joining two trees with $0$ arches leads to a tree with $0$ arches. The case when we can join two trees of height at most $i-1$ whose borders have $0$ arches, and obtain a tree of height $i$ whose border has one arch is also considered in the following.

Once the above analysis is performed, we proceed as follows.

If $S_2\cup S_3\subsetneq \Sigma$, we set $T=M[i-1,B,S_1,S_2]+M[i-1,C,S_3,S_4]$, and, if $T>M[i,A,S_1,S_4]$, we update $M[i,A,S_1,S_4]=T$. In this case, we have simply joined the two subtrees (of roots $B$ and $C$), counted their respective arches, and assumed, on the one hand, that no new arch appears, and that the prefix (respectively, suffix) preceding (respectively, succeeding) the sequence of arches has alphabet $S_1$ (respectively, $S_4$). If, moreover, $S_1\cup S_2\cup S_3\subsetneq \Sigma$ and $M[i-1,B,S_1,S_2]=0$, then we set $T=M[i-1,C,S_3,S_4]$ and if $T>M[i,A,S_1\cup S_2\cup S_3,S_4]$, we update $M[i,A,S_1\cup S_2\cup S_3,S_4]=T$. In this case, we have joined the two subtrees (of roots $B$ and $C$), counted the arches originating in the subtree of root $C$, and prepended the border of the subtree of root $B$ to the prefix preceding the sequence of arches of the subtree rooted in $C$. Symmetrically, if, $ S_2\cup S_3\cup S_4\subsetneq \Sigma$ and $M[i-1,C,S_3,S_4]=0$, then we set $T=M[i-1,B,S_1,S_2]$ and if $T>M[i,A,S_1, S_2\cup S_3\cup S_4]$, we update $M[i,A,S_1, S_2\cup S_3\cup S_4]=T$. 

If $S_2\cup S_3= \Sigma$, we set $T=M[i-1,B,S_1,S_2]+M[i-1,C,S_3,S_4]+1$, and, if $T>M[i,A,S_1,S_4]$, we update $M[i,A,S_1,S_4]=C$. In this case, we have joined the two subtrees (of roots $B$ and $C$), counted their respective arches, and noticed that a new arch appears. The prefix (respectively, suffix) preceding (respectively, succeeding) the sequence of arches of the resulting tree has alphabet $S_1$ (respectively, $S_4$). 

In general, joining the two subtrees corresponding to $M[i-1,B,S_1,S_2]$ and $M[i-1,C,S_3,S_4]$, respectively, might lead to a tree with larger universality than $M[i-1,B,S_1,S_2]+M[i-1,C,S_3,S_4]+1$. But we do not need to address this situation explicitly: we always update the values $M[i,A,S^A_p,S^A_s]$ when we can obtain a tree with more arches and same prefix- and suffix-alphabet, so this situation will be covered by combining strings corresponding to $M[i-1,B,S'_1,S'_2]$ and $M[i-1,C,S'_3,S'_4]$ for some choice of the subsets $S'_1,S'_2,S'_3,S'_4$, other than the currently considered $S_1,S_2,S_3,S_4$.   

The inductive step is repeated for $i$ from $2$ to $4n\sigma$. Then, to determine $\iotaE(L)$, we take the maximum $Q$ over the entries of $M[4n\sigma, S, \emptyset , T]$, over all subsets $T\subsetneq \Sigma$, as we only consider strings $w$ that lie in $L$, so strings that can be derived from $S$. 

To solve the given instance of Problem~\ref{prob:exist_universal_largerthan_k}, we compare $Q$ with $k$. 

Accordingly, if 
$ k \leq Q=\max_{T\subseteq \Sigma } (M[4m \sigma, S, \emptyset , T])$, then we answer the respective instance positively. Otherwise, we answer it negatively. 

The correctness of the above algorithm is rather straightforward. To simplify the presentation, we say that a derivation tree with root $A$, of depth at most $i$, whose border $w$ starts with a prefix $x$, with $\alph(x) = R_1\subsetneq \Sigma$, followed by several ($0$ or more) $1$-universal words, and then a suffix $y$ with $\alph(y) = R_2\subsetneq \Sigma$, has signature $(A,i,R_1,R_2)$. 

In our algorithm, for some $i$, for some nonterminal $A$ and subsets $S^A_1,S^A_2$ of $\Sigma$, we construct in a bottom-up fashion the derivation trees rooted in $A$, of depth at most $i$, whose border $w$ starts with a prefix $x$, with $\alph(x) = S^A_1$, followed by as many $1$-universal words as possible, and a suffix $y$ with $\alph(y) = S^A_2$ (so the tree with signature $(A,i,S^A_1,S^A_2)$ with the maximum number of arches between its prefix with alphabet $S^A_1$ and its suffix with alphabet $S^A_2$). 

The construction is trivial for $i=1$: our tree corresponds to a production $A\rightarrow a$, it has $0$ arches, and either $S^A_1=\emptyset$ and $S^A_2=\{a\}$ or viceversa. 

For $i>1$, there are several cases to be analysed. The resulting tree is then obtained by taking the maximum over all these cases. 

A basic case which we need to consider is when we have that the tree with signature $(A,i,S^A_1,S^A_2)$ with the maximum number of arches between its prefix with alphabet $S^A_1$ and its suffix with alphabet $S^A_2$ is the same as the corresponding tree with signature $(A,i-1,S^A_1,S^A_2)$. Otherwise, we look for a tree with signature $(A,i,S^A_1,S^A_2)$ and depth exactly $i$. In this case, the root $A$ is rewritten by a production. Again, we must check all productions and see which leads to a tree with the desired signature, and choose the tree with the maximum number of arches, over all these possibilities. Let us now fix the production rewriting the root, namely $A\rightarrow BC$. The tree of depth $i$ with signature $(A,i,S^A_1,S^A_2)$ is obtained by joining a subtree of depth at most $i-1$ rooted in the node $B$ which has signature $(B,i-1,S_{1},S_2)$, and a subtree of depth at most $i-1$ rooted in the node $C$ which has signature $(C,i-1,S_3,S_4)$. Our algorithm considers all possible cases for these subtrees, by iterating over all choices of $S_1,S_2,S_3,S_4$, joining these trees, and computing the alphabet $S^A_1$ of the prefix and the alphabet $S^A_2$ of the suffix, as well as the number of arches of the resulting tree. The key observation is that the tree with signature $(A,i,S^A_1,S^A_2)$ with the maximum number of arches between its prefix with alphabet $S^A_1$ and its suffix with alphabet $S^A_2$ is obtained by joining some subtrees with signatures $(B,i-1,S_{1},S_2)$ and $(C,i-1,S_3,S_4)$, for some subsets $S_1,S_2,S_3$, and $S_4$, which have, respectively, a maximum number of arches between the prefix with alphabet $S_1$ (respectively, $S_3$) and the suffix with alphabet $S_2$ (respectively, $S_4$). Any other choice still leads to a tree with signature $(A,i,S^A_1,S^A_2)$, but a lower number of arches between its prefix with alphabet $S^A_1$ and its suffix with alphabet $S^A_2$. 

As far as the complexity of this approach is concerned, let us fix some $i \in [4\sigma n]$. For $i=1$, the algorithm runs in $\bigo{(n\sigma)}$, in the worst case. For $i > 1$, the inductive step of the dynamic programming (covering all cases and their subcases) runs in $\bigo{( n^4 2^{4\sigma} \sigma})$ time, in the worst case. 

Indeed, we consider all productions $A\rightarrow BC$. Then, we go over all choices of the four sets $S_1,S_2,S_3,S_4$, and use these together with the values of the elements $M[i-1, B, \cdot, \cdot]$ and $M[i-1, C, \cdot, \cdot]$, to determine/update the values of the entries $M[i, A, \cdot, \cdot]$. Depending on the case in which we are, we need to check if the union of some of these sets equals $\Sigma$ (which takes $\bigo (\sigma)$ time), and, in all cases, we need to sum up some elements of $M[i-1,\cdot,\cdot,\cdot]$ and compare them with the current elements stored in $M[i,\cdot,\cdot,\cdot]$. Now, we note that the numbers stored in the matrix $M$ can become too big to assume that such arithmetic operations and comparisons can be done in constant time. For instance, a string generated by a $\CFG$ in $\CNF$ with a derivation whose tree has depth $4n\sigma$ may have $\Theta(2^{4n\sigma})$ letters and therefore also $\Theta(2^{4n\sigma}/\sigma)$ arches, in the worst case. Working with such big numbers may add a factor $\bigo(n\sigma)$ to our complexity. So, to wrap this up, for some $i$, we need to iterate over all productions, and choice of sets $S_1,S_2,S_3,S_4\subseteq \Sigma$, and for each choice of these parameters, we do at most $\bigo(n\sigma)$ steps. This takes $\bigo{( n^4 2^{4\sigma} \sigma})$ time overall.

As already mentioned, we iterate this process $4n\sigma$ times, which leads to an overall time complexity of $\bigo{( 2^{4\sigma} n^5 \sigma^2 )}$ for our algorithm.
\end{proof}

The algorithm from Theorem~\ref{thm:prob_exist_universal_largerthan_k} uses exponential space (due to the usage of the matrix $M$). However, there is also a simple (non-deterministic) PSPACE-algorithm solving this problem. Such an algorithm constructs non-deterministically the left derivation (where, at each step, the leftmost non-terminal is rewritten) of a string $w\in L$ with $\iota(w)\geq k$; $w$ is non-deterministically guessed, and it is never constructed or stored explicitly by our algorithm. During this derivation of $w$ the number of non-terminals in each sentential form is upper bounded by the depth of its derivation tree~\cite{HopcroftU79}; due to Lemma~\ref{lem:depthExists}, we thus can have only $4n\sigma $ such non-terminals (if this number becomes larger, we stop and reject: the derivation tree of the guessed derivation is too deep for our purposes). During the simulation of the leftmost derivation, at step $i$, we also do not keep track of the maximal prefix $w'_i$ consisting only of terminals of the sentential form, but only of $\iota(w'_i)$, $\alph(r(w'_i))$, and of the maximal suffix $w''_i$ consisting of non-terminals only (i.e., the part we still need to process); this is enough for computing the universality of the derived string. The information stored by our algorithm clearly fits in polynomial space. If, and only if, at the end of the derivation, the maintained universality index is at least $k$, we accept the input grammar and number $k$. 
% \looseness=-1

 %%%%%%%%%%%%%%%%%%%%%%% 
 %%%%%%%%%%%%%%%%%%%%%%% 
 \section{Problem~\ref{prob:exist_nonuniversal}}\label{sec:Problem4}

Let us note that deciding Problem~\ref{prob:exist_nonuniversal} for some input language $L$ and integer $k$ is equivalent to deciding whether $\iotaA (L)\geq k$. In \cite{bib:UniReg}, it was shown that for a regular language $L$ over an alphabet with $\sigma$ letters, accepted by an NFA with $s$ states, Problem~\ref{prob:exist_nonuniversal} can be decided in $\bigo(s^3\sigma)$. 

For the rest of this section, we consider Problem~\ref{prob:exist_nonuniversal} for the class $\CFL$, and we assume that we are given a {\CFL} $L$ by a {\CFG} $G$ in $\CNF$, with $n$ non-terminals, over an alphabet $\Sigma$, with $\sigma\geq 2$ letters. Recall that our approach is to compute $\iotaA(L)$ and compare it with the input integer $k$. 

As before, we start with a combinatorial observation. Intuitively, when we try to find a word with the lowest universality index, it is enough to consider words $w$, whose derivation trees do not contain root-to-leaf paths which contain twice the same non-terminal (otherwise, such a tree could be reduced, to a derivation tree of a word with potentially lower universality index). 

\begin{lemma}\label{lem:depthForAll}
If $w\in L$ is a string with $\iota(w)\leq \iota(w'),$ for all $w'\in L$, then there exists a string $w''\in L$ with $\iota(w'')=\iota(w)$ and the derivation tree of $w''$ has depth at most $n$. 
\end{lemma}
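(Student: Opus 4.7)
The plan is to follow a standard pumping-down argument on the derivation tree of $w$, combined with the minimality assumption on $\iota(w)$. The starting point is the folklore observation that in any $\CNF$ derivation tree, every root-to-leaf simple path has exactly one leaf (a terminal) and all other nodes are non-terminals; hence if the depth of the tree exceeds $n=|V|$, such a path passes through more than $n$ non-terminal nodes and, by the pigeonhole principle, contains two occurrences of some non-terminal $A\in V$.

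Concretely, let $T$ be a derivation tree for $w$ with root $S$, and suppose $T$ has depth strictly greater than $n$. Pick a root-to-leaf path on which a non-terminal $A$ occurs at an ``upper'' node $u$ and a ``lower'' node $u'$ (with $u'$ a descendant of $u$). This yields a derivation of the form
\[
S \Rightarrow^* x_1 A x_2 \Rightarrow^* x_1 y_1 A y_2 x_2 \Rightarrow^* x_1 y_1 z y_2 x_2 = w,
\]
where $x_1,x_2,y_1,y_2,z\in\Sigma^*$ and $z$ is the border of the subtree rooted at $u'$. Replacing, inside $T$, the subtree rooted at $u$ by the subtree rooted at $u'$ produces a new derivation tree $T''$ whose border is $w'' = x_1 z x_2 \in L$. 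Since $w''$ is obtained from $w$ by deleting the two contiguous factors $y_1$ and $y_2$, we have $w''\leq w$, which by monotonicity of the universality index under subsequences yields $\iota(w'')\leq \iota(w)$.

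On the other hand, $w''\in L$ and $w$ was chosen so that $\iota(w)\leq \iota(w')$ for every $w'\in L$; in particular $\iota(w'')\geq \iota(w)$, so $\iota(w'')=\iota(w)$. Moreover, $T''$ has strictly fewer nodes than $T$, because the subtree at $u'$ is a proper subtree of the subtree at $u$. Iterating the contraction as long as the current derivation tree has depth exceeding $n$, we obtain a strictly decreasing sequence of tree sizes, so the process terminates; the final tree $T^\ast$ has depth at most $n$ and its border $w^\ast$ satisfies $w^\ast\in L$ and $\iota(w^\ast)=\iota(w)$, which is exactly the statement of the lemma.

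The only subtle point, and the one I would check carefully, is the implication $w''\leq w \Rightarrow \iota(w'')\leq \iota(w)$: it is immediate from the definitions (any universal set of subsequences of $w''$ is also a set of subsequences of $w$), but it is the one place where the argument genuinely uses the combinatorics of the universality index rather than purely grammatical manipulations. Everything else is a routine CNF pumping argument, and no additional machinery beyond the pigeonhole principle and the minimality of $\iota(w)$ is needed.
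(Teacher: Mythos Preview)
Your proof is correct and follows essentially the same approach as the paper: a standard pumping-down on a repeated non-terminal along a too-long root-to-leaf path, using that the contracted border is a subsequence of $w$ (hence has no larger universality index) together with the minimality of $\iota(w)$ to preserve equality, and iterating until the depth is at most $n$. The only cosmetic difference is the termination measure (you use node count, the paper uses path lengths), which is inconsequential.
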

\begin{proof}
%\todo{Appendix}
Let us consider a derivation tree for $w$. If this derivation tree of $w$ has depth greater than $n$, then there would be a simple-path in this tree, from the root to a leaf, which contains the same non-terminal $A$ twice. Hence, we have the derivation $S\Rightarrow^* v A v' \Rightarrow^* v u A u' v'\Rightarrow v u z u' v' = w$. But, then we have the derivation $S\Rightarrow^* v A v' \Rightarrow^*  v  z v' = w_0$, and, as $w_0$ is a subsequence of $w$ we have that $\iota(w_0)\leq \iota(w)$. If the derivation tree associated to this derivation has depth at most $n$, then we can take $w''=w_0$ and are done. Otherwise, we repeat the process with $w_0$ in the role of $w$ until we get a derivation whose associated tree has depth at most $n$ (this process is finite, as in each iteration of this process, we decrease the length of at least one path). 
\end{proof}

We now show that we can compute $\iotaA(L)$ in polynomial time, when the input language is a $\CFL$.
\begin{theorem}\label{thm:Problem4}
Problem~\ref{prob:exist_nonuniversal} can be solved in $\bigo(n^4\sigma^2)$ time.
\end{theorem}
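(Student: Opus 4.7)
The plan is to compute $\iotaA(L)$ by dynamic programming and compare it with $k$. Relying on the identity $\iota(w)+1=|\SAS(w)|=\min_{(a,b)\in(\Sigma\cup\{\varepsilon\})\times\Sigma}|\SAS_{a,b}(w)|$, valid for every word $w$, I obtain $\iotaA(L)=\min_{a,b}M[S][a][b]-1$, where $M[A][a][b]:=\min_{w\in L(A)}|\SAS_{a,b}(w)|$. The task therefore reduces to filling the table $M[\cdot][\cdot][\cdot]$, which has $\bigo(n\sigma^2)$ entries.

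The core combinatorial ingredient, and the main obstacle, is the following characterization: for any strings $u,w_B,w_C$ over $\Sigma$, $u\not\leq w_Bw_C$ if and only if $p(u,w_B)+q(u,w_C)<|u|$, where $p(u,w_B)$ is the length of the longest prefix of $u$ that is a subsequence of $w_B$ and $q(u,w_C)$ is the length of the longest suffix of $u$ that is a subsequence of $w_C$. I would prove this by noting that $u\leq w_Bw_C$ holds iff the ``admissible prefix splits'' $[0,p(u,w_B)]$ and the ``admissible suffix splits'' $[|u|-q(u,w_C),|u|]$ share a common index. From this, for every production $A\to BC$ I derive
\[
\min_{w_B\in L(B),\,w_C\in L(C)}|\SAS_{a,b}(w_Bw_C)|=\min_{c\in\Sigma}\bigl(M[B][a][c]+M[C][c][b]-1\bigr).
\]
The $\geq$ direction picks, in an optimal $u$ realized by witnesses $w_B,w_C$, the pivot $i^\ast=p(u,w_B)+1$ and letter $c=u[i^\ast]$: then $u[1..i^\ast]$ is absent from $w_B$ by maximality of $p$, and since $|u[i^\ast..|u|]|=|u|-p>q$, also $u[i^\ast..|u|]$ is absent from $w_C$. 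For the $\leq$ direction, I merge minimizers $u_L,u_R$ at their common pivot letter $c$ into $u:=u_L\cdot u_R[2..|u_R|]$ of length $|u_L|+|u_R|-1$, and verify $u\not\leq w_Bw_C$ for the respective witnesses by checking that for any split $u=u_1u_2$, either $|u_1|\geq|u_L|$ so $u_L$ is a prefix of $u_1$ (forcing $u_1\not\leq w_B$), or $|u_1|<|u_L|$ so $u_R$ is a suffix of $u_2$ (forcing $u_2\not\leq w_C$).

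The full recurrence $M[A][a][b]=\min$ over all productions of $A$ (handling direct productions $A\to x$ by a short case analysis of $|\SAS_{a,b}(x)|$ on whether $a$ and $b$ equal $x$) is iterated $n$ times starting from $M\equiv+\infty$. Convergence in $n$ rounds follows from an adaptation of Lemma~\ref{lem:depthForAll}: if $w_0\in L(A)$ is an $|\SAS_{a,b}|$-minimizer with derivation tree of depth greater than $n$, the same shortcut construction yields $w_1\in L(A)$ with $w_1\leq w_0$; since every subsequence of $w_1$ is also a subsequence of $w_0$, the absent subsequences of $w_0$ are absent from $w_1$, giving $|\SAS_{a,b}(w_1)|\leq|\SAS_{a,b}(w_0)|$. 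For the running time, I precompute for each pair of non-terminals $(B,C)$ the min-plus product $T_{B,C}[a][b]:=\min_c(M[B][a][c]+M[C][c][b]-1)$ in $\bigo(\sigma^3)$ time, then handle each binary production in $\bigo(\sigma^2)$; summing over productions and over the $n$ iterations yields the claimed $\bigo(n^4\sigma^2)$ bound.
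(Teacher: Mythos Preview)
Your proposal is correct and follows the same overall scheme as the paper: both compute $\iotaA(L)$ via a DP table indexed by non-terminals and endpoint-letter pairs $(a,b)$, storing the minimum $|\SAS_{a,b}(w)|$ over words $w$ derivable from each non-terminal, and both invoke (an adaptation of) Lemma~\ref{lem:depthForAll} to bound the number of iterations by $n$.

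The one notable difference is in the recurrence. The paper's table additionally requires the first letter $a$ to occur in the witness $w$, which forces a multi-case transition (separately handling $a\notin w_B$, the generic pivot case, and $b\notin w_C$, each guarded by tests of the form $M[\cdot,\cdot,\varepsilon,\cdot]=1$). By dropping that occurrence constraint, you collapse all of this into a single min-plus identity $\min_{c\in\Sigma}(M[B][a][c]+M[C][c][b]-1)$; the paper's special cases $m^1_{B,C}$ and $m^3_{B,C}$ are then absorbed automatically by taking $c=a$ (where $M[B][a][a]=1$ when $a\notin w_B$) and $c=b$ (where $M[C][b][b]=1$ when $b\notin w_C$). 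This buys you a cleaner argument at no cost.

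One small bookkeeping gap: your precomputation of all $T_{B,C}$ costs $\Theta(n^2\sigma^3)$ per round, i.e.\ $\Theta(n^3\sigma^3)$ overall, which your final sentence omits and which is not bounded by $\bigo(n^4\sigma^2)$ when $\sigma>n$. The paper's own accounting of the inner $\min_x$ is similarly loose, so this is a shared imprecision rather than a flaw specific to your argument.
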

\begin{proof}
To compute $\iotaA(L)$, we compute the smallest integer $\ell$ for which there exists a string $w\in L$ which has an absent subsequence of length $\ell$ (and then conclude that $\iotaA(L)=\ell-1$). Indeed, this holds because we want to compute the universality index of a word $w\in L$ with the smallest universality, which, as the shortest absent subsequence of some word $x$ has length $\iota(x)+1$, is equivalent to computing the length of the shortest absent subsequence of some word $w\in L$. 

Our approach to computing $\iotaA(L)$ is, thus, to define a $4$-dimensional matrix $M$ whose elements are $M[i,A,a,b]$, with $i\in [n]$, $A\in V$, $a\in \Sigma\cup\{\varepsilon\}, b\in \Sigma$. By definition, $M[i,A,a,b]=\ell$ if and only if $\ell $ is the smallest integer for which:
\begin{itemize}
    \item there exists $w\in \Sigma^*$ such that $a$ occurs in $w$, $A\Rightarrow^* w$ and this derivation has an associated tree of depth at most $i$, and 
    \item there exists a string $v$ of length $\ell$ starting with $a$ and ending with $b$ such that $v$ is an absent subsequence of $w$.
\end{itemize}
That is, $v$ is a shortest string which starts with $a$ and ends with $b$ and is an absent subsequence for some string of the language. This means that $\ell$ is the length of the shortest word which is $\SAS_{a,b}(w)$, over all $w\in \Sigma^*$ which can be derived from $A$ by a derivation tree of depth at most $i$.

The elements of the matrix $M$ are computed as follows. 

Initially, all the elements of this matrix are set to $+\infty$. 

In the base case, for $i=1$, for a production $A\rightarrow a$, we set, on the one hand, $M[1,A,a,b] \gets 2$, for all $b\in \Sigma$ (as $ab$ is not a subsequence of $a$, clearly, and there are no shorter strings which start with $a$ and are not a subsequence of $a$) and, on the other hand, $M[1,A,\varepsilon,b] \gets 1$, for all $b\in \Sigma\setminus \{a\}$ (as $b$ is not a subsequence of $a$, clearly, and there are no shorter strings which start with $\varepsilon$ and are not a subsequence of $a$). The rest of the elements of the form $M[1,\cdot,\cdot,\cdot]$ remain equal to $+\infty$, as we have considered all derivation trees of depth $1$. 

In the inductive step, we consider some $i>1$ and we compute the elements of the form $M[i,\cdot,\cdot,\cdot]$, based on the elements $M[i-1,\cdot,\cdot,\cdot]$.

Let us fix the non-terminal $A$, and we explain how the elements $M[i,A,\cdot,\cdot]$ are computed. 

Clearly, $M[i,A,a,b]\leq M[i-1,A,a,b]$, for all $A\in V, i\in [n], a\in \Sigma\cup\{\varepsilon\}, b\in \Sigma$; that is, alongside the trees considered in the previous step, we need to consider in this step the derivation trees of depth exactly $i$, where $i>1$. To this end, note that a tree of depth $i>1$ with root $A$ corresponds to a derivation that starts with a production $A\rightarrow BC \in P$, for some $B,C \in V$.

Moreover, we note that, for $C\in V, c\in \Sigma$, $M[i-1,C,\varepsilon,c]=1$ if and only if there is some word which can be derived from $C$, with a derivation whose tree has depth $i-1$, and contains no $c$. 

Let us now focus on the computation of $M[i,A,a,b]$, for some $a\in \Sigma\cup\{\varepsilon\}$ and $b\in \Sigma$.

We first assume that $a=\varepsilon$. We need to consider several cases and we refer to the definition of $M[i,A,a,b]$ given above. Basically, a string $w$ that fits the respective definition, and is the border of a tree of depth $i>1$, must be of the form $w=w_1w_2$, such that there exists a production $A\rightarrow BC\in P$ and $w_1$ is derived from $B$ and $w_2$ is derived from $C$. We do not have any restriction on the starting letter of $w$, so, in the first case, an $\SAS_{\varepsilon,b}(w)$ could be obtained as the concatenation of an $\SAS_{\varepsilon,x}(w_1)$ from which we remove the final letter $x$, for some letter $x\in \alph(w_2)$, and an $\SAS_{x,b}(w_2)$ (note that $x$ can also be equal to $b$). In the second subcase, which only occurs when $b\notin\alph(w_2)$, an $\SAS_{\varepsilon,b}(w)$ can be obtained as an $\SAS_{\varepsilon,b}(w_1)$ (as such a string is also absent from $w_1w_2$). Clearly, one of these two cases must occur - there is no other possibility to define an $\SAS_{\varepsilon,b}(w)$. We can now compute $M[i,A,\varepsilon,b]$.

So, for each production $A\rightarrow BC\in P$, we proceed as follows. 

Firstly, we define $f^1_{B,C} \gets  \min\limits_{x \in \Sigma, M[i-1,C,\varepsilon,x]>1} ( M[i-1,B,\varepsilon,x] + M[i-1,C,x,b]-1)$. This corresponds to the first case above.

Then, if $M[i-1,C,\varepsilon,b]=1$, we define $f^2_{B,C} \gets M[i-1,B,\varepsilon,b] $; otherwise, we set$f^2_{B,C} \gets +\infty$. This corresponds to the case when $w_2$ contains no letter $b$, the second case from the discussion above. 

Finally, we set $M[i,A,a,b]\gets \min (\{M[i-1,A,a,b]\} \cup \{f^1_{B,C},f^2_{B,C}\mid A\rightarrow BC\in P\})$. The correctness of this formula follows from the explanations given above.

Further, assume $a\neq \varepsilon$. We again need to consider several cases and we refer to the definition of $M[i,A,a,b]$ given above. Basically, a string $w$ that fits the respective definition, and is the border of a tree of depth $i>1$, must be of the form $w=w_1w_2$, such that there exists a production $A\rightarrow BC\in P$ and $w_1$ is derived from $B$ and $w_2$ is derived from $C$. There are two main cases that may occur. Firstly, $w_1$ does not contain any $a$; then, an $\SAS_{a,b}(w)$ can only be obtained as an $\SAS_{a,b}(w_2)$. Secondly, if $w_1$ contains $a$, then we have again two subcases. In the first such subcase, an $\SAS_{a,b}(w)$ could be obtained as the concatenation of an $\SAS_{a,x}(w_1)$ from which we remove the final letter $x$, for some letter $x\in \alph(w_2)$, and an $\SAS_{x,b}(w_2)$ (note that $x$ can also be equal to $b$). In the second subcase, which only occurs when $b\notin\alph(w_2)$, an $\SAS_{a,b}(w)$ can be obtained as an $\SAS_{a,b}(w_1)$ (as such a string is also absent from $w_1w_2$). Clearly, one of these two cases must occur - there is no other possibility to define an $\SAS_{a,b}(w)$. Based on these cases, we can proceed to compute $M[i,A,a,b]$.

So, for each production $A\rightarrow BC\in P$, we proceed as follows.

Firstly, if $M[i-1,B,\varepsilon,a]=1$, we define $m^1_{B,C} \gets  M[i-1,C,a,b]$; otherwise, we set $m^1_{B,C}=+\infty$. This corresponds to the first case from the discussion above, when $w_1$ contains no letter $a$. 

In the next two cases we have $M[i-1,B,\varepsilon,a]>1$. 

We define $m^2_{B,C} \gets \min\limits_{x \in \Sigma, M[i-1,C,\varepsilon,x]>1} M[i-1,B,a,x] + M[i-1,C,x,b]-1$; this corresponds to the case when $w_1$ contains letter $a$, and $w_2$ contains letter $x$, the first subcase of the second case from the discussion above. 
Finally, if $M[i-1,C,\varepsilon,b]=1$, we define $m^3_{B,C} \gets M[i-1,B,a,b]$; otherwise, we define $m^3_{B,C} \gets +\infty$. This last case corresponds to the situation when $w_1$ contains letter $a$, and $w_2$ contains no letter $b$, the remaining subcase of the second case from the discussion above. 

Finally, we set $M[i,A,a,b]\gets \min (\{M[i-1,A,a,b]\} \cup \{m^1_{B,C},m^2_{B,C},m^3_{B,C}\mid A\rightarrow BC\in P\})$. The correctness of this formula follows from the explanations given above.

We repeat this inductive step, in which the elements $M[i,\cdot,\cdot,\cdot]$ are computed, for $i\in [n]$, following the result of \cref{lem:depthForAll}. 

Once all elements of $M$ are computed, we note that $\iotaA(L)$ is obtained by subtracting $1$ from the minimum element $min$ of the form $M[n,S,a,b]$, with $a\in \Sigma \cup \{\varepsilon\}$ and $b\in \Sigma$. Indeed, the respective element $min$ is the length of the shortest string $v$ for which there exists $w\in L$ such that $v$ is absent from $w$, and, if $|v|\geq 1$ then $a=v[1]$ and $b=v[min]$, while if $|v|= 1 $ then $a=\varepsilon$ and  $b=v[min]$. Clearly, $\iota(w)=min-1$, and there is no other string in $L$ with a lower universality index. This concludes the proof of the correctness for our dynamic programming approach.

The complexity of the dynamic programming algorithm described is $\bigo(n^4 \sigma^2)$ (as we have $n$ iterations, in each iteration we compute $n\sigma^2$ elements, and for each of these elements we have to iterate over $n^2$ pairs of non-terminals).
\end{proof}

 \section{What Next? Conclusions and First Steps Towards Future Work}
\label{sec:dfawtl}
    
A conclusion of this work is that the complexity of the approached problems is, to a certain extent, similar when the input language is from the classes {\REG} and {\CFL} and they all become undecidable for {\CSL}. So, a natural question is whether there are classes of languages (defined by corresponding classes of grammars or automata) between {\REG} and {\CSL} which exhibit a different, interesting behaviour. 

We commence here this investigation by considering the class of languages accepted by a model of automata, namely, the deterministic finite automata with translucent letters (or, for short, translucent (finite) automaton -- {\dfawtl}), which generalizes the classical {\dfa} by allowing the processing of the input string in an order which is not necessarily the usual sequential left-to-right order (without the help of an explicit additional storage unit). These automata, first considered in~\cite{NagOtt11} (see also the survey~\cite{Ott23} for a discussion on their properties and motivations), are strictly more powerful than classical finite automata
% , with their extra capabilities stemming from a so-called \emph{translucent transition} that is present in their description, 
and are part of a class of automata-models that are allowed to jump symbols in their processing, e.g., see~\cite{JFA} or~\cite{ChiFazYam16}. From our perspective, these automata and the class of languages they accept are interesting because, on the one hand, they seem to be a generalization of regular languages which is orthogonal to the classical generalization provided by context-free languages, and, on the other hand, initial results suggest that the problems considered in this paper, not only become harder for them, but also their decidability fills the gap between the polynomial time solubility in the case of {\CFL}s and that of undecidability for the class of {\CSL}.

So, in what follows, we discuss some problems from Section~\ref{sec:Problems} in relation to the {\dfawtl} model, following the formalization from~\cite{MiPaPaSC24}.

\begin{definition} 
    A \dfawtl\ $M$ is a tuple $M=(Q,\Sigma,q_0,F,\delta)$, just as in the case of DFA. However, the processing of inputs is not necessarily sequential. We define the partial relation $\circlearrowright$ on the set $Q\times \Sigma^*$ of configurations of $M$: $(p,xay) \circlearrowright_M (q,xy)$ if $\delta(p,a)=q$,
    and $\delta(p,b)$ is not defined for any $b\in alph(x)$, where $p,q\in Q$, $a,b\in \Sigma$, $x, y\in \Sigma^*$. The subscript $M$ is omitted when it is understood from the context. The reflexive and transitive closure of $\circlearrowright$ is $\circlearrowright^*$ and the language accepted by $M$ is defined as
    $L(M)=\{w\in \Sigma^*\mid (q_0, w)\circlearrowright^* (f,\varepsilon) \mathrm{\ for\ some\ } f\in F\}.$
    \end{definition}
    In this model, letters $a$ such that $\delta(p,a)$ is not defined are called translucent for $p$, hence the name of the model. The machine reads and erases from the tape the letters of the input one-by-one. Note that the definition requires that every letter of the input is read before it can be accepted. This is slightly different from the original definition~\cite{NagOtt11}, which did not require all of the letters read, and used an unerasable endmarker on the tape. \dfawtl\ by our definition can be trivially simulated by a machine with the original definition, and our results stand for the original model, too. We chose to follow the definitions in~\cite{MiPaPaSC24}, because in our opinion it is simpler (and simpler to argue), and illustrates the difficulty of the subsequence matching problems for nonsequential machine models just as well.
    
    A first observation is that, in terms of execution, in each step a {\dfawtl} reads (and consumes) the leftmost unconsumed symbol which allows a transition (i.e., that has not been previously read, and there is a transition labeled with it from the current state). Therefore, for every individual letter, the order of the processing of its occurrences in the {\dfawtl} is that in which they appear in a string. The non-deterministic version of this automata model accepts all rational trace languages, and all accepted languages have semi-linear Parikh images. Moreover, the class of languages accepted by this model is incomparable to the class of {\CFL}, while still being {\CS}. The class of languages accepted by the more restrictive deterministic finite automata with translucent letters, for short {\dfawtl}, strictly includes the class {\REG} and is still incomparable with {\CFL} and the above mentioned class of rational trace languages. The recent survey~\cite{Ott23} overviews the extensive literature regarding variations of these types of machines.

\begin{example}
    The \dfawtl\ in Figure~\ref{fig:dfawtl_ex} accepts the language $L=w\shuffle h(w)$, where $w\in \{a,b\}^*$ and $h:\{a,b\}^* \rightarrow \{c,d\}^*$ is a morphism given by $h(a)=c$, $h(b)=d$. Here $\shuffle$ denotes the usual shuffle operation for words over some alphabet $\Sigma$, i.e., $u\shuffle v=\{u_1v_1\cdots u_\ell v_\ell\mid u=u_1\cdots u_{\ell}, v=v_1\cdots v_\ell, u_i\in \Sigma^*$ for $i\in [\ell], v_i\in \Sigma^*$ for $i\in [\ell]\}$; in our case, $\Sigma = \{a,b,c,d\}$. 
    
    Coming back to the \dfawtl\ in Figure~\ref{fig:dfawtl_ex}: in state $q_0$, the machine can read only the first $a$ or $b$ remaining on the tape and immediately matches it with the first $c$ or $d$, respectively. If it reads $a$ and in the remaining input the first $d$ comes before the first $c$, it goes into the sink state. Similarly, if it reads $b$ but the first remaining $c$ is before the first remaining $d$, it goes to sink, because the projection of the input to the $\{a,b\}$ alphabet does not match the projection to the $\{c,d\}$ alphabet. The language $L$ is not context-free. This can be, indeed, seen by intersecting it with the regular language $(a+b)^* (c+d)^*$, which yields the language $\{w\cdot h(w) \mid w\in \{a,b\}^* \}$, a variant of the so called `copy language'. This language is non-context-free, by an easy application of the Bar-Hillel pumping lemma, so $L$ is not context-free.

    \begin{figure}
        \centering
        \begin{tikzpicture}[scale=0.2]
\tikzstyle{every node}+=[inner sep=0pt]
\draw [black] (13.3,-27.9) circle (3);
\draw (13.3,-27.9) node {$q_0$};
\draw [black] (13.3,-27.9) circle (2.4);
\draw [black] (26.6,-21.2) circle (3);
\draw (26.6,-21.2) node {$q_1$};
\draw [black] (26.6,-35.2) circle (3);
\draw (26.6,-35.2) node {$q_2$};
\draw [black] (39.9,-27.9) circle (3);
\draw (39.9,-27.9) node {$sink$};
\draw [black] (6.5,-27.9) -- (10.3,-27.9);
\fill [black] (10.3,-27.9) -- (9.5,-27.4) -- (9.5,-28.4);
\draw [black] (15.98,-26.55) -- (23.92,-22.55);
\fill [black] (23.92,-22.55) -- (22.98,-22.46) -- (23.43,-23.36);
\draw (20.89,-25.05) node [below] {$a$};
\draw [black] (14.494,-25.16) arc (147.82332:85.6509:9.968);
\fill [black] (14.49,-25.16) -- (15.34,-24.75) -- (14.5,-24.22);
\draw (17.51,-21.06) node [above] {$c$};
\draw [black] (15.93,-29.34) -- (23.97,-33.76);
\fill [black] (23.97,-33.76) -- (23.51,-32.93) -- (23.03,-33.81);
\draw (20.95,-31.05) node [above] {$b$};
\draw [black] (23.698,-35.91) arc (-85.09945:-152.42287:9.712);
\fill [black] (14.26,-30.73) -- (14.19,-31.67) -- (15.07,-31.21);
\draw (17.2,-35.25) node [below] {$d$};
\draw [black] (29.28,-22.55) -- (37.22,-26.55);
\fill [black] (37.22,-26.55) -- (36.73,-25.74) -- (36.28,-26.64);
\draw (34.24,-24.05) node [above] {$d$};
\draw [black] (29.23,-33.76) -- (37.27,-29.34);
\fill [black] (37.27,-29.34) -- (36.33,-29.29) -- (36.81,-30.17);
\draw (34.19,-32.05) node [below] {$c$};
\end{tikzpicture}
        \caption{\dfawtl\ that accepts the language $w\shuffle h(w)$, where $w\in \{a,b\}^*$ and $h$ is a morphism of the form $h(a)=c$, $h(b)=d$.}
        \label{fig:dfawtl_ex}
    \end{figure}
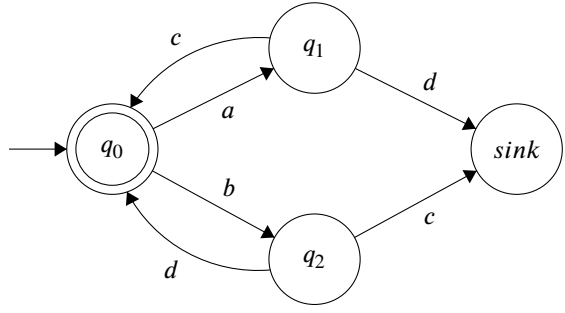
\end{example}

%%%%

We first note that the class of languages accepted by {\dfawtl} becomes incomparable to that of {\CFL}s only starting from the ternary alphabet case (see~\cite{NagOtt13}), since, for a {\dfawtl} over a binary alphabet, one can construct a push-down automaton accepting the same language.

\begin{theorem}\label{thm:dfawtl-CF}
    The languages accepted by {\dfawtl} over binary alphabets are {\CF}.
\end{theorem}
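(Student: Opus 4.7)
The plan is to give a direct PDA construction that simulates any {\dfawtl} $M = (Q, \{a,b\}, q_0, F, \delta)$ on a binary alphabet, which establishes that $L(M)$ is context-free. The PDA $P$ reads its input left to right and uses its stack as a buffer to hold the letters that $M$ has skipped so far, i.e., input letters that were translucent for $M$ at the moment they were read by $P$. Concretely, $P$'s control states are those of $M$ and its transitions are: (i) while the top-of-stack symbol is a letter $\tau\in\{a,b\}$ with $\delta(q,\tau)$ defined, $P$ performs an $\varepsilon$-move that pops $\tau$ and transitions to $\delta(q,\tau)$, thereby simulating $M$ consuming a previously-buffered letter from its tape; (ii) otherwise $P$ reads the next input letter $\sigma$ and, if $\delta(q,\sigma)$ is defined, consumes $\sigma$ by moving to $\delta(q,\sigma)$ without changing the stack, or, if $\delta(q,\sigma)$ is undefined while $\delta(q,\sigma')$ is defined for the other letter $\sigma'$, pushes $\sigma$ onto the stack and stays in $q$; if neither transition is defined, $P$ rejects. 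Acceptance is by final state after the input is exhausted and all stack symbols (other than an initial bottom marker) have been drained via $\varepsilon$-moves of type (i).

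The central combinatorial claim to prove, by induction on the number of steps, is that throughout any computation of $P$ the stack content is a monochromatic word $\tau^k$ for some $\tau\in\{a,b\}$ and $k\geq 0$. Indeed, the only way the stack could accumulate two distinct letters would be to push some $\sigma$ when the current top is a different letter $\tau$. However, a push under rule (ii) only occurs when $\delta(q,\sigma)$ is undefined, and by construction rule (i) was not triggered so $\delta(q,\tau)$ is also undefined; since the alphabet has only two letters, this would mean that $\delta(q,\cdot)$ is totally undefined, so $P$ must reject rather than push.

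Granted the invariant, correctness follows because the {\dfawtl} rule picks the leftmost unconsumed letter for which $\delta$ is defined, which in $P$'s configuration corresponds to the bottom of the stack when the stack is non-empty; since all stack letters share the same type, popping the top yields exactly the same state transition as popping the bottom, so $P$'s LIFO access is faithful to $M$'s left-to-right semantics. The main work is in formalising this correspondence: one needs to maintain, as an inductive invariant along the simulation, that if $P$ reaches configuration $(q, X, \alpha)$ (state $q$, stack content $X$, remaining input $\alpha$), then $M$ reaches the configuration $(q, x\alpha)$ where $x$ is the string obtained by reading the stack from bottom to top, and conversely every reachable $M$-configuration is attained by some $P$-configuration. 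The main obstacle, beyond carefully verifying this bookkeeping, is ensuring that the interplay between $\varepsilon$-moves and input-reads is scheduled so that $P$ precisely mimics the deterministic leftmost-readable-letter choice of $M$; the priority scheme above (fire rule (i) whenever possible before rule (ii)) makes this scheduling unambiguous and hence yields $L(P)=L(M)$, proving that $L(M)$ is in {\CF}.
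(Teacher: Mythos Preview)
Your proof is correct, and the central insight (the stack content stays monochromatic, which is exactly what the binary-alphabet restriction buys) is the same as in the paper.  The implementations, however, are dual.

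In the paper's construction the PDA stores on its stack the letters that the {\dfawtl} has \emph{already consumed} via translucent reads but that the PDA has not yet seen in its left-to-right scan: whenever $M$ is in a state where $x$ is translucent and $y$ is not, the PDA makes a nondeterministic $\varepsilon$-move pushing $y$ (a ``promise'' to be discharged when $y$ later appears in the input) and transitions to $\delta(p,y)$; actually reading $y$ from the tape then pops the matching promise.  Your construction reverses this bookkeeping: the stack stores the letters that the PDA has read but that $M$ has \emph{not yet} processed (because they were translucent at the time), and an $\varepsilon$-pop later simulates $M$ consuming them.

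Your approach has a small bonus: because rule~(i) and rule~(ii) are governed by mutually exclusive conditions on $(q,\text{top of stack})$, the resulting PDA is deterministic (modulo the standard end-marker trick to test ``input exhausted and stack empty'' simultaneously), so you in fact show that {\dfawtl}-languages over binary alphabets are deterministic context-free.  The paper's PDA is genuinely nondeterministic, since it must guess how many translucent reads $M$ performs before actually encountering the promised letters.
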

\begin{proof}
    Since we deal with only two letters and a deterministic machine, at any point in the computation we must read at least one of the two symbols, and can only do a translucent transition over unary factors. The implication is that it is impossible for such a \dfawtl\ to read a letter that is preceded by a binary factor in the remaining input. The idea of the simulation is to store in the stack the letters that have been read `out of order', i.e., letters preceded by a prefix in the remaining input. The stack contents then can be matched with the first upcoming occurrences of the corresponding letter. The stack never holds two different letters, its content is always unary. Because of it being unary, it does not matter in what order the letters currently on the stack have been pushed. Whenever we have a prefix $a^kb$ of the remaining input and we are in a state $p$ for which $a$ is translucent and $\delta(p,b)=q$, we transition to $q$ without reading anything from the tape, and put $b$ on the stack to be matched later with the $b$ coming after the prefix $a^k$. Then we continue the computation from $q$, reading $a^k$. We never need to have two different letters on the stack, because to put $a$ there when there are $b$'s already, would mean that we made translucent transitions reading some $b$'s that have not been matched yet and currently we are in a state from which $b$ is translucent. If the upcoming letter on the tape is $b$, we can match it from the stack without changing states (reflecting that, in fact, this $b$ would have been read earlier by the \dfawtl). If the upcoming letter is an $a$, we can simply read it from the current state without changing the stack contents, i.e., knowing that we still need to match some $b$'s later.  
    %Moreover, we cannot apply a translucent transition to a letter unless all the symbols waiting to be processed (we made translucent transitions on) are identical, i.e., we can only make a translucent transition with $b$ only if there is no $a$ waiting to be processed, and vice-versa).
    
    Let $A=(Q,\{a,b\}, q_0,F, \delta)$ be our input {\dfawtl}. 
    
    We construct the nondeterministic PDA $B=(Q,\{a,b\},\{a,b\}, q_0, \{f_B\}, \delta')$ as follows. For each $x\in\{a,b\}$ and each state $p\in Q$ such that $\delta(p,y)=q$ but $\delta(p,x)$ is not defined, where $y\neq x$, in the PDA we define the transitions $\delta'(p,\bot,\varepsilon)=(q,y)$, $\delta'(p,y,\varepsilon)=(q,yy)$ and $\delta'(p,x,y)=(q,x)$. For each state $p$ and letter $x$, we add the transitions $\delta'(p,x,x)=(p,\varepsilon)$. For each state $p$ such that $\delta(p,a)$ and $\delta(p,b)$ are both defined in $A$ we add $\delta'(p,\bot,x)=(q,\varepsilon)$, $\delta'(p,y,x)=(q,y)$ if $\delta(p,x)=q$. 
    
    For each state $p$ of the {\dfawtl} $A$ such that $p$ cannot read $x$, when the PDA is in a state $p$, we simulate $A$'s possible translucent transitions in the PDA by pushing $y$ on the stack to be matched later with the first remaining $y$ in the input. Whenever the PDA has $x$ on the top of the stack, it means that earlier $A$ would have performed some transitions on reading $x$ from states to which $y$ was translucent, and during the simulation we have not matched the $x$ yet. In that case if it  reads $x$ from the tape, it does not change its state, but removes an $x$ from the stack to simulate that $A$ ``already read that symbol earlier''. If the PDA has an empty stack, we matched all the symbols read by $A$ by applying translucent transitions over prefixes, and the PDA can perform identical moves as $A$. Finally, for each $f\in F$, we add a transition $\delta'(f,\bot,\varepsilon)=(f_B,\varepsilon)$. This makes sure that the input is only accepted with an empty stack, i.e., when all translucent transitions have been accounted for.

    The result easily follows with the pushdown counting the translucent transitions.
\end{proof}

As a consequence of this and of the results shown in the previous sections we get the following.

\begin{theorem}\label{cor:dfawtl-binary}
Over binary alphabets, all problems of Section~\ref{sec:Problems} are decidable, and except for Problem~\ref{prob:exist_universal_largerthan_k}, all are decidable in polynomial time for a {\dfawtl} $A$ given as input.
\end{theorem}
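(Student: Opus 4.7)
The plan is to reduce the five problems, via Theorem~\ref{thm:dfawtl-CF}, to the corresponding questions for {\CFL}s that have already been handled in Sections~\ref{sec:Problem1}, \ref{sec:Problem3}, and~\ref{sec:Problem4}. Given a {\dfawtl} $A$ over the binary alphabet, I would first invoke the construction of Theorem~\ref{thm:dfawtl-CF} to produce, in polynomial time, a PDA $B$ with $L(B)=L(A)$. Inspecting that construction, $B$ has one more state than $A$, a ternary stack alphabet $\{a,b,\bot\}$, and a transition set that is polynomial in the transition set of $A$, so the total size of $B$ is polynomial in the size of $A$. Next, using the classical triple-construction turning a PDA into an equivalent {\CFG} (see, e.g., \cite{HopcroftU79}), together with the polynomial-time transformation of a {\CFG} into {\CNF} from \cite{bib:ChNF} recalled in Section~\ref{sec:prels}, I would obtain a {\CFG} $G$ in {\CNF} with $L(G)=L(A)$, whose size is polynomial in the size of $A$.

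With $G$ in hand, I would then simply feed it to the algorithms of the previous sections. For Problems~\ref{prob:wtlsubseq} and~\ref{prob:wtl_all_subseq}, Theorem~\ref{thm:Pb1_CFL} gives a polynomial-time procedure on input $G$ and the query string $w$, hence polynomial in the size of $A$ and $w$. For Problem~\ref{prob:universal_forall_m}, Theorem~\ref{thm:prob_universal_forall_m} yields a polynomial-time algorithm on $G$, and analogously Theorem~\ref{thm:Problem4} handles Problem~\ref{prob:exist_nonuniversal} in polynomial time. In all these cases, since the reduction from $A$ to $G$ is polynomial, the resulting overall algorithms are polynomial in the size of the input {\dfawtl}.

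Finally, for Problem~\ref{prob:exist_universal_largerthan_k}, Theorem~\ref{thm:prob_exist_universal_largerthan_k} decides the question on $G$ in time $\bigo(2^{4\sigma} n^5 \sigma^2)$. Since $\sigma=2$ is fixed and $n$ is polynomial in the size of $A$, this expression is in fact polynomial in the size of $A$ as well; however, the theorem statement only claims decidability for this problem, presumably to mirror the {\CFL} behaviour stated in the earlier sections, and that weaker claim clearly follows.

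The main obstacle is really the content of Theorem~\ref{thm:dfawtl-CF} itself, which I am allowed to assume here: verifying that, over a binary alphabet, the out-of-order reads performed by an {\dfawtl} can be tracked by a \emph{unary} stack of pending symbols, and that this translation incurs only polynomial blow-up. Everything else is a routine chain of polynomial-time reductions and direct appeals to the algorithms of Sections~\ref{sec:Problem1}, \ref{sec:Problem3}, and~\ref{sec:Problem4}; no new combinatorics on words or on grammars is required.
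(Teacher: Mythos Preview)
Your proposal is correct and follows exactly the approach the paper has in mind: the paper presents this theorem as an immediate corollary of Theorem~\ref{thm:dfawtl-CF} together with the results of the earlier sections, and your sketch spells out this chain of polynomial-time reductions (\dfawtl\ $\to$ PDA $\to$ {\CFG} in {\CNF} $\to$ apply Theorems~\ref{thm:Pb1_CFL}, \ref{thm:prob_universal_forall_m}, \ref{thm:prob_exist_universal_largerthan_k}, \ref{thm:Problem4}) faithfully. Your observation that, for $\sigma=2$, the bound of Theorem~\ref{thm:prob_exist_universal_largerthan_k} is actually polynomial is correct and in fact slightly sharper than what the theorem statement claims.
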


Thus, our interest now shifts to languages accepted by {\dfawtl}s, over alphabets $\Sigma$ of size $\sigma\geq 3$. We report here a series of initial results, which suggest this to be a worthwhile direction of investigation. We first note that we cannot apply the approach from the general Theorem~\ref{thm:intersection_dec} to solve the problems considered, since one can encode the solution set of any Post Correspondence Problem (for short, \PCP) instance as the intersection of a regular language and a language accepted by a {\dfawtl}. This is a first significant difference w.r.t.\ the status of the approached problems for the case of {\REG} and {\CFL}. 

\begin{theorem}\label{thm:PCP_Prob1-dfawtl}
The emptiness problem for languages defined as the intersection of the language accepted by a {\dfawtl} with a regular language (given as finite automaton) is undecidable.
\end{theorem}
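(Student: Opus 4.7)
The plan is to reduce from the Post Correspondence Problem (\PCP), which is undecidable. Given a \PCP\ instance consisting of pairs $(u_1, v_1), \ldots, (u_n, v_n)$ over an alphabet $\Sigma$, I will construct a \dfa\ $A$ and a \dfawtl\ $M$ over the alphabet $\Gamma = \Sigma \cup \Sigma' \cup \{1, \ldots, n\} \cup \{1', \ldots, n'\} \cup \{\#\}$, where $\Sigma'$ and $\{1', \ldots, n'\}$ are fresh primed copies, such that $L(M) \cap L(A) \neq \emptyset$ iff the \PCP\ instance admits a solution. The intended encoding of a candidate solution $(j_1, \ldots, j_k)$ is the word $w = j_1 u_{j_1} \cdots j_k u_{j_k} \# j'_1 v'_{j_1} \cdots j'_k v'_{j_k}$, where $v'_j$ denotes $v_j$ with every letter primed. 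The \dfa\ $A$ is chosen to accept the regular template $\bigl(\bigcup_{j=1}^n j\, u_j\bigr)^+ \# \bigl(\bigcup_{j=1}^n j'\, v'_j\bigr)^+$, which enforces the outer block structure (each index marker on either side is followed by the associated word from the \PCP\ pair) but crucially does not require that the index sequences used in the two halves agree or that the concatenated $u$-strings equal the concatenated $v$-strings.

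The verification of these two semantic conditions is carried out by the \dfawtl\ $M$, whose design exploits translucent transitions to relate positions on the two sides of $\#$. The state set is $\{q_0, q_{\text{final}}, q_\bot\} \cup \{q_{\text{idx}}^j \mid j \in [n]\} \cup \{q_{\text{let}}^a \mid a \in \Sigma\}$ with $F = \{q_{\text{final}}\}$. In $q_0$ the non-translucent letters are $\{1, \ldots, n\} \cup \Sigma \cup \{\#\}$: reading an index $j$ moves to $q_{\text{idx}}^j$, reading a letter $a$ moves to $q_{\text{let}}^a$, and reading $\#$ moves to $q_{\text{final}}$. In $q_{\text{idx}}^j$ only primed indices are non-translucent; the transition on $j'$ returns to $q_0$, while any other $k'$ goes to the sink $q_\bot$. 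Symmetrically, in $q_{\text{let}}^a$ only primed letters are non-translucent; $a'$ returns to $q_0$, every other $b'$ goes to $q_\bot$. All transitions out of $q_\bot$ and $q_{\text{final}}$ are undefined, so these two states only admit acceptance when the tape is empty and the state is $q_{\text{final}}$.

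The correctness argument then runs as follows. Since the \dfawtl\ always consumes the leftmost non-translucent symbol, the hub state $q_0$ reads the indices and letters of the unprimed (first) half in their textual order, and each such read is immediately paired in an auxiliary state with the leftmost unconsumed primed index or primed letter, respectively. Hence, if an accepting computation exists, then after each index read $j_m$ (the $m$-th leftmost unprimed index) the corresponding $m$-th leftmost primed index must be $j'_m$, forcing the two index sequences to coincide; analogously the $\ell$-th leftmost unprimed letter must equal the unpriming of the $\ell$-th leftmost primed letter, forcing $u_{j_1} \cdots u_{j_k} = v_{j_1} \cdots v_{j_k}$. Reaching $q_{\text{final}}$ with empty tape additionally imposes that no primed symbols remain unconsumed, so the index counts on the two sides agree. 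Conversely, when $(j_1, \ldots, j_k)$ solves the \PCP\ instance, the intended word $w$ belongs to $L(A)$ and the computation of $M$ on $w$ successively succeeds in all index and letter matches, ending in $(q_{\text{final}}, \varepsilon)$. Therefore $L(M) \cap L(A) \neq \emptyset$ iff \PCP\ has a solution, and emptiness of such intersections is undecidable.

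The main technical subtlety I would need to verify carefully is that the translucency choices in the auxiliary states really do select the intended ``next'' position on the opposite side of $\#$. For instance, in $q_{\text{idx}}^j$ the primed letters lying inside the $v'$-blocks between the previous primed index and the next one must be translucent, so that the leftmost non-translucent symbol is indeed the next unread primed index rather than some intervening primed letter; this is exactly what the definition of $M$ achieves, and the symmetric argument handles the letter-matching step. A mild subtlety worth checking is the interaction when $|u_{j_m}| \neq |v_{j_m}|$: the letter-by-letter pairing performed by $M$ is globally across the concatenations (not block-by-block), which is precisely the \PCP\ condition, so no synchronisation issue arises. Once this is verified, the proof would be completed by tracing a concrete run on the intended encoding of a valid \PCP\ solution, which I would include for clarity.
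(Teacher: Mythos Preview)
Your reduction is correct and in fact takes a cleaner route than the paper's. Both proofs reduce from \PCP, but the division of labour between the regular language and the \dfawtl\ is different. The paper uses four copies $\Sigma_1,\ldots,\Sigma_4$ of the base alphabet: the \dfawtl\ first verifies, for a given index sequence $i_1\cdots i_\ell$, that the $\Sigma_1$- and $\Sigma_2$-projections spell out $u_{i_1}\cdots u_{i_\ell}$ and $v_{i_1}\cdots v_{i_\ell}$ respectively, and in a second phase checks letter-by-letter equality of two auxiliary strings over $\Sigma_3$ and $\Sigma_4$; the regular filter then ties $\Sigma_1$ to $\Sigma_3$ and $\Sigma_2$ to $\Sigma_4$ via a perfect-shuffle template. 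In your construction the roles are swapped: the \dfa\ enforces the block structure (each index is followed by the corresponding $u_j$ or $v'_j$), while the \dfawtl\ only has to pair the $m$th unprimed index with the $m$th primed index and the $\ell$th unprimed letter with the $\ell$th primed letter across the $\#$. This buys you a simpler machine (a single hub state with two families of two-state matching gadgets, versus the paper's multi-phase automaton with $\#$-counters) and a smaller alphabet (two copies rather than four), at the cost of duplicating the index alphabet. The correctness argument you sketch is sound; the point you flag as needing care---that in $q_{\text{idx}}^j$ the primed letters are translucent so the next read really is the next unread primed index, and symmetrically for $q_{\text{let}}^a$---is exactly what makes the pairing global rather than block-wise, and your observation that $\#$ can only be consumed in $q_0$ once all unprimed symbols are gone (forcing the primed side to be fully matched for acceptance) closes the argument.
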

\begin{proof}
    Let the \PCP\ instance be given as the string vectors $(u_1,\dots, u_k)$ and $(v_1,\dots,v_k)$ over an alphabet $\Sigma$. We construct the {\dfawtl} $M$ as follows. The input alphabet of $M$ is $\{1,\dots,k\}\cup \bigcup_{i=1}^4\Sigma_i$, where $\Sigma_i=\{a_i \mid a\in \Sigma\}$. For each $i\in [1,4]$, let $pr_i$ be the morphism that turns strings over $\Sigma$ into strings over $\Sigma_i$ by simply attaching a subscript $i$ to each letter ($pr_i(a)=a_i$). For each $j\in [k]$, the automaton has states $p_{j}$ such that $\delta(q_0,j)=p_j$, where $q_0$ is the initial state. For each $j\in [k]$, we add paths in $M$ (disjoint from paths added from $p_\ell$, for $\ell\neq j$) such that $\delta(p_j,pr_1(u_j))=r_j$, and all strings $x\in \Sigma_1^*$  such that $x\neq pr_1(u_j)$ lead to a sink state from $p_j$. From $r_j$ we add paths (disjoint from paths added from $r_\ell$, for $\ell\neq j$) such that $\delta(r_j, pr_2(v_j))=s_j$ and all strings $x\in \Sigma_2^*$ such that $x\neq pr_2(v_j)$ lead to a sink state from $r_j$. From $s_j$ we have a single transition $\delta(s_j,\#)=q_0$. 
    The part of $M$ constructed so far reads completely inputs of the form
$$i_1\cdots i_\ell \shuffle pr_1(u_{i_1}\cdots u_{i_\ell}) \shuffle pr_2(v_{i_1}\cdots v_{i_\ell})\shuffle \#^\ell,$$ 
    returning to $q_0$, where $i_1, \ldots, i_k\in[k]$, i.e., it makes sure that the strings chosen from both string vectors correspond to the same indices, in the same order. 

    \begin{figure}
        \centering
        \includegraphics[width=0.7\linewidth]{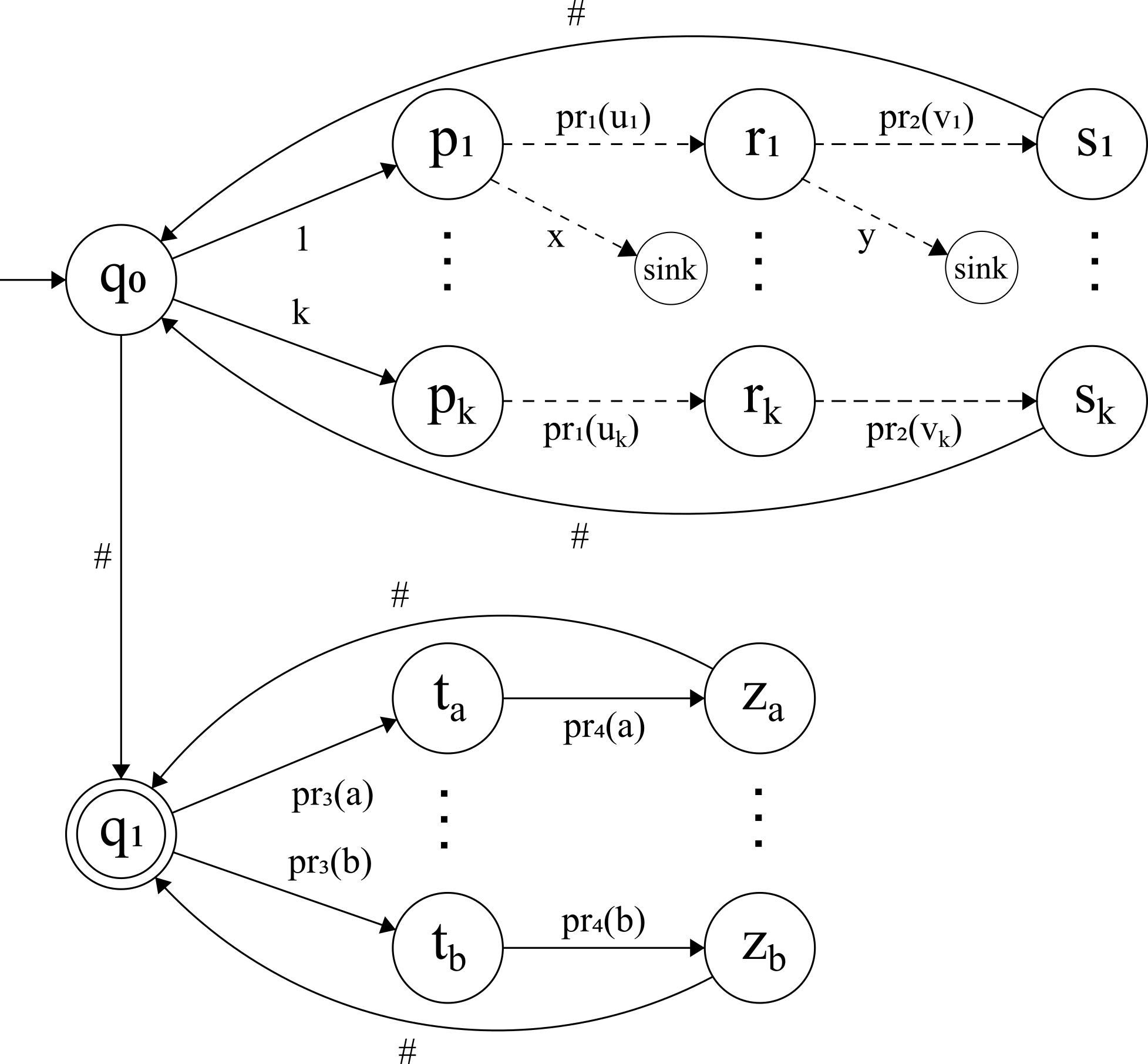}
        \caption{The \dfawtl\ accepting $i_1\cdots i_\ell \shuffle pr_1(u_{i_1}\cdots u_{i_\ell}) \shuffle pr_2(v_{i_1}\cdots v_{i_\ell}) \shuffle pr_3(w) \shuffle pr_4(w)\shuffle \#^{m}$, where $i_1,\dots,i_\ell \in [k]$, $w,u_j,v_j\in \Sigma^*$, $a,b,\dots \in \Sigma$, and $x\in \Sigma_1^*$ represents all strings over the alphabet with subscript $1$ such that $x\neq pr_1(u_1)$, and $y\in \Sigma_2^*$ represents all strings over alphabet with subscript $2$ such that $y\neq pr_2(v_1)$.}
    \label{fig:PCPdfawtl}
    \end{figure}

    The next part is for checking whether the concatenation of the strings is actually equal, i.e., $u_{i_1}\cdots u_{i_\ell}=v_{i_1}\cdots v_{i_\ell}$. We first add a transition $\delta(q_0,\#)=q_1$. From $q_1$ we also add transitions $\delta(q_1,pr_3(a))=t_a$, for each $a\in \Sigma$, and from $t_a$ we add $\delta(t_a,pr_4(a))=z_a$. From $z_a$ we return to $q_1$ by reading $\#$. We set $q_1$ to be the only final state. Now the machine (see Fig.~\ref{fig:PCPdfawtl}) accepts shuffles 
    \begin{equation}\label{eq:shuffle}
    i_1\cdots i_\ell \shuffle pr_1(u_{i_1}\cdots u_{i_\ell}) \shuffle pr_2(v_{i_1}\cdots v_{i_\ell}) \shuffle pr_3(w) \shuffle pr_4(w)\shuffle \#^{m} 
    \end{equation}
    where $i_1,\dots,i_\ell\in [k]$,  $w\in \Sigma^*$ and $m=\ell+|w|$, but it has no way of checking whether $w=u_{i_1}\cdots u_{i_\ell}=v_{i_1}\cdots v_{i_\ell}$. To check that equality we intersect $L(M)$ with the regular language 
    % \[
    % (\bigcup_{i=1}^k i)^* (\bigcup_{a\in \Sigma} a_1a_3)^* (\bigcup_{a\in \Sigma} a_2a_4)^* \#^*.
    % \]
    \[
    \left(\bigcup_{i=1}^k i\right)^* \left(\bigcup_{a\in \Sigma} pr_1(a) pr_3(a)\right)^* \left(\bigcup_{a\in \Sigma} pr_2(a) pr_4(a)\right)^* \#^*.
    \]
    that filters out most strings of the language defined by the shuffle in \cref{eq:shuffle}, leaving 
    \[
    i_1\cdots i_\ell \cdot (pr_1(u_{i_1}\cdots u_{i_\ell}) \shuffle_p pr_3(u_{i_1}\cdots u_{i_\ell})) \cdot  (pr_2(v_{i_1}\cdots v_{i_\ell}) \shuffle_p pr_4(v_{i_1} \cdots v_{i_\ell})) \cdot \#^{m}
    \]
    where $\shuffle_p$ is the perfect shuffle of two strings, i.e., $a_1\cdots a_n \shuffle_p b_1\cdots b_n=a_1b_1a_2b_2\cdots a_nb_n$.
    The resulting language consists exactly of the strings encoding solutions of the \PCP\ instance, which means that the language is empty if and only if the \PCP\ instance has a solution, a well-known undecidable problem.
\end{proof}

The decidability of Problem~\ref{prob:wtlsubseq} for larger alphabets in the case of {\dfawtl} is settled by an exponential time brute force algorithm, after establishing that if the input language contains a supersequence of $w$, then it also contains one whose length is bounded by a polynomial in the size of the input. By a reduction from the well-known NP-complete Hamiltonian Cycle Problem~\cite{GareyJ}, we can also show that Problem~\ref{prob:wtlsubseq} for {\dfawtl} is NP-hard over unbounded alphabets (containment in NP follows from the same length upper bound mentioned earlier).
This is again a significant deviation w.r.t.\ the status of this problem for the case when the input language is given by a finite automaton or by a {\CFG}.

\begin{theorem}\label{Thm:Prob1-dfawtl-NPhardness}
    Problem~\ref{prob:wtlsubseq} is NP-complete over unbounded alphabets.
\end{theorem}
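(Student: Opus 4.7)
The proof has two components: NP-membership and NP-hardness.

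For NP-membership, the argument relies directly on the polynomial length bound mentioned immediately before the theorem statement: if $L(M)$ contains any supersequence of $w$, then it contains one whose length is polynomial in $|M|$ and $|w|$. Given this bound, the standard nondeterministic algorithm guesses such a short candidate $v$, checks $w \leq v$ by a linear scan, and simulates the deterministic {\dfawtl} $M$ on $v$ in polynomial time using the operational semantics of $\circlearrowright$; the instance is a YES-instance iff the simulation consumes all of $v$ and ends in a final state.

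For NP-hardness the plan is to reduce from the Directed Hamiltonian Cycle Problem. Given a directed graph $G=(V,E)$ with $V=\{1,\ldots,n\}$, construct an alphabet containing one vertex-letter $a_i$ per $i\in V$, together with a small number of auxiliary marker letters (for initialisation and for separating consecutive vertex-visits). The {\dfawtl} $M$ will have states that track the current vertex of a walk in $G$; from the state for vertex $i$ only the letters $a_j$ with $(i,j)\in E$ trigger a transition (into the state for $j$), while every other vertex-letter is translucent. The translucent-letter mechanism therefore forces accepted inputs to encode a closed walk in $G$. Choose $w$ to be a string that lists every vertex-letter (e.g. $a_1 a_2 \cdots a_n$, separated by markers), so that any $v\in L(M)$ with $w\leq v$ must carry every vertex out along the walk; coupled with markers that make each vertex-letter effectively single-use, the accepted walk has length exactly $n$ and is closed, hence is a Hamiltonian cycle of $G$. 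Conversely, given a Hamiltonian cycle, laying out its vertices in order (with the required markers) produces an input accepted by $M$ and containing $w$ as a subsequence.

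The main obstacle is making the reduction faithful: translucency naturally enforces edge-respecting walks, but by itself it does not prevent revisiting a vertex. The construction must therefore combine translucency with a careful choice of markers and of $w$, so that the total multiplicity of vertex-occurrences on the tape is capped at $n$ and each vertex contributes at least one occurrence. Since the alphabet grows linearly in $n$, the reduction requires unbounded alphabets, in agreement with the theorem statement; together with the polynomial length bound used for NP-membership, this yields NP-completeness.
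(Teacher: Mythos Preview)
Your NP-membership sketch is fine and matches the paper (which actually proves the polynomial length bound inside this theorem, via a pumping argument on the run of $M$ between the steps where the letters of $w$ are consumed). The gap is in the NP-hardness direction. You correctly name the obstacle---translucency yields edge-respecting walks but nothing bounds their length---yet you do not resolve it; ``markers that make each vertex-letter effectively single-use'' is a hope, not a construction. Without the bound the reduction is unsound: take the star with centre $1$ and leaves $2,\ldots,n$ (no Hamiltonian cycle for $n\geq 3$); in your automaton the input $a_1a_2a_1a_3a_1\cdots a_1a_na_1$ is accepted (it encodes the closed walk $1\to2\to1\to3\to1\cdots\to n\to1$) and contains $a_1a_2\cdots a_n$ as a subsequence. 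A {\dfawtl} has no way to enforce a count of exactly $n$ steps other than threading that count through its state set, and your sketch never does this.

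The paper supplies precisely this missing idea: it \emph{unrolls} the automaton into $n$ consecutive gadgets $A_1,\ldots,A_n$, so that every accepting computation performs exactly $n$ vertex-steps. Over the alphabet $\{v_1,\ldots,v_n,a,b\}$, a state encoding ``current vertex $j$ in gadget $i$'' first reads a length-$\lceil\log_2 n\rceil$ binary code $u_k\in\{a,b\}^*$ naming the next vertex (going to a sink if $v_jv_k\notin E$), and only then performs the translucent step that reads the vertex-letter $v_k$, moving to gadget $A_{i+1}$. Starting at vertex $1$ in $A_1$ and accepting only at vertex $1$ after $A_n$, every accepted word is a shuffle of exactly $n$ vertex-letters with the binary selector string. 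Now the subsequence test does the counting for free: $v_1v_2\cdots v_n$ is a subsequence of such a shuffle iff the $n$ vertex-letters are pairwise distinct, iff the encoded closed length-$n$ walk from $v_1$ is a Hamiltonian cycle. Any way to make your marker idea work (e.g.\ $n$ distinct step-markers read in sequence) amounts to this $n$-fold unrolling, and that is the concrete construction your proposal lacks.
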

\begin{proof}
    The Hamiltonian Cycle Problem is to decide, given a simple graph $G$, whether there is a cycle in $G$ that visits each vertex of $G$ exactly once. This is a well-known NP-complete problem~\cite{GareyJ}.

    To show NP-hardness, We reduce the HCP to Problem~\ref{prob:wtlsubseq}. Let $G=(\{v_1,\dots,v_n\},E)$ be an undirected simple graph. We define a {\dfawtl} $A=(Q,\{v_1,\dots,v_n,a,b\},q_0, F, \delta)$ that accepts some string that is a supersequence of $v_1\cdots v_n$ if and only if $G$ is a Hamiltonian graph. First we construct a gadget $A_1$ that tracks the first step of the Hamiltonian path and then create $n$ copies of it such that the automaton moves from gadget $A_i$ to $A_{i+1}$ if and only if the corresponding current path in $G$ can be extended by another edge. Thus the automaton accepts strings that correspond to paths of length $n$ in $G$, so $v_1\cdots v_n$ is a subsequence of an accepted string if and only if there is a closed path of length $n$ starting from $v_1$ reaching all vertices.

    Within each gadget $A_i$ with $i\in [1,n]$ we have states $v_{i,j}$ and $w_{i,j}$, for each $j\in [1,n]$. We also have additional states in the gadget that allow, for each $j,k\in [1,n]$, to transition from $v_{i,j}$ to $w_{i,k}$ by reading some string $u_{k}$ if $v_jv_k$ is an edge of $G$. A simple choice for $u_1,\dots,u_n$ is the first $n$ strings when we lexicographically order the strings of length $\lceil \log_2 n\rceil$ over alphabet $\{a,b\}$. All binary paths with label $u_\ell$ of length $\lceil \log_2 n\rceil$, such that $v_jv_\ell\notin E$, from $v_{i,j}$ go to the sink state. For each $w_{i,k}$ we have a single transition, which is labeled $v_k$ and goes to state $v_{i+1,k}$, i.e., to the next gadget.
    For gadget $A_n$ only $w_{n,1}$ has a transition going out of the gadget, to state $v_{n+1,1}$.

    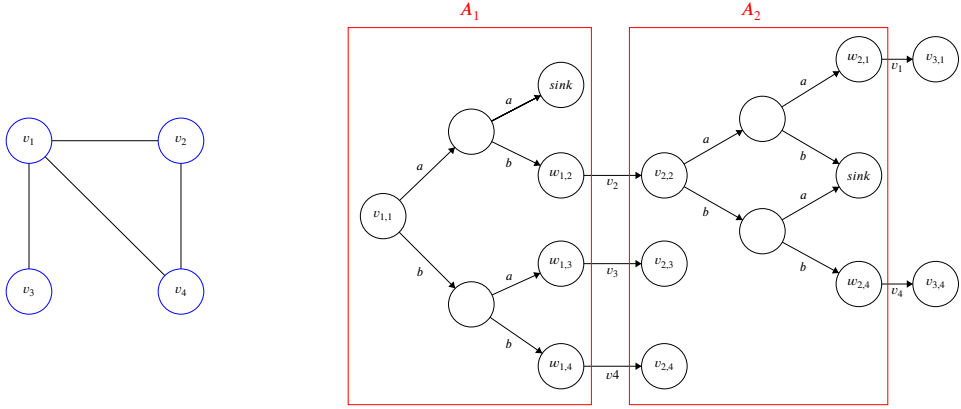
\begin{figure}
        \centering
        %\scalebox{.6}{\input{gadget}{}}
        \scalebox{.5}{\begin{tikzpicture}[scale=0.2]
\tikzstyle{every node}+=[inner sep=0pt]

\draw [black] (-40,-20) -- (-20,-20);
\draw [black] (-40,-20) -- (-40,-40);
\draw [black] (-40,-20) -- (-20,-40);
\draw [black] (-20,-20) -- (-20,-40);

\filldraw [fill=white, draw=blue] (-40,-20) circle (3);
\draw (-40,-20) node {$v_1$};
\filldraw [fill=white, draw=blue] (-20,-20) circle (3);
\draw (-20,-20) node {$v_2$};
\filldraw [fill=white, draw=blue] (-40,-40) circle (3);
\draw (-40,-40) node {$v_3$};
\filldraw [fill=white, draw=blue] (-20,-40) circle (3);
\draw (-20,-40) node {$v_4$};

\draw [black] (20.86,-17.4) -- (27.34,-14);
\draw [black] (20.86,-17.4) -- (27.34,-14);
\draw [black] (20.86,-17.4) -- (27.34,-14);
\draw [black] (20.86,-17.4) -- (27.34,-14);

\draw [black] (30,-12.6) circle (3);
\draw (30,-12.6) node {$sink$};
\draw [black] (30,-24.6) circle (3);
\draw (30,-24.6) node {$w_{1,2}$};
\draw [black] (18.2,-18.8) circle (3);
\draw [black] (30,-49.9) circle (3);
\draw (30,-49.9) node {$w_{1,4}$};
\draw [black] (6.6,-30) circle (3);
\draw (6.6,-30) node {$v_{1,1}$};
\draw [black] (30,-36.3) circle (3);
\draw (30,-36.3) node {$w_{1,3}$};
\draw [black] (18.2,-41.7) circle (3);
\draw [black] (43.6,-24.6) circle (3);
\draw (43.6,-24.6) node {$v_{2,2}$};
\draw [black] (43.6,-36.3) circle (3);
\draw (43.6,-36.3) node {$v_{2,3}$};
\draw [black] (43.6,-49.9) circle (3);
\draw (43.6,-49.9) node {$v_{2,4}$};
\draw [black] (69.2,-9.2) circle (3);
\draw (69.2,-9.2) node {$w_{2,1}$};
\draw [black] (69.2,-39) circle (3);
\draw (69.2,-39) node {$w_{2,4}$};
\draw [black] (56.5,-17.1) circle (3);
\draw [black] (56.5,-32) circle (3);
\draw [black] (69.2,-24.6) circle (3);
\draw (69.2,-24.6) node {$sink$};
\draw [black] (79.3,-9.2) circle (3);
\draw (79.3,-9.2) node {$v_{3,1}$};
\draw [black] (79.3,-39) circle (3);
\draw (79.3,-39) node {$v_{3,4}$};
\draw [black] (20.86,-17.4) -- (27.34,-14);
\fill [black] (27.34,-14) -- (26.4,-13.92) -- (26.87,-14.81);
\draw (23.16,-15.2) node [above] {$a$};
\draw [black] (8.76,-27.92) -- (16.04,-20.88);
\fill [black] (16.04,-20.88) -- (15.12,-21.08) -- (15.81,-21.8);
\draw (11.44,-23.92) node [above] {$a$};
\draw [black] (20.89,-20.12) -- (27.31,-23.28);
\fill [black] (27.31,-23.28) -- (26.81,-22.48) -- (26.37,-23.37);
\draw (23.11,-22.21) node [below] {$b$};
\draw [black] (8.71,-32.13) -- (16.09,-39.57);
\fill [black] (16.09,-39.57) -- (15.88,-38.65) -- (15.17,-39.35);
\draw (11.88,-37.33) node [left] {$b$};
\draw [black] (20.93,-40.45) -- (27.27,-37.55);
\fill [black] (27.27,-37.55) -- (26.34,-37.43) -- (26.75,-38.34);
\draw (23.17,-38.49) node [above] {$a$};
\draw [black] (20.66,-43.41) -- (27.54,-48.19);
\fill [black] (27.54,-48.19) -- (27.16,-47.32) -- (26.59,-48.14);
\draw (23.1,-46.3) node [below] {$b$};
\draw [black] (33,-24.6) -- (40.6,-24.6);
\fill [black] (40.6,-24.6) -- (39.8,-24.1) -- (39.8,-25.1);
\draw (36.8,-25.1) node [below] {$v_2$};
\draw [black] (33,-36.3) -- (40.6,-36.3);
\fill [black] (40.6,-36.3) -- (39.8,-35.8) -- (39.8,-36.8);
\draw (36.8,-36.8) node [below] {$v_3$};
\draw [black] (33,-49.9) -- (40.6,-49.9);
\fill [black] (40.6,-49.9) -- (39.8,-49.4) -- (39.8,-50.4);
\draw (36.8,-50.4) node [below] {$v4$};
\draw [black] (46.19,-23.09) -- (53.91,-18.61);
\fill [black] (53.91,-18.61) -- (52.96,-18.58) -- (53.47,-19.44);
\draw (49.11,-20.35) node [above] {$a$};
\draw [black] (46.2,-26.09) -- (53.9,-30.51);
\fill [black] (53.9,-30.51) -- (53.45,-29.68) -- (52.96,-30.54);
\draw (49.05,-28.8) node [below] {$b$};
\draw [black] (59.05,-15.52) -- (66.65,-10.78);
\fill [black] (66.65,-10.78) -- (65.71,-10.78) -- (66.24,-11.63);
\draw (61.91,-12.65) node [above] {$a$};
\draw [black] (59.08,-18.63) -- (66.62,-23.07);
\fill [black] (66.62,-23.07) -- (66.18,-22.24) -- (65.67,-23.1);
\draw (61.85,-21.35) node [below] {$b$};
\draw [black] (59.09,-30.49) -- (66.61,-26.11);
\fill [black] (66.61,-26.11) -- (65.66,-26.08) -- (66.17,-26.95);
\draw (61.91,-27.8) node [above] {$a$};
\draw [black] (59.13,-33.45) -- (66.57,-37.55);
\fill [black] (66.57,-37.55) -- (66.11,-36.73) -- (65.63,-37.6);
\draw (61.85,-36) node [below] {$b$};
\draw [black] (72.2,-9.2) -- (76.3,-9.2);
\fill [black] (76.3,-9.2) -- (75.5,-8.7) -- (75.5,-9.7);
\draw (74.25,-9.7) node [below] {$v_1$};
\draw [black] (72.2,-39) -- (76.3,-39);
\fill [black] (76.3,-39) -- (75.5,-38.5) -- (75.5,-39.5);
\draw (74.25,-39.5) node [below] {$v_4$};

\draw (18,-4) node [above] {\color{red}\Large $A_1$};
\draw (55,-4) node [above] {\color{red}\Large $A_2$};
\filldraw [fill=none, draw=red] (2,-55) rectangle (34,-5);

\filldraw [fill=none, draw=red] (39,-55) rectangle (73,-5);
\end{tikzpicture}}
        \caption{A simple graph (without a Hamiltonian cycle) and part of the corresponding \dfawtl\ as constructed in the proof of Theorem~\ref{Thm:Prob1-dfawtl-NPhardness}. Gadgets $A_1$ and $A_2$ are marked with rectangles.}
        \label{fig:gadget}
    \end{figure}
    
    The initial state is $v_{1,1}$. There is a single final state, i.e., $v_{n+1,1}$. Fig.~\ref{fig:gadget} illustrates the construction for a simple input graph.
    The automaton accepts exactly the set of strings \[v_{i_1}\shuffle \cdots \shuffle v_{i_n}\shuffle u_{i_1}\cdots u_{i_n},\]
    where $v_{i_1}\cdots v_{i_n}$ is a walk on $G$. Therefore, $G$ is Hamiltonian if and only if $v_1\cdots v_n$ is a subsequence of some string in the language.

    The number of states and transitions in each gadget $A_i$ is $\bigo(n^2)$, since we have $n$ states $v_{i,j}$, further $n$ states $w_{i,j}$ and, for each $v_{i,j}$, an additional $\bigo(n)$ intermediate states forming the complete binary tree that starts from $v_{i,j}$ and leads to each $w_{i,k}$, in turn. As the number of gadgets is $n$, we obtain a \dfawtl\ with $\bigo(n^3)$ states, and since each state has outgoing transitions on at most $2$ letters, the total number of transitions is also $\bigo(n^3)$. The complete binary tree of depth $\lceil \log n \rceil$, having rooted paths labeled by $u_1,\dots,u_n$ can be constructed in $\bigo(n)$ time and, for each binary tree rooted in some state $v_{i,j}$, deciding which of the leaves should be $w_{i,k}$ and which should be sink states requires checking $v_jv_k\in E$. The construction is straightforward, given $G$, and it can be done in time $\poly(n,|E|)$.

Next we argue that Problem~\ref{prob:wtlsubseq} is in NP. Let us assume that the instance is a YES instance, that is, there exists some $u\in L(M)$ with $w\leq u$ and that $u$ is the shortest supersequence of $w$ accepted by $M$. Let $m'$ be the length of $w$ and $n'$ be the number of states of $M$. In what follows, when we talk about `position' we refer to a position w.r.t.\ the original input $u$, not the remaining one in any given computation step. We can describe the computation of $M$ on $u$ as the sequence of configurations 
$$(p_0, u_0) \circlearrowright^{*} (p_{i_1}, u_{i_1})\circlearrowright^{*} \cdots \circlearrowright^{*} (p_{i_{m'}}, u_{i_{m'}}) \circlearrowright^{*} (p_{i_{m'+1}},\varepsilon)$$ 
where $p_0=q_0$, $u_0=u$, $p_{i_{m'+1}}\in F$ and the position of $u$ corresponding to the $j^\text{th}$ letter of $w$ is read in step $i_j$, i.e., the $j^\text{th}$ letter of $w$ is matched immediately from configuration $(p_{i_j},u_{i_j})$. We argue that $i_{\ell+1}-i_{\ell} \leq n$, for all $\ell\in [m']$, to establish an upper bound on the length of $u$. It is straightforward that, for any $j$, the position read in the $j^\text{th}$ transition only affects the current state in transition $j$ and is not visible to the rest of the computation: since it has not been read in any step $i<j$, it was not the leftmost unread position holding a letter defined for $p_i$, and after step $j$ it does not hold any letter anymore. This means that from the input we can remove any sequence of letters read in consecutive transitions between two configurations with the same state, without affecting the acceptance of the input. Now suppose that there is some $\ell\in [m]$ such that $i_{\ell+1}-i_{\ell}>n'$. By the pigeonhole principle, the computation between those two steps goes through some state at least twice, that is, there are $1\leq j<k\leq n$ such that $p_{i_{\ell}+j}=p_{i_{\ell}+k}$. Therefore, by removing from $u$ the letters from the positions read in steps between $i_{\ell}+j$ and $i_{\ell}+k$ still results in a string $u'\in L(M)$, but $|u'|<|u|$, and since the transitions reading from positions corresponding to the letters of $w$ do not change, we still have $w\leq u'$. This, however, contradicts the assumption that $u$ is the shortest supersequence of $w$ accepted by $M$.

From here we get that $i_{m'+1}\leq (m'+1)n'$, that is, the shortest supersequence of $w$ accepted by $M$ cannot be longer than $(m'+1)n'$. This gives us a witness for $w$ being a subsequence, with size polynomial in the size of the input. Furthermore, given the witness, we can easily check whether $u\in L(M)$ in $\bigo(m'^2n'^2)$ time by running $M$ on $u$, and we can check in $\bigo(n'm')$ time whether $w\leq v$.
\end{proof}

Since our initial results deviate from the corresponding results obtained for \CFL, without suggesting that the considered problems become undecidable, completing this investigation for all other problems seems worthwhile to us. While we have excluded the approach from the general Theorem~\ref{thm:intersection_dec}, we cannot yet say anything about the approach in Theorem \ref{thm:DC_dec}. It remains an interesting open problem (also of independent interest w.r.t.\ to our research) to obtain an algorithm for computing the downward closure of a {\dfawtl}-language, or show that such an algorithm does not exist. 
% \looseness=-1

While studying the problems discussed in this paper for {\dfawtl}s seems an interesting way to understand their possible further intricacies, which cause the huge gap between their status for {\CFL} and {\CSL}, respectively, another worthwhile research direction is to consider them in the context of other well-motivated classes of languages, for which all these problems are decidable, and try to obtain optimised algorithms in those cases.

\newpage
%%%%%%%%%%%%%%%%%%%%%%%
%%%%%%%%%%%%%%%%%%%%%%%
\bibliography{bib_new}

\end{document}